\newcommand{\ud}{\;\mathrm{d}}
\newcommand{\uud}{\mathrm{d}}
	\providecommand{\vr}{{\vartheta_r}}
	\providecommand{\loc}{\mathnormal{loc}}
\providecommand{\hPhi}{\hat{\phi}}
\providecommand{\s}{\mathcal{S}}
\providecommand{\ol}{\overline}
\providecommand{\eps}{\varepsilon}
\providecommand{\La}{\mathcal{L}}
\providecommand{\F}{\mathcal{F}}
\providecommand{\Haus}{\mathcal{H}}
\providecommand{\PV}{\operatorname{PV}}
\providecommand{\G}{\mathcal{G}}
\providecommand{\V}{\mathcal{V}}
\DeclareMathAlphabet{\pazocal}{OMS}{zplm}{m}{n}
\newtheorem{theorem}{Theorem}[section]
\newtheorem{definition}[theorem]{Definition}
\newtheorem{lemma}[theorem]{Lemma}
\newtheorem{notation}[theorem]{Notation}
\newtheorem{assumption}[theorem]{Assumption}
\newtheorem{remark}[theorem]{Remark}
\numberwithin{equation}{section}
\numberwithin{theorem}{section}
\newcommand{\qed}{\hfill$\Box$}
\newenvironment{proof}{\begin{trivlist}\item[]{\em Proof:}\/}{\qed\end{trivlist}}
\newenvironment{proofof}[1][Proof]{\noindent \textit{#1.} }{\ \qed}
\DeclareMathOperator{\sign}{\mathrm{sign}}
\newcommand{\Reals}{{\mathbb R}}
\newcommand{\Complex}{{\mathbb C\hspace{0.05 ex}}}
\newcommand{\Naturals}{{\mathbb N}}
\newcommand{\cf}{{\mathbbm 1}}
\newcommand{\Sch}{\mathcal{S}}
\title{The two-particle correlation function for systems with long-range interactions}
\author{ Juan J. L. Vel\'azquez
\thanks{\emailjuan}, Raphael Winter \thanks{\emailalessia}  \\[1em]
$\,^\ddag$\UBaddress}
\date{\today}
\newcommand{\email}[1]{E-mail: \tt #1}
\newcommand{\emailjuan}{\email{velazquez@iam.uni-bonn.de}}
\newcommand{\emailalessia}{\email{raphaelwinter@iam.uni-bonn.de}}
\newcommand{\UBaddress}{\em University of Bonn, Institute for Applied Mathematics, Bonn,  Germany}
\date{\today}
\begin{document} 
 
\maketitle
\begin{abstract}
	In this paper, we study the truncated two-particle correlation function in particle systems with
	long range interactions. For Coulombian and soft potentials, we define and give well-posedness results for the
	equilibrium correlations. In the Coulombian case, we prove the onset of the Debye screening length in the equilibrium correlations, for suitable velocity distributions. Additionally, we give precise estimates on the effective range of interaction
	between particles. In the case of soft potential interaction the equilibrium correlations and their fluxes in the space of velocities are shown to
	be linearly stable. 
\end{abstract}

\tableofcontents
\newpage

\section{Introduction}

\subsection{Kinetic limits of particle systems with long-range interactions}

A classical problem studied in statistical physics is the dynamics of systems of many identical particles
which interact by means of long range potentials. In particular, this problem has received a big deal of 
attention in the community working on plasma physics in the case in which particles interact via the Coulomb
potential.

Early contributions to this topic were made by Bogolyubov \cite{bogoliubov_problems_1962}, and  have been extended by 
the works of Balescu \cite{balescu_equilibrium_1975,balescu_statistical_1963}, as well as Guernsey \cite{guernsey_kinetic_1962}
and Lenard \cite{lenard_bogoliubovs_1960}. These authors obtained a kinetic equation which describes the behavior of the velocity distribution of
a spatially homogeneous many particle system with long range interaction (in particular Coulomb forces). Bogolyubov derived the following system of equations for the density $f_1(\tau,v_1)=f_1(\tau,x_1,v_1)=f_1(\tau,\xi_1)$, rescaled truncated correlation function $\tilde{g}_2(\tau,x_1,v_1,x_2,v_2)=\tilde{g}_2(\tau,\xi_1,\xi_2)$, and a small parameter $\sigma>0$ tending to zero:
\begin{align}
\partial_\tau f_1 &= \sigma \nabla_v \cdot \left(\int \nabla \phi(x_1-x_3) \tilde{g}_2(\xi_1,\xi_3) \ud{\xi_3} \right)\label{BBGKYint3f} \\
\partial _\tau \tilde{g}_2+&\sum_{i=1}^2 v_i \nabla_{x_i} \tilde{g}_2- \sum_{i=1}^{2}\int \nabla \phi (x_i-x_3)\nabla_{v_i} f_1(\tau,\xi_i)\tilde{g}_2(\xi _{\zeta \left( i\right) },\xi_{3}) \ud{\xi_3} \label{BBGKYint3g} \\
&=  (\nabla _{v_1}-\nabla_{v_2})\left(f_1(\tau,\xi_1)f_1(\tau,\xi_2)\right)\nabla \phi (x_1-x_2)\notag.
\end{align} 
Here $\phi$ is the interaction potential, and $\zeta(1)=2,\zeta(2)=1$ exchanges the variables. 
Actually, \cite{bogoliubov_problems_1962} derives analogous approximations for higher
order correlations, but those are of lower order in $\sigma \rightarrow 0$.
In this paper, we will consider two classes of potentials, namely
the Coulomb potential $\phi(x)=\frac{c}{|x|}$ for some $c>0$, and so-called soft potentials, that are radially symmetric functions
in the Schwartz class. 

In order to find the limit equation for $f_1$ as $\sigma\rightarrow 0$, Bogolyubov argues that all terms in \eqref{BBGKYint3g} are of the same order of magnitude, so the evolution of $\tilde{g}_2$ can be observed in times of order one.
Since $\tilde{g}_2$ is of order one, it can be expected that $f_1$  evolves on the longer timescale $t=\sigma \tau $. We assume that for $f_1$ given, $\tilde{g}_2$ has a globally stable equilibrium. We will call the steady state equation
\begin{equation}
\begin{aligned} \label{Bogolyubov}
\sum_{i=1}^2 v_i \nabla_{x_i} g_B- \sum_{i=1}^{2} \nabla_{v_{i}} f_1  \int \nabla \phi (x_i-x_3)g_B(\xi _{\zeta \left( i\right) },\xi_{3})\ud{\xi_3} 	 
=  (\nabla_{v_1} - \nabla_{v_2})\left(f_1 f_1\right)\nabla \phi (x_1-x_2).	
\end{aligned}
\end{equation}
the Bogolyubov equation and the solution $g_B$ the (truncated) Bogolyubov correlation.
In the paper \cite{bogoliubov_problems_1962}, it is argued that the equation \eqref{Bogolyubov} should be solved subject to the boundary condition:
\begin{align} \label{eq:bogboundary}
g_B(x-\tau v_1,v_1,x_2-\tau v_2,v_2) \rightarrow 0 \quad \text{ as $\tau\rightarrow \infty$ }.
\end{align} 
This condition can be interpreted as particles being uncorrelated before they come close enough to interact.
Then we can immediately predict the limiting kinetic equation for $f_1$ on the timescale $t$ by plugging $\tilde{g}_2=g_B$ into \eqref{BBGKYint3f}. This yields the Balescu-Lenard equation:
\begin{align} \label{eq:kinetic}
\partial_t f(t,v) 	&= \nabla_v \cdot \left( \int_{\Reals^3} a(v-v',v) (\nabla_v-\nabla_{v'})(f(t,v)f(t,v')) \ud{v'}\right)\\
a_{i,j} (w,v)		&=  \int_{\Reals^3} k_ik_j \delta(k\cdot w) \frac{|\hat{\phi}(k)|^2}{|\eps(k,k \cdot v)|^2} \ud{k}.		
\end{align}
Here, $\eps$ is the so-called dielectric function, which we introduce in Definition \ref{def:dielectric}.
We remark that the integral defining $a$ is logarithmically divergent for large values of $k$ in the case of Coulomb interaction. We will discuss
this in detail in Subsection \ref{sec:coulkin}.
The equation \eqref{eq:kinetic} shares many properties with classical kinetic equations like the Boltzmann equation and the Landau equation. In particular,
the steady states of \eqref{eq:kinetic} are the Maxwellian distributions:
\begin{align}\label{def:M}
	M(v):=  \left(\frac{m}{2\pi k_B T}\right)^\frac32 e^{- \frac{m|v|^2}{2k_BT}}.
\end{align}
Moreover, the entropy $H[f(t,\cdot)]= - \int f(t,v) \log(f(t,v)) \ud{v}$ of a solution $f$ of \eqref{eq:kinetic} is (formally) increasing in time, as remarked in \cite{lenard_bogoliubovs_1960}. 

The Balescu-Lenard equation \eqref{eq:kinetic}, was found independently by Guernsey \cite{guernsey_kinetic_1962} and Lenard (cf. \cite{lenard_bogoliubovs_1960}), following
the approach by Bogolyubov, and along a different line by Balescu (cf. \cite{balescu_statistical_1963}). There are also stochastic derivations of the
Balescu-Lenard equation using different arguments, which are discussed in Subsection \ref{subsec:Screen}.

The first characterization of the solution to the steady state equation \eqref{Bogolyubov} has been obtained by Lenard in \cite{lenard_bogoliubovs_1960}, yielding a formal derivation of the Balescu-Lenard equation \eqref{eq:kinetic}. The Lenard approach, which is based on a Wiener-Hopf argument, yields an explicit formula for the right-hand side of \eqref{BBGKYint3f}, when $\tilde{g}_2$ is a steady state of \eqref{BBGKYint3g} with $f_1$ fixed.  A Fourier representation of the full steady state $g_B$ was found later by Oberman and Williams  \cite{oberman_theory_1983} using a similar approach. There are few rigorous results on the Balescu-Lenard equation \eqref{eq:kinetic}. The linearized equation has been studied in \cite{strain_linearized_2007}. 

The results presented in this paper are the following. First we study the well-posedness of \eqref{Bogolyubov}. Secondly, we study the stability properties
of the steady state $g_B$ under the evolution given by \eqref{BBGKYint3g} for fixed $f_1$. Thirdly, we study the decay properties of the steady
states $g_B$. The steady state $g_B$ encodes the information on the range of interaction of particles within the system. To understand this, consider two particles at phase space positions  $\xi_j=(x_j,v_j)$, $j=1,2$. Let $b(\xi_1,\xi_2)$ be the impact	parameter, and $d(\xi_1,\xi_2)$ be the signed distance of the first particle to the collision point. More precisely, the impact parameter $b$ is defined as the vector from $x_2$ to $x_1$ at their time of closest approach along the free trajectories, so $b$ and $d$, (and the negative part $d_-$) are given by:
\begin{align} \label{def:impactdist}
b(\xi_1,\xi_2) = P^\perp_{v_1-v_2}(x_1-x_2), \quad d(\xi_1,\xi_2) = (x_1-x_2)\cdot \frac{v_1-v_2}{|v_1-v_2|}, \quad d_- = \max\{0,-d\}. 
\end{align}
We show that the function $g_B$ encodes a characteristic length scale emerging in the system, the so-called Debe-length $L_D$ (cf. \eqref{def:Debye}). In equation
\eqref{Bogolyubov}, this length has been rescaled to one.
The correlation of particles that remain at a distance much larger than the Debye length, i.e. $|b|\gg 1$, is expected to be negligible. Moreover, one expects negligible correlations for particles that (so far) have remained at a distance
larger than the characteristic length, that is $d_{-} \gg 1$.   
In this paper, we prove that for Coulomb interacting systems, the equilibrium correlations $g_B$ satisfy the following estimate, for every compact set $K\subset \Reals^3$ and $\delta>0$
\begin{align}\label{eq:intrscreenC}
|g_B(\xi_1,\xi_2)| &\leq \frac{C(\delta,K)}{|v_1-v_2|} \frac{1}{(|b|+d_-) (1+|b|+d_-)^{\gamma-\delta}}, \quad   \text{$v_1,v_2 \in K$}.
\end{align}
Here $\gamma=0$ if $f_1(v)$ decays exponentially, and $\gamma=1$ if $f_1$ 
behaves like a Maxwellian for large velocities. We observe that the result only shows the onset of a characteristic length scale, when the one-particle function $f_1$ behaves like a Maxwellian for large velocities, but not for exponentially decaying functions, indicating that a characteristic length in the system can only be expected for functions $f_1$ with Maxwellian decay.

We further note that \eqref{eq:intrscreenC} indicates that the correlations become singular for particles
with small impact factor $b$. This is crucial for identifying the kinetic equation for Coulombian particle systems and is discussed in Subsection  \ref{sec:coulkin}. 

In the case of soft potential interaction, we prove that the equilibrium correlations $g_B$ satisfy the estimate \eqref{eq:intrscreenC} with $\gamma=2$, even if the
potential decays exponentially. In this case, we do not observe a singularity for $|b|,|d_-|\rightarrow 0$.

A fact that will play a crucial role in the proof of \eqref{eq:intrscreenC} for the Coulomb potential are the zeros of the function $\Re(\eps(k,u))$ for $k\rightarrow 0$ ($\eps$ as in \eqref{eq:kinetic}), for which  $\Im(\eps(k,u))$ is exponentially small. These zeros are well-known in the physics literature, and related to the so-called Langmuir waves
(cf. \cite{lifshitz_course_1981}). These are plasma density waves with very large wavelength which damp out only very slowly. This is the physical cause 
for the slow Landau damping of Maxwellian plasmas. More precisely, it has been shown in \cite{glassey_time_1995,glassey_time_1994}  that the rate of convergence to equilibrium is only logarithmic in time for Maxwellian plasmas, that is when $f_1$ is a Maxwellian.
Furthermore, the zeros of $\Re(\eps(k,u))$ are crucial to the analysis of the linearized Balescu-Lenard equation in \cite{strain_linearized_2007}. In our paper, they account for the dependence of the screening properties (cf. \eqref{eq:intrscreenC}) on the behavior of the one-particle function for large velocities.

We study the linearized evolution of the truncated correlation function $\tilde{g}_2$ \eqref{BBGKYint3g} with fixed one-particle function. 
Similar to the Vlasov equation, the equation can be solved in Fourier-Laplace variables (cf. \cite{krommes_two_1976}). We introduce in Definition \ref{def:Bogolyubovprop}
the representation of the solution in terms of Vlasov propagators,
and in Section \ref{sec:stability} we show linear stability
of the Bogolyubov steady states $g_B$
\begin{align} \label{introd:g2conv}
	\tilde{g}_2(\tau,\cdot) \longrightarrow g_B(\cdot) \quad \text{in $D'(\Reals^9)$ as $\tau\rightarrow \infty$},
\end{align}
as well as stability of the fluxes on the right-hand side of
\eqref{BBGKYint3f}, for soft potentials $\phi$. The result \eqref{introd:g2conv} can be understood as a linear Landau damping result for two particles.

We remark that the reduction of the evolution problem to Vlasov equations stresses the importance
of a good understanding of the Vlasov-Poisson equation, in particular the stability of steady states. In the articles \cite{glassey_time_1995,glassey_time_1994} it is proved that solutions of the linear Vlasov-Poisson equation converge to spatially homogeneous states, however the result is restricted to the case of initial data that
are rotationally symmetric in the velocity variable.
On a one-dimensional periodic spatial domain, the spectral theory of the linearized Vlasov equation has been studied in \cite{degond_spectral_1986}. 
Due to the shortcomings of the current stability theory of the linear Vlasov-Poisson equation, the rigorous stability results for the truncated correlations $\tilde{g}_2$ in this work are obtained for soft potentials.

We now recall, in a more modern language, the main ideas in the original derivation of the system
\eqref{BBGKYint3f}-\eqref{BBGKYint3g} proposed by Bogolyubov. An overview over particle models and scaling limits in kinetic theory can be
gained from \cite{spohn_kinetic_1980,spohn_large_2012,villani_review_2002}.

Consider a system of particles $\{(\tilde{X}_j,\tilde{V}_j)\}_{j\in J}$ with unitary mass, where $J$ is a countable index set and $\tilde{X}_j,\tilde{V}_j \in \Reals^3$ denote
the position and velocity of particles. Let the evolution
of the system be given by:
\begin{align} \label{eq:Hamilton}
	\partial_\tau \tilde{X}_i(\tau) &= \tilde{V}_i(\tau), \quad \partial_\tau V_i = -\tilde{\sigma} \sum_{j\neq i} \nabla \phi(\tilde{X}_i-\tilde{X}_j).
\end{align}  
The parameter $\tilde{\sigma}$ can be interpreted as the squared charge of an individual particle and will be passed to zero later. 
We will assume that the initial configuration of particles
is random and distributed according to a spatially homogeneous Poisson point process
with an average of $\tilde{N}=\tilde{\sigma}^{-\kappa}$ particles per unit of volume for some $\kappa>0$. More precisely, the process has the intensity measure $\lambda(\uud{x}\uud{v})=\tilde{N} f_0(x,v) \uud{x}\uud{v}$, where $f_0(x,v)=f_0(v)$ is some probability density in the space of velocities.

The average kinetic energy of a particle, that we also call the temperature of the system, we will denote by $T$.
By rescaling velocities and time we can assume without loss of generality that $T=1$. 
We consider scaling limits of \eqref{eq:Hamilton} and try to characterize the statistical behavior of \eqref{eq:Hamilton} depending on the choice of the parameter $\kappa>0$ that determines
the interdependence of $\tilde{\sigma},\tilde{N}$, as well as the interaction potential $\phi$. 

In spite of the fact that the Coulomb potential does not have an intrinsic length scale, a characteristic length emerges from the dynamics of the system. To this end, we observe
that there are two independent quantities with the unit of a length that can be obtained from the 
quantities $\tilde{\sigma},\tilde{N}$ and $T$ describing the system. One of them is the typical distance of particles $d= \tilde{N}^{-\frac13}$. The second is the so-called Debye screening length:
\begin{align}\label{def:Debye}
	L_D = \sqrt{\frac{T}{\tilde{N}\tilde{\sigma}}} , 
\end{align}
which is well-known in plasma physics. 
Note that the definition \eqref{def:Debye} makes sense without a well-defined temperature, using the average kinetic energy instead of the temperature. We assume the average momentum of particles is zero. This way of defining the Debye-length is widely used in plasma physics for systems far away from thermal equilibrium, see for example \cite{lifshitz_course_1981}.
The Debye length will play a crucial role in many results of this paper. It measures the 
characteristic (effective) range of interaction between the particles of the system, assuming that the velocity distribution of particles
$f_1(v)$ satisfies a suitable stability condition (cf. Assumption \ref{Ass:onepart1}). Under this assumption, $L_D$ is the effective radius 
of a single particle, that is the characteristic distance to which the influence of a single particle can be felt in a system evolving according to \eqref{eq:Hamilton},
when $\phi$ is the Coulomb potential.  We can assume $L_D=1$ using the
change of variables:
\begin{align} \label{eq:CoulombUnits}
	L_D X &= \tilde{X}, \quad L_D \tau = \tilde{\tau}, \quad L_D  \theta^2 = \tilde{\theta}^2, \quad N= L_D^3 \tilde{N}.
\end{align}
After changing units, the average number of particles per unit volume $N$ and the rescaled strength $\sigma$ of the potential satisfy the relation:
\begin{align} \label{eq:coulthetasq}
	N  = \sigma^{-1}, \quad \sigma \rightarrow 0
\end{align}
and the particle system $\{(X_j,V_j)\}_{j \in J}$ satisfies \eqref{eq:Hamilton} with $\tilde{\sigma}$ replaced by $\sigma$. Hence, for systems evolving according to \eqref{eq:Hamilton} with $\phi$ the Coulomb potential, we can assume without loss of generality
that \eqref{eq:coulthetasq} holds. Therefore, we will consider particle
system determined by the scaling limit \eqref{eq:coulthetasq}, and compare the case of Coulomb interaction and the case of interaction with a smooth, decaying potential.

Let $\phi$ be a soft potential with characteristic length $\ell=1$. Then per unit
of time, a typical particle will interact with $N$ many particles and each interaction yields a deflection of order $\sigma$ with zero average.
If the forces of all particles within the range of the potential are independent, the variance of the sum of the deflections is:
\begin{align} \label{eq:var}
	\operatorname{Var}(V(\tau)) \sim \sigma \tau.
\end{align}
Therefore, the variance will become of order one on a macroscopic time scale $t=\sigma \tau$.

We are interested in the correlation of particles in the scaling limit of particle systems 
given by \eqref{eq:Hamilton}, \eqref{eq:coulthetasq}. The presentation will be similar to the one in \cite{velazquez_non-markovian_2018}.
Denote phase space variables by $\xi=(x,v)$, let 
$F_n(\tau,\xi_1,\ldots,\xi_n)$ be the $n$-particle correlation function of the system, and $f_n =  F_n/N^n$ be the rescaled correlation function. Formally, these functions satisfy the BBGKY hierarchy (cf. \cite{balescu_equilibrium_1975}). In the scaling limit \eqref{eq:coulthetasq}, the hierarchy reads as:
\begin{equation} \label{eq:BBGKY}
\begin{aligned}
	\partial_\tau f_n + \sum_{i=1}^n v_i \nabla_{x_i} f_n &- 
	\sum_{i=1}^n \int    \nabla \phi(x_i-x_{n+1}) \nabla_{v_i}
	f_{n+1} \ud{\xi_{n+1}}  	\\
	= &\sigma \sum_{i \neq j} \nabla \phi(x_i-x_j) \nabla_{v_i} f_n .	
\end{aligned}
\end{equation}
Since we assume that particles are initially independently distributed, the correlation functions
at the initial time $\tau=0$ factorize: $f_n(0,\xi_1,\ldots,\xi_n)= f_1(0,\xi_1)\cdots f_1(0,\xi_n)$.
The evolution given by \eqref{eq:Hamilton} will create correlations between particles. In order
to be able to study this, we introduce the (rescaled) truncated correlation functions $g_n$:
\begin{equation}	
	\begin{aligned}\label{def: g2g3}
		g_2(\xi_1,\xi_2) &= f_2(\xi_1,\xi_2) - f_1(\xi_1) f_1(\xi_2),\\
		g_3(\xi_1,\xi_2,\xi_3) &= f_3(\xi_1,\xi_2,\xi_3) - (f_1f_1 f_1)(\xi_1,\xi_2,\xi_3) \\
		&- f_1(\xi_1) g_2(\xi_2,\xi_3) - f_1(\xi_2) g_2(\xi_1,\xi_3) - f_1(\xi_3) g_2(\xi_1,\xi_2), \\
		 \ldots .
	\end{aligned} 
\end{equation}  
Rewriting the equations BBGKY hierarchy \eqref{eq:BBGKY} in terms of the functions $g_n$ we find that 
a consistent assumption on the orders of magnitudes is:
\begin{align}
	g_n \approx \sigma^{n-1}.
\end{align} 
Hence we expect that, to leading order, the equations for $f_1$, $g_2$ (cf. \eqref{eq:BBGKY}) can be approximated by:
\begin{equation} \label{eq:truncBBGKY}
	\begin{aligned}
	\partial_\tau f_1 &=  \nabla_v \cdot \left(\int \nabla \phi(x_1-x_3) g_2(\xi_1,\xi_3) \ud{\xi_3} \right) \\
	\partial _\tau g_2+& \sum_{k=1}^2 v_k \nabla_{x_k} g_2 -\sum_{k=1}^{2}\int \nabla \phi (x_k-x_3)\nabla_{v_{k}}(f_{1}(\xi_k)g_2(\xi _{\zeta \left( k\right) },\xi_{3})) \ud{\xi_3} 	\\
		&=\sigma\sum_{k=1}^{2}\nabla _{v_{k}}\left(f_{1}(\xi_1)f_{1}(\xi_2)\right)\nabla \phi (x_k-x_{\zeta(k)}).
	\end{aligned}
\end{equation}
Since the source term for $g_2$ in \eqref{eq:truncBBGKY} is of order
$\sigma$, the function $\tilde{g}_2= \sigma^{-1}g_2$ can be expected to be of order one. With this definition, \eqref{eq:truncBBGKY} is equivalent to \eqref{BBGKYint3f}-\eqref{BBGKYint3g}.

In scaling limits with weak interaction, e.g. the weak-coupling limit, one can apply a similar reasoning. In this case,  steady state equation for the truncated correlations is 
\begin{equation}
\begin{aligned} \label{BogolyubovLandau}
\sum_{i=1}^2 v_i \nabla_{x_i} g_B =  (\nabla _{v_1} - \nabla_{v_2})\left(f_{1}f_{1}\right)\nabla \phi (x_1-x_2).	
\end{aligned}
\end{equation} 
Notice that the integral term in \eqref{Bogolyubov} disappears in the case of weak interaction.
The equation \eqref{BogolyubovLandau} can be solved explicitly using the method of characteristics. In this case the resulting kinetic equation
for $f_1$ is formally the Landau equation. Partial results on the derivation can be found in \cite{bobylev_particle_2013,velazquez_non-markovian_2018}. Global well-posedness and stability for the Landau equation has been proved in \cite{guo_landau_2002}.

We then summarize the main implications of the results for the study of scaling limits of Coulomb particle systems. 
Most importantly, the Debye screening
becomes visible in the length scale of the two-particle correlation function \eqref{eq:intrscreenC}. It is worth mentioning that the different decay exponents $\gamma$ in the result suggests that the screening properties depend on the behavior of the one-particle function $f_1$ for large velocities. The Debye screening can also be observed on the level of the linearized Vlasov equation. We will take a closer look at this in Subsection \ref{subsec:Screen}.
  
Further, the argument identifies two regions in which
the assumption $f_1\gg g_2$ breaks down, namely for particles $\xi_1$, $\xi_2$ with very small relative velocity $v_1-v_2\approx 0$, and
very fast particles. The critical region of particles with very small relative velocity is a result of the fact that the collision time diverges, when
particles only very slowly separate (see \cite{velazquez_non-markovian_2018}). 

A mathematical description of scaling limits of Coulomb particle systems requires to understand the following aspects:
Firstly, the emergence of the Debye length $L_D$ from the particle system \eqref{eq:Hamilton}. Secondly, one needs
to estimate the deflections due to the interaction of particles with an impact parameter much larger than the Debye length. Due to the
screening, the influence of a single charge decays much faster than the Coulomb potential itself. Thirdly, one needs to understand
the deflections produced by particles that approach closer than the Debye length. The influence of these deflections turns out to be dominant
by a logarithmic factor $\log(\frac1{\sigma})$ and yields the Landau equation in the kinetic limit. This is discussed in Subsection~\ref{sec:coulkin}.

\subsection{Debye Screening in the Vlasov equation} \label{subsec:Screen}

In this subsection, we discuss the onset of a screening length in the linearized Vlasov equation.
To this end, we will take a closer look at the steady states of the Vlasov-Poisson equation in the presence
of a point charge. The Debye screening can be observed in the decay of the equilibrium spatial profile, which
has a characteristic length scale that is given by the Debye length $L_D$ (cf. \eqref{def:Debye}), in spite of the fact that the Coulomb potential
does not have a length scale. The screening effect is related to the classical subjects in 
the Vlasov theory such as Landau damping and Langmuir waves (cf. \cite{glassey_time_1995,glassey_time_1994,lancellotti_glassey-schaeffer_2015,lifshitz_course_1981,mouhot_landau_2011,penrose_electrostatic_1960}).

 We prove in this paper, that the evolution problem
\eqref{BBGKYint3g} can be reduced to the Vlasov system. We remark that one can also formally derive the Balescu-Lenard equation from a stochastic model involving Vlasov equations.
The method consists in describing the evolution of the probability density of a tagged particle which interacts with a random medium. The random medium is assumed to evolve according to the Vlasov equation, linearized around the velocity distribution of the tagged particle.
The approach of a Vlasov medium is well-studied in the formal theory in plasma physics \cite{piasecki_stochastic_1987, rostoker_superposition_1964}. 
Rigorous results on a related model can be found in \cite{lancellotti_fluctuations_2009,lancellotti_time-asymptotic_2016}. 

Let $(X,V)$ be the phase space coordinates of the tagged particle traveling through a
continuous background, with which it interacts via the Coulomb potential. Here $f_0(v)$ is a fixed velocity distribution,
and $h(\tau,x,v)$ the correction that is induced by the particle. 
Taking as unit of length the Debye length $L_D$ (cf. \eqref{def:Debye}) as before, let the system be given by: 
\begin{align}
		\partial_\tau h + v \nabla_x h -  \nabla_x (\phi * \varrho) \nabla_v f_0 	&= \sigma  \nabla_v f_0  \nabla\phi(x-X(\tau)),  &h(0,x,v)&= 0 \label{eq:Vlasovcharge} \\
												\varrho(x)			&= \int h(x,v) \ud{v} \label{eq:Vlasovdensity} \quad &&{}\\
							\partial_\tau X			= V,\quad		&\partial_\tau V			= -\sigma\nabla_x (\phi * \varrho)(X(\tau)),  &(X(0),V(0))&=(X_0,V_0) \label{eq:ODE}.
\end{align}  
In the derivations of the Balescu-Lenard equation in \cite{lancellotti_fluctuations_2009,lancellotti_time-asymptotic_2016,piasecki_stochastic_1987}, the initial datum $h(0,\cdot)$ in \eqref{eq:Vlasovcharge} is random. 
Then the dynamics describing the evolution of $(X,V)$ becomes a stochastic differential equation. Notice that
the evolution of random measures under the Vlasov equation has already been considered in Braun and Hepp (cf. \cite{braun_vlasov_1977}).
In the system \eqref{eq:Vlasovcharge}-\eqref{eq:ODE}, $(X,V)$ can be interpreted as a particle traveling through a random background of particles, and
$h(x,v)$, $\varrho(x)$ as the correction of the homogeneous density (or "cloud") induced by the particle.
It is worth noting that the well-posedness of the problem of a moving point charge interacting with a fully nonlinear Vlasov-Poisson system has been studied in \cite{desvillettes_polynomial_2015}.

For simplicity, assume $f_0(v)$ in \eqref{eq:Vlasovcharge} is radially symmetric. In the derivation of the Landau equation and the Balescu-Lenard equation, we make the assumption that the trajectories of particles are approximately rectilinear on the microscopic timescale. This suggests to approximate $X(\tau)$ in \eqref{eq:Vlasovcharge} by
\begin{align} \label{eq:rectilinear}
	X(\tau) \approx X_0 - \tau V_0.
\end{align}
For the special case $V_0=0$, it was observed in \cite{lifshitz_course_1981} that the Debye screening can be derived from the equation \eqref{eq:Vlasovcharge}.
The spatial density of the steady state of \eqref{eq:Vlasovcharge} with a point charge at rest can be computed explicitly (without loss of generality $X_0=0$):
\begin{align} \label{eq:VlasovsolRest}
	\varrho_{eq}(x) = \frac{\sigma}{4\pi |x|} e^{-|x|}.
\end{align}
Remarkably, even though the potential $\phi(x)=1/|x|$ does not have a length scale, the spatial profile of $\varrho_{eq}$ decays exponentially with characteristic
scale given by the Debye length $L_D$.  

Now consider the case of $V_0 \neq 0$. Making the assumption of rectilinear motion \eqref{eq:rectilinear}, we can again solve \eqref{eq:Vlasovcharge} explicitly.
For $\tau \rightarrow \infty$, the solution converges to traveling wave with velocity $V_0$. The spatial profile of the traveling wave can
be represented in Fourier variables. Let $f_0$ be a given one-particle function, then the formula reads:
\begin{align} \label{eq:Vlasovsoltraveling}
	\hat{\varrho}_{trav}(k) = \frac{\sigma \int \frac{k \nabla f_0(v)}{k(v-V_0)-i0}\ud{v}}{|k|^2 D(k,k\cdot V_0)},
\end{align}
where $D(k,u)$ is given by:
\begin{align} \label{eq:Ddielectric}
	D(k,u):= 1 - \frac{1}{|k|^2}  \int_{\Reals^3} \frac{ k \cdot \nabla f_0(v)}{k\cdot v- u+i0}\ud{v}.
\end{align}
We remark that \eqref{eq:Ddielectric} suggests that for $|V_0|\rightarrow \infty $, the spatial profile $\varrho_{trav}(x)$ can have large oscillations with long wavelength $\lambda=1/|k|\rightarrow \infty$. To see this, we decompose $D=D_R + iD_I$ into its real and imaginary part. For $|k|\rightarrow 0$ and $u$ of order one, we have
the asymptotic formula
\begin{align} \label{eq:DRDI}
	D_R(k,u) \sim 1-1/|u|^2, \quad  D_I(k,u) = 1/|k|^2 \int_{k\cdot v=u} k/|k| \nabla f_0(v) \ud{v}. 
\end{align}
Hence, the real part of $D$ in \eqref{eq:Ddielectric} has a zero for $|k|\rightarrow 0$,  $u \sim 1$, and the imaginary part depends on the
tail behavior of the one-particle function $f_0$. This suggests that the traveling wave $\varrho_{trav}$ (cf. \eqref{eq:Vlasovsoltraveling}) surrounding
the particle $(X,V)$ can lead to large deflections in other particles for $|V_0|\gg 1$, depending on the decay of $f_0(v)$ for large velocities. 
In the presence of very fast particles, the rectilinear approximation \eqref{eq:rectilinear} does not hold. However, this should not affect the validity of the final
kinetic equation in the limit $\sigma \rightarrow 0$, since the number of particles with velocity $|V_0|\gg 1$ becomes negligible.

This observation explains
why the exponent in the estimate \eqref{eq:intrscreenC} depends on the decay properties of the one-particle functions, and the estimate is only valid for velocities varying on a compact set.

The zero of the real part $D_R$ (cf. \eqref{eq:DRDI}) is also related to other important phenomena in plasma physics, such as the so-called Langmuir waves.
The length of the Langmuir waves is much larger than the Debye length and the oscillation frequency has been normalized to $\Omega_{\operatorname{Langmuir}}=1$ in our setting. The amplitudes of these waves decrease exponentially at a rate proportional to $D_I$ (cf. \eqref{eq:DRDI}), so the rate strongly depends on the background
distribution of particles. For a Maxwellian distribution of particles $f_0=M$, the imaginary part is exponentially small, which results in a very slow Landau damping
as observed in \cite{glassey_time_1995,glassey_time_1994}.

\subsection{On the range of validity of the Balescu-Lenard equation for Coulomb potentials} \label{sec:coulkin}

The goal of this subsection is to determine the correct kinetic equation for scaling limits
of particle systems interacting with the Coulomb potential, or the Coulomb potential
smoothed out at the origin. It was already remarked by Lenard in \cite{lenard_bogoliubovs_1960},
that the integral \eqref{eq:kinetic} is not well-defined for $\phi(x)=1/|x|$, since the integral
\begin{align} \label{eq:diffcoeff}
	a_{i,j} (w,v)	&=  \int_{\Reals^3} k_ik_j \delta(k\cdot w) \frac{|\hat{\phi}(k)|^2}{|\eps(k,k \cdot v)|^2} \ud{k}
\end{align} 
is logarithmically divergent for large $k$. This corresponds to the divergence \eqref{eq:intrscreenC} for small values of the spatial
variable $x$, so the main contribution comes from the singularity of the Coulomb potential at the origin.
 
In the scaling limit \eqref{eq:coulthetasq}, particle interaction is given by the potential $\sigma \phi(x)= \sigma/|x|$. 
Therefore, an interaction of particles with impact parameter $|b|\leq \sigma$ will result in a deflection of order one.
This yields a Boltzmann collision term in the limit equation, as observed in \cite{nota_theory_2018}. We now
analyze the influence of interactions with impact parameter $|b|\geq \sigma$. This corresponds to
a truncation $\tilde{a}_{i,j}$ of the integral \eqref{eq:diffcoeff} to $|k|\leq \sigma^{-1}$.
 As Lenard observed in \cite{lenard_bogoliubovs_1960}, the function $\eps(k,k\cdot v)\rightarrow 1$ becomes
constant for $k\rightarrow \infty$. Therefore, the truncated coefficient $\tilde{a}$ satisfies:
\begin{equation} \label{logterm} 
\begin{aligned}
	\tilde{a}_{i,j} (w,v)	&= \lim_{\sigma \rightarrow 0}  |\log(\sigma)|\int_{B_{\sigma^{-1}}}  \frac{k_ik_j \delta(k \cdot w)|\hat{\phi}(k)|^2}{|\eps(k,k \cdot v)|^2} \ud{k} 
						\sim   \delta_{i,j} -  \frac{w_i w_j}{|w|^2}.	
\end{aligned}
\end{equation}   
Hence, we obtain the Landau kernel in this limit. Now we discuss how this observation connects to \eqref{BBGKYint3f}-\eqref{BBGKYint3g} for $\sigma \rightarrow 0$. Due to \eqref{logterm}, the kinetic timescale is not given by
$t=\sigma \tau$, but slightly shorter by a logarithmic correction. Therefore, the mathematically rigorous kinetic equation associated
to the scaling limit \eqref{eq:coulthetasq} is expected to be the Landau equation, and the
main contribution is due to the interaction of particles with very small impact factor.  
However a more accurate description of physical systems might be obtained by keeping the terms of the order
$|\log(1/\sigma)|^{-1}$ in the equation, since in physical systems, $|\log(1/\sigma)|$ cannot be expected to be very large (cf. the discussion in §41 of \cite{lifshitz_course_1981}). 
Therefore, the physical equation describing plasmas can be expected to involve a Balescu-Lenard term, the Landau collision operator and a Boltzmann collision operator. The relative size of the different collision terms would depend on the physical system in question. The Balescu-Lenard equation is the correct  limit equation
for systems with soft potential interaction in the scaling limits \eqref{eq:coulthetasq}.
    
Consider particle systems interacting via the Coulomb potential and take as unit of length the Debye length $L_D$ \eqref{def:Debye}.
As a simplified problem, one can study a smooth variant of the Coulomb potential, that is  $\phi_{C,r} \in C^\infty$ radially symmetric and $\phi_{C,r}(x)=1/|x|$ for $|x|\geq 1$. Then the kinetic equation associated to the scaling limit \eqref{eq:coulthetasq} can be expected to be the Balescu-Lenard equation. Notice that the equation includes the screening effect, that is expected since $\phi_{C,r}(x)$ coincides with the Coulomb potential for large $|x|$.   
   
A characterization of the limit equations for scaling limits of Lorentz models with long-range interaction (i.e. a tagged particle in a random, but fixed, background of scatterers) can be found in \cite{nota_theory_2018}. For mathematical results in this direction see also \cite{desvillettes_linear_1999, marcozzi_derivation_2016}. 
\section{Preliminary and main results}

\subsection{Definitions and assumptions}

For future reference we fix the notation for some classical integral transforms.

\begin{notation}
	We will use the following conventions for the Laplace
	transform $\La(f)$, the Fourier transform $\hat{f}$ and the 
	Fourier-Laplace transform $\tilde{f}$:
	\begin{align} 
	\La(f)(z) 		&= \int_0^\infty e^{-zt} f(t) \ud{t} \label{def:LapT} \\
	\F(f)(k)=\hat{f}(k)		&= \frac{1}{(2\pi)^\frac{n}2}\int_{\Reals^n} f(x)e^{-ix \cdot k} \ud{x} \label{def:FourT} \\
	\tilde{f}(z,k)	&= \frac{1}{(2\pi)^\frac32}\int_{\Reals^3} \int_0^\infty  f(t,x) e^{-zt} e^{-ix \cdot k} \ud{t} \ud{x}  \label{def:FLT}.
	\end{align}
\end{notation}
\begin{definition} \label{def:Pdefs}
	We define operators $P^+$, $P^-$ and $P$ on $L^2(\Reals)$, that
	on Schwartz functions $f\in \Sch(\Reals)$ are given by:
	\begin{align}
	P^\pm[f](x) 	&:= \lim_{\delta \rightarrow 0^+} \int_{\Reals} \frac{f(x')}{x'-x\mp i \delta} \ud{x'}, \label{def:Ppm}\quad 
	P[f](x)			:= \PV \int_{\Reals}   \frac{f(x')}{x'-x} \ud{x'} 
	\end{align}
	where the principal value  integral $\PV$ is defined as: $\PV \int \ud{x'} = \lim_{\delta \rightarrow 0^+} \int  \cf(|x-x'|\geq \delta) \ud{x'}$. 
\end{definition}

\begin{notation}[Relative velocity and impact parameter] \label{Not:Feps}
	For vectors $k,v_1,v_2 \in \Reals^3$, $v_1\neq v_2$, $k\neq0$, we will use the following shorthand notation:
	\begin{align}\label{eq:vecshort}
		\omega = \frac{k}{|k|},\quad  v_r = v_1-v_2, \quad \vr= \frac{v_r}{|v_r|}.
	\end{align} 
	The impact parameter $b\in \Reals^3$ and the distance to the collision point $d\in \Reals$ of particles $(x_1,v_1)$, $(x_2,v_2)$ with relative position $x=x_1-x_2$ and relative velocity $v_r=v_1-v_2$ is defined as:
	\begin{align}
		d(x,v_r) = \frac{x \cdot v_r}{|v_r|} , \quad b(x,v_r) = x-P_{v_r}(x) = x- \frac{v_r(x \cdot v_r)}{|v_r|^2} .\label{def:Proj}
	\end{align}
\end{notation}
Due to the translation invariance of the system, the truncated correlation function $g_2(x,v,x',v')$ is a function of $x-x',v,v'$ only. By a slight abuse of notation, we
identify $g_2$ with the function:
\begin{align} \label{translationsystem}
g_2(x-x',v,v')=g_2(x,v,x',v').	
\end{align}
Also the function should be invariant under exchanging the two particles, so we impose the symmetry:
\begin{align} \label{particlesymmetry}
g_2(x,v,v')= g_2(-x,v',v).
\end{align}
This symmetry we include in the space of functions in which we solve the Bogolyubov equation.
\begin{definition}
	Define the functionals  $|h|[g]$, $h[g]$ given by the following formulas:
	\begin{align}
		|h|[g]= \int |g(x,v_1,v_2)| \ud{v_2}, \quad h[g] = \int g(x,v_1,v_2) \ud{v_2}.
	\end{align}
	Let $W$ be the function space given by:
	\begin{align} \label{def:W}
		W= \{g\in L^1_{loc}(\Reals^9): \text{ \eqref{particlesymmetry} holds, $|h|[g]\in L^1_{loc}$, $\sup_{|v|\leq R} \|h[g](\cdot,v)\|_{L^2}\leq C(R)$ for $R>0$} \}.
	\end{align}
\end{definition}
We now give a definition of a solution to the Bogolyubov equation.
We recall the space $L^1+L^2$ of functions $\zeta$ that can be decomposed
as $\zeta=\zeta_1+\zeta_2$ with $\zeta_1\in L^1$, $\zeta_2\in L^2$. 
\begin{definition}[Bogolyubov correlation] \label{def:weaksol}
	Let $\nabla \phi \in L^1+ L^2$, and $f \in W^{1,1}(\Reals^3)\cap W^{1,\infty}(\Reals^3)$ be a probability density.
	We say $g_B\in W$ is a solution to the Bogolyubov equation if for all $\psi \in C^\infty_c(\Reals^9)$
	\begin{equation}
	\begin{aligned} \label{eq:bogolyubov}
	-&\int (v_1-v_2) g_B \partial_x \psi 	-  \int \nabla f(v_1)  \nabla \phi(x+y)
	h[g_B](y,v_2) \psi(x,v_1,v_2)  \\
	&-  \int \nabla f(v_2) \nabla \phi(-x+y)h[g_B](y,v_1) \psi(x,v_1,v_2)
	= \int (\nabla_{v_1}-\nabla_{v_2})[f \otimes f ]\nabla \phi(x) \psi,
	\end{aligned} 
	\end{equation}
	and it satisfies the Bogolyubov boundary condition
	\begin{align} \label{bogcond}
	g_B(x-\tau(v_1-v_2),v_1,v_2) \rightarrow 0,\quad \text{as $\tau\rightarrow \infty$, a.e.}
	\end{align}
\end{definition}

\begin{definition}[Radon transform and dielectric function] \label{def:dielectric}
		Let $f\in L^1(\Reals^3)\cap L^\infty(\Reals^3)$. We define the Radon transform 
		$F: \Reals^3 \times \Reals \rightarrow \Reals$ associated to $f$ by ($\omega=\omega(k)$ as in \eqref{eq:vecshort}):
		\begin{align} \label{def:F}
		F(k,u):= \int_{\{v: \,\omega \cdot v=u\}} f(v) \ud{v}.
		\end{align}
		Further we define the dielectric function $\eps: \Reals^3 \times \Reals \rightarrow \Reals$ associated to $f\in W^{1,1}(\Reals^3)\cap W^{1,\infty}(\Reals^3)$ and a potential $\phi$ by:
		\begin{align} \label{def:eps}
		\eps(k,-|k|u) := 1-\hat{\phi}(k)P^-[\partial_u F(k,\cdot)](u).
		\end{align}
		Here the operator $P^-$ defined in \eqref{def:Ppm} is applied in the second variable of $\partial_u F$.
		As a shorthand we also introduce the functions $\alpha$, $\alpha^-$ given by:
		\begin{align}
			\alpha(\chi,u)	:= P[\partial_u F(\chi,\cdot)](u), \quad \alpha^-(\chi,u)	:= P^-[\partial_u F(\chi,\cdot)](u) \label{def:alpha}.
		\end{align}
\end{definition}
\begin{remark}
	Note that the dielectric function $\eps$ coincides with the function $D$ introduced in \eqref{eq:Vlasovsoltraveling}, which quantifies the correction
	to the homogeneous density induced by a single point charge.
\end{remark}
The following definitions will be useful in studying the linear evolution problem \eqref{BBGKYint3g} for $g$.
When $f$ is time independent, the equation \eqref{BBGKYint3g} for $g$ can be solved explicitly. 
To this end we introduce some notation.
\begin{notation}
	We introduce the function: 
	\begin{align} \label{def:Q}
	Q(k,v) & = k \nabla f(v) \hPhi(k).
	\end{align}
	Furthermore, for a function $h(x,v)$ and a potential $\phi$ we set $E_h$ to be the self-consistent
	potential associated to $h$:
	\begin{align} \label{def:E}
		E[h](x)=E_h(x) = \int \int \phi(x-y) h(y,v) \ud{v} \ud{y}.
	\end{align}
\end{notation}
\begin{definition}[Vlasov and transport propagator] \label{def:Vlasovprop}
	Let $\phi$ be a radially symmetric Schwartz potential.
	Let $\V$ be the linear Vlasov propagator associated to $f$, so let $%
	\V(t)[h_{0}]=h(t)$ be the solution to:
	\begin{equation}
	\begin{aligned} \label{linVlasov}
	\partial_{t} h + v \nabla_{x} h - \nabla E_{h} \nabla f & =0,\quad 	h(0,\cdot) & = h_{0}(\cdot),
	\end{aligned}
	\end{equation}
	with $E_h$ as in \eqref{def:E}. In Fourier-Laplace variables (cf. \eqref{def:FLT}) the solution is given by:
	\begin{align}
	\tilde{h}(z,k,v)  = \frac{\hat{h}_{0}(k,v)}{z+ikv}+ \frac{i Q(k,v) \tilde{%
			\varrho}(z,k)}{z+ikv}  \label{hvlas}, \quad 
	\tilde{\varrho}(z,k) = \frac{\int\frac{\hat{h}_{0}(k,v^{\prime})}{%
			z+ikv^{\prime}}\;\mathrm{d}{v^{\prime}}}{\eps(k,-iz)}, 
	\end{align}
	with $Q$ as introduced in \eqref{def:Q}. Further let $T$ be the free transport propagator so 
	\begin{align}\label{eq:transportprop}
	T(t)[g](\xi_{1},\xi_{2}) := g(x-v_{1}t,v_{1},x_{2}-v_{2}t,v_{2}).
	\end{align}
\end{definition}	
\begin{definition} \label{def:Bogolyubovprop}
	Let $\tilde{g}_0(\xi_1,\xi_2)=g_0(x_1-x_2,v_1,v_2)$, $g_0 \in \Sch((\Reals^3)^3)$ be symmetric in exchanging the variables $\xi_1$, $\xi_2$, and set $S(\xi_1,\xi_2)=\delta(\xi_1-\xi_2) f(v_1)$. We define the
	Bogolyubov propagator $\G$ by: 
	\begin{align}\label{eq:Bogolyubovprop}
	\G(t)[\tilde{g}_0] & := \V_{\xi_1}(t) \V_{\xi_2}(t)[S +  \tilde{g}_0 ] - T(t) [S] ,
	\end{align}
	where $\V_{\xi_1}$ is the Vlasov propagator acting the set of variables $(x_1,v_1)=\xi_1$, and $\V_{\xi_2}$ the
	propagator acting on $(x_2,v_2)=\xi_2$.
\end{definition}

We will analyze the equilibrium two-particle correlations for so-called soft potentials
and the Coulomb potential. Notice that we restrict our attention to radially symmetric
potentials.
\begin{assumption}[Potentials] \label{def:potentials}
	Let $\phi_C \in C(\Reals^3 \setminus \{0\})$ be the Coulomb potential, so
	$\phi_C(x)=\frac{c}{|x|}$ for some $c>0$. Assume without
	loss of generality that $c=\sqrt{\frac{\pi}{2}}$, when $\hat{\phi}(k)=\frac{1}{|k|^2}$. We say $\phi_S=\phi_S(|x|)$ is a soft potential if $\phi_S \in \Sch(\Reals^3)$.	
\end{assumption}
On the one-particle distribution function $f$ we make the following 
regularity assumptions.
\begin{assumption}[Regularity and Decay] \label{ass:regdecay}
	Let $f\in C^8(\Reals^3)$ be nonnegative and
	\begin{align} \label{Ass:decayf}
	|\nabla^m f(v)| \leq C e^{-|v|}, \quad \text{for $m=0,1,\ldots,8$}.
	\end{align}
	Further let $f$ be normalized to: 
	\begin{align} \label{Ass:fint} 
	\int f(v) \ud{v} = 1.
	\end{align} 	
\end{assumption}
Our proof of existence of Bogolyubov correlations requires the plasma to be stable. 
This can be mathematically formulated in terms of the dielectric function $\eps$ (cf. \eqref{def:eps}) associated to $f$. 

\begin{assumption}[Plasma stability]\label{Ass:onepart1}
	We say $f$ is stable if for all $k \in \Reals^3$, $\chi \in S^2$, $u\in \Reals$ we have:
	\begin{align} 
		|k|^2 \neq P^-[\partial_u F(\chi,\cdot)](u), \quad \text{in particular }|\eps(k,u)| &\neq 0, \quad \text{$\eps$ as in \eqref{def:eps}}. \label{fstable1}
	\end{align}
\end{assumption}
\begin{remark}
	The physical relevance of this condition is discussed in \cite{lifshitz_course_1981}.
	A necessary and sufficient condition for stability (cf. \eqref{fstable1}) was given by Penrose in \cite{penrose_electrostatic_1960}. 
	For example the condition \eqref{fstable1} is satisfied by functions $f$, for which $F(u)$ has precisely one maximum and no other critical points.  
\end{remark}

In order to prove (exponential) linear stability of the equilibrium correlations and their
fluxes we make a stronger analytic stability assumption on the plasma, which requires 
that we can extend the dielectric function to a strip in the complex plane.

\begin{assumption}[Strong plasma stability] \label{Ass:onepart2}
	Let $f>0$ be a Schwartz probability density on ${\mathbb{R}}^{3}$. 
	Let $F$ be the  Radon transform defined in \eqref{def:F} and
	$\phi=\phi_S$ a soft potential.
	Assume that there exists $c>0$ such that for all $\chi\in S^2$, $F(\chi,iz)$ has a holomorphic extension to the strip  $H_{c} := \{z\in \Complex: |\Re(z)|\leq c\}$ and on $H_{c}$ satisfies the estimate 
	\begin{align} \label{eq:Cauchy}
		|F(\chi,iz)| \leq\frac{C}{1+\Im(z)^{2}}.
	\end{align}
	We will assume that the associated extension of the dielectric function $z \mapsto \eps(k,-i|k|z)$ to the shifted
	right half-plane $H_{-c}^{-}:=\{z \in \Complex: \Re(z) \geq -c\}$  is bounded below uniformly: 
	\begin{align}
		|\eps(k,-i|k|z)|\geq c_0 >0, \quad \text{for $0\neq k\in \Reals^3$, $z\in H^-_{-c}$}. \label{ass:diel}
	\end{align}
\end{assumption}
We now introduce some technical assumptions, that we later use to quantify the rate of decay of the equilibrium correlations.
We distinguish functions $f$ that behave like an exponential as $|v|\rightarrow \infty$, specified in Assumption \ref{Ass:Exponential}, and functions
that behave like Gaussians, as specified in Assumption~\ref{Ass:Maxwellian}.
\begin{notation} \label{not:psidef}
	We recall the function $\alpha$ introduced in \eqref{def:alpha}.
	For $k\in \Reals^3$, $\chi\in S^2$, let $u_0^+(k,\chi)>0$, $u_0^+(k,\chi)<0$ be the solutions to:
	\begin{align} \label{def:u0}
	|k|^2 - \alpha(\chi, u_0^\pm) = 0,
	\end{align}	
	whenever \eqref{def:u0} has a unique solution with the prescribed sign. 
	Further write $I(k,\chi)$ for the set
	\begin{align} \label{eq:Idef}
		I(k,\chi)=(u_0^-(k,\chi)-1,u_0^-(k,\chi) +1) \cup (u_0^+(k,\chi)-1,u_0^+(k,\chi) +1).
	\end{align}
	Let $L^\pm(k,\chi)$, $\Psi^\pm(k,\chi,y)$ be given by:
	\begin{align}
		L^\pm(k,\chi)&=\frac{\partial_u F(\chi,u_0(k,\chi))}{\partial_u \alpha(\chi,u_0(k,\chi))}
		, \quad &\text{for $k\in \Reals^3$, $\chi\in S^2$}, \label{eq:Ldef} \\
		\Psi^\pm(k,\chi,y) &= u_0(k,\chi)+ y \frac{\partial_u F(\chi,u_0(k,\chi))}{\partial_u \alpha(\chi,u_0(k,\chi))}, \quad &\text{for $k\in \Reals^3$, $\chi\in S^2$, $y\in \Reals$}.
	\label{def:Psidef}
	\end{align}
\end{notation}
\begin{assumption}[Asymptotically exponential behavior] \label{Ass:Exponential} 
	Let $f$ satisfy the Assumptions \ref{ass:regdecay}-\ref{Ass:onepart1}. 
	Let $L^\pm=L^\pm(k,\chi)$ and $\Psi^\pm$ be as in Notation \ref{not:psidef}. We say $f$ behaves 
	asymptotically like an exponential if it satisfies the following for some $r,c,C>0$:
	\begin{align}
	|\nabla^6_{k,\chi,y} \left(\frac{|k|^3}{\partial_u \alpha(\chi,\Psi^\pm)}\right)| &\leq C, \quad \text{ for $|k|\leq r$, $\chi \in S^2$, $|y|\leq 
		 {L^\pm}^{-1}$}, \label{eq:exp1} \\
	 |\nabla^6_{k,\chi,y} \left(\frac{|k|^2-\alpha(\chi,\Psi^\pm)}{y\partial_u F(\chi,\Psi^\pm)}\right)| &\leq C, \quad \text{ for $|k|\leq r$, $\chi \in S^2$, $|y|\leq 
		 {L^\pm}^{-1}$},\label{eq:exp2a}	\\
	|\left(\frac{|k|^2-\alpha(\chi,\Psi^\pm)}{y\partial_u F(\chi,\Psi^\pm)}\right)|&\geq c, \quad \text{ for $|k|\leq r$, $\chi \in S^2$, $|y|\leq 
	{L^\pm}^{-1}$},\label{eq:exp2b} \\
	|\nabla^6_{k,\chi,y} \left(\frac{F(\chi,\Psi^\pm)}{\partial_u F(\chi,\Psi^\pm)}\right)| &\leq C, \quad \text{ for $|k|\leq r$, $\chi \in S^2$, $|y|\leq 
		{L^\pm}^{-1}$}. \label{eq:exp3}
	\end{align}
\end{assumption}

\begin{assumption}[Asymptotically Maxwellian behavior] \label{Ass:Maxwellian}
	Let $f$ satisfy the Assumptions \ref{ass:regdecay}-\ref{Ass:onepart1}.  
	Let $L^\pm=L^\pm(k,\chi)$ and $\Psi^\pm$ be as in Notation \ref{not:psidef}. We say $f$ behaves 
	asymptotically like a Gaussian if it satisfies \eqref{eq:exp1}-\eqref{eq:exp2b} and the following for some $r,C>0$:
	\begin{align}	
	|\nabla^6_{k,\chi,y} \left(\frac{F(\chi,\Psi^\pm)}{|k| \partial_u F(\chi,\Psi^\pm)}\right)| &\leq C, \quad \text{ for $|k|\leq r$, $\chi \in S^2$, $|y|\leq 
	{L^\pm}^{-1}$} \label{eq:max3}.
	\end{align}
\end{assumption}

\begin{remark}
	For example, the Assumptions \ref{Ass:Exponential} and \ref{Ass:Maxwellian} are satisfied by probability densities of the form:
	\begin{align}
		f(v) \sim \left(1+\frac{\Phi(v)}{(2+|v|^2)^{\alpha}}\right)e^{-(1+|v|^2)^\frac{\gamma}{2}}.
	\end{align}
	Here $\gamma=1$ if $f$ satisfies Assumption \ref{Ass:Exponential}, $\gamma=2$ if $f$ satisfies Assumption \ref{Ass:Maxwellian},
	$\alpha>0$ and $\Phi\in C^\infty_b$ is smooth with bounded derivatives and $|\Phi|\leq 1$. Note that this includes anisotropic velocity distributions.
\end{remark}

\subsection{Results of the paper} 

The first result of this paper is the well-posedness of the steady state equation \eqref{Bogolyubov}.
We prove that the solutions formally obtained by Oberman and Williams \cite{oberman_theory_1983}
by means of the method introduced by Lenard in \cite{lenard_bogoliubovs_1960} are indeed well-defined
solutions to the equation in the sense of Definition \ref{def:weaksol}.

\begin{theorem}[Bogolyubov correlations] \label{thm:wellpos}
	Let $f$ satisfy the Assumptions \ref{ass:regdecay} and \ref{Ass:onepart1} and
	$\phi$	be either the Coulomb potential or a soft potential. In the Coulomb case, assume further
	that $f$ satisfies Assumption \ref{Ass:Exponential} or \ref{Ass:Maxwellian}.
	Then there exists a weak solution $g_B$ to the Bogolyubov equation in the sense
	of Definition \ref{def:weaksol}. 
\end{theorem}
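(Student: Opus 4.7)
The plan is to follow the classical Oberman--Williams / Lenard Fourier approach: construct an explicit candidate for $\hat{g}_B$ by solving a closed linear system obtained after spatial Fourier transform and velocity integration, then verify that this candidate belongs to $W$, satisfies the weak formulation \eqref{eq:bogolyubov}, and obeys the boundary condition \eqref{bogcond}.

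\textbf{Derivation of the explicit candidate.} Using translation invariance \eqref{translationsystem} I reduce to $g_B = g_B(x, v_1, v_2)$ with $x = x_1 - x_2$ and take the spatial Fourier transform in $x$. The transport term becomes $i k\cdot(v_1 - v_2)\,\hat{g}_B$, the two convolutions against $\nabla\phi$ become products of $i k\hat{\phi}(k)$ with the spatial Fourier transforms of the marginals $h[g_B]$, and the right-hand side is completely explicit. The boundary condition \eqref{bogcond} selects the $i k\cdot(v_1 - v_2) - 0$ prescription when inverting the transport factor, in the same spirit as in Definition \ref{def:Vlasovprop}. Integrating the resulting identity separately in $v_1$ and in $v_2$ closes a small linear system for the Fourier-side marginals; its coefficients are the dielectric function $\eps(k, k\cdot v_j)$ of Definition \ref{def:dielectric}, and the system is solvable by Assumption \ref{Ass:onepart1}. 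Substituting the marginals back into the pointwise identity gives the Oberman--Williams explicit formula for $\hat{g}_B$.

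\textbf{Verification for soft potentials.} For $\phi = \phi_S \in \Sch$, $\hat{\phi}$ decays rapidly in $k$, and the principal value operators of Definition \ref{def:Pdefs} hidden inside $\eps$ and $\alpha$ are controlled using Assumption \ref{ass:regdecay}. Direct estimates then show that the explicit formula defines an element of $W$, testing against $\psi \in C^\infty_c(\Reals^9)$ recovers \eqref{eq:bogolyubov}, and the boundary condition \eqref{bogcond} follows by applying the Riemann--Lebesgue lemma to the oscillatory integral $\int \hat{g}_B(k, v_1, v_2)\,e^{i k\cdot(x - \tau(v_1-v_2))}\,\ud{k}$.

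\textbf{The Coulomb case and the main obstacle.} For $\phi = \phi_C$ the factor $\hat{\phi}(k) = |k|^{-2}$ introduces singularities at $k = 0$ in the numerator of the Oberman--Williams formula, which have to be absorbed into $|\eps(k,\cdot)|^{-2}$. Since $\eps(k, -|k|u) = 1 - |k|^{-2}\alpha^-(\chi, u)$, the critical resonances are precisely the Langmuir roots $u_0^\pm(k, \chi)$ of \eqref{def:u0}, where $\Re\eps$ vanishes and $\Im\eps$ is exponentially small and tied to the tail behavior of $f$ via \eqref{eq:DRDI}. The main difficulty is therefore the integrability of the explicit formula in a joint neighborhood of $k = 0$ and of $u \approx u_0^\pm$. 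This is exactly the role of Assumptions \ref{Ass:Exponential} and \ref{Ass:Maxwellian}: the bounds \eqref{eq:exp1}--\eqref{eq:max3} encode smoothness of the change of variables $u \mapsto \Psi^\pm(k, \chi, y)$ that linearizes $|k|^2 - \alpha(\chi, u)$ near its root, the correct scaling of $L^\pm$, and uniform control of ratios like $F/\partial_u F$. Once these changes of variables produce absolutely integrable kernels in $k$, the same reasoning as in the soft case yields the weak formulation and the boundary condition, completing the construction of $g_B$.
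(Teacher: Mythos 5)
Your overall strategy is the paper's: construct the Oberman--Williams/Lenard explicit Fourier formula, then verify membership in $W$, the weak formulation, and the boundary condition, with the Coulomb difficulty localized at $k=0$ near the Langmuir roots $u_0^\pm$ and resolved through the change of variables $\Psi^\pm$ and Assumptions \ref{Ass:Exponential}/\ref{Ass:Maxwellian}. Most of your outline matches the actual proof (Subsection \ref{sec:oberwill} together with Theorems \ref{thm:screen} and \ref{thm:screensoft}); the paper obtains the boundary condition \eqref{bogcond} from quantitative decay in $d_-$ rather than Riemann--Lebesgue, but that is a minor stylistic difference.

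There is, however, one genuine gap in the construction step. You claim that integrating the Fourier-transformed equation separately in $v_1$ and $v_2$ ``closes a small linear system for the Fourier-side marginals'' whose solvability follows from $\eps\neq 0$. This is not what happens. Integrating in $v_2$ yields the equation \eqref{eq:h} for $\hat h(k,v_1)$, which still contains $\hat h(k,\cdot)$ under the singular kernel $(\omega\cdot(v_1-v_2)-i0)^{-1}$, i.e.\ the term $P^-$ applied to the Radon transform $\hat H(k,u)=\int \hat h(k,v)\delta(u-\omega\cdot v)\,\ud{v}$. Integrating once more produces the closed equation \eqref{eq:H} for $\hat H$, which is a \emph{singular integral equation} in $u$, not a finite-dimensional system, and dividing by $\eps$ does not solve it. The heart of Lenard's method --- and of Lemma \ref{lem:H} in the paper --- is a Wiener--Hopf/Riemann--Hilbert argument: one rewrites the equation as a jump condition \eqref{eq:Hwien} across the real line, constructs the candidate $A^\pm=(1-B^\pm)P^\pm\bigl[F/|\eps|^2\bigr]$ as boundary values of functions analytic in the upper and lower half-planes, and invokes a uniqueness lemma for such sectionally analytic decompositions (Lemma \ref{lem:wienuniq}) to conclude $P^\pm[\hat H_B]=A^\pm-P^\pm[F]$. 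Only then does one obtain the explicit formula \eqref{eq:hFourier} for $\hat h_B$, which contains $A^-$, and in turn the formula for $\hat g_B$. Without this step your candidate is not actually constructed, and Assumption \ref{Ass:onepart1} alone does not deliver it. You should also be explicit that, in the Coulomb case, membership in $W$ (specifically $|h|[g_B]\in L^1_{loc}$ and the marginal identity) is itself nontrivial and requires an integrated bound of the type $\int_{B_2}\bigl|\int\hat h_B(k,v)\,\ud{v}\bigr|\ud{k}\leq C$, proved again with the $\Psi^\pm$ machinery.
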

The proof of this theorem is the content of Subsection \ref{sec:oberwill}. 

After making precise the well-posedness of the equation, we study screening
properties of the Bogolyubov correlations. The following theorem describes the decay of the solutions of the Bogolyubov equation \eqref{Bogolyubov}. Note that the equation is written taking as unit of length the characteristic length $\ell$ of the potential in the case $\phi=\phi_S$ soft
or the Debye length $L_D$ \eqref{def:Debye} for the Coulomb potential. Therefore, the following estimate proves that the
characteristic range of interaction is given by $\ell$ or $L_D$ respectively. Furthermore, we find that the decay rate of the Bogolyubov correlations differs from the decay
rate of the potential.
 
\begin{theorem}[Screening estimate for the Bogolyubov correlations] \label{thm:lengthdecay}
	Let $f$ be a function that satisfies the  Assumptions \ref{ass:regdecay}-\ref{Ass:onepart1} and $\phi$ 
	be either Coulomb potential or a soft potential. We recall the definition of the impact parameter $b$  and
	the distance to collision $d$, as well as $d_-$ (cf. \eqref{def:Proj}).
	Then for $x\in \Reals^3$, and $v_1$,$v_2\in K$ varying on a compact set $K\subset \Reals^3$ the following estimate holds:
	\begin{align} \label{est:gB}
	|g_B(x,v_1,v_2)| \leq \frac{C(K,\delta)}{|v_r|} \frac{1}{|b|+d_-}
	\frac{1}{(1+|b|+d_-)^{\gamma-\delta}},\quad  \text{for $\delta>0$.}
	\end{align}
	If $\phi=\phi_C$, we can choose $\gamma=1$ for $f$ behaving like a Maxwellian in the sense of Assumption \ref{Ass:Maxwellian}, and
	$\gamma=0$ for $f$ satisfying Assumption \ref{Ass:Exponential}. For $\phi=\phi_S$ the statement holds for $\gamma=1$ and $C(K,\delta)$
	can be chosen independently of $K$.	
\end{theorem}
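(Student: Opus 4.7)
The plan is to begin from the explicit Fourier-Laplace representation of $g_B$ obtained by the Lenard/Oberman-Williams method established in the existence proof of Theorem~\ref{thm:wellpos}. That representation expresses $\hat{g}_B(k,v_1,v_2)$ as a rational function whose denominator is the product $\eps(k,k\cdot v_1)\,\overline{\eps(k,k\cdot v_2)}$ (with $\eps$ the dielectric function of Definition~\ref{def:dielectric}) and whose numerator involves the source $(\nabla_{v_1}-\nabla_{v_2})(f\otimes f)\hat{\phi}(k)$ together with certain self-consistent corrections. The pointwise estimate on $g_B(x,v_1,v_2)$ will then be extracted by a careful stationary-phase/integration-by-parts analysis of the inverse Fourier transform in $k$.

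Next I would exploit the straight-line geometry adapted to the relative velocity. Decomposing $x=b+d\,\vartheta_r$ with $b\perp v_r$ (cf.\ Notation~\ref{Not:Feps}) and correspondingly $k=k^\perp+(k\cdot\vartheta_r)\vartheta_r$, the factor $e^{ik\cdot x}=e^{ik^\perp\cdot b}\,e^{i(k\cdot\vartheta_r)d}$ disentangles the $b$- and $d$-directions. The prefactor $|v_r|^{-1}(|b|+d_-)^{-1}$ will arise from resolving the transport operator $(v_1-v_2)\cdot\nabla_x$ backwards along characteristics subject to the Bogolyubov boundary condition \eqref{bogcond}: this amounts to integrating the source along a half-line of the characteristic flow, whose effective length is controlled by $|b|+d_-$ once the screening cut-off is in place, and the Jacobian of the coordinate change produces the $1/|v_r|$. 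Restricting $v_1,v_2\in K$ compact gives uniform bounds on the smooth dependence of all $v$-dependent coefficients.

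The additional factor $(1+|b|+d_-)^{-\gamma+\delta}$ is the genuine screening effect and is read off from the low-frequency behavior of $1/\eps$. For soft potentials $\phi_S\in\Sch$, $\hat{\phi}$ is Schwartz and Assumption~\ref{Ass:onepart1} guarantees $|\eps|\geq c_0>0$ uniformly, so repeated integration by parts in $k$ yields the claim with $\gamma=1$ and a constant independent of $K$. The Coulomb case is the main difficulty: $\hat{\phi}_C(k)=|k|^{-2}$ is singular at $k=0$, and although Assumption~\ref{Ass:onepart1} still rules out zeros of $\eps$, the real part $\Re\,\eps(k,-|k|u)=|k|^2-\alpha(\chi,u)$ vanishes as $|k|\to 0$ precisely at the Langmuir roots $u_0^\pm(k,\chi)$ of Notation~\ref{not:psidef}, while $\Im\,\eps$ is comparably small there.

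The hard part is therefore the analysis of the inverse Fourier integral in a neighbourhood of the Langmuir roots for $|k|\to 0$. My strategy is to introduce the coordinate $y$ implicit in the family $\Psi^\pm(k,\chi,y)=u_0^\pm+y\,\partial_uF/\partial_u\alpha$ of \eqref{def:Psidef}, which straightens the level set $\{\Re\,\eps=0\}$ and in which Assumptions~\ref{Ass:Exponential} or \ref{Ass:Maxwellian} (the estimates \eqref{eq:exp1}--\eqref{eq:exp3} and \eqref{eq:max3}) provide uniform $C^6$ control of the transformed integrand. Localising in $k$ near the origin via a smooth cut-off, standard oscillatory-integral bounds based on repeated integration by parts in $y$ and in the angular variable $\chi\in S^2$ produce a decay $(1+|b|+d_-)^{-\gamma+\delta}$; the distinction $\gamma=1$ versus $\gamma=0$ reflects the extra factor of $|k|$ made available in the Maxwellian estimate \eqref{eq:max3}, which corresponds to the faster vanishing of $\partial_uF(\chi,\Psi^\pm)$ for Gaussian-decaying tails. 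The small loss $\delta$ absorbs logarithmic corrections that appear when one attempts to optimise the number of integrations by parts against the size of the support of the cut-off; the complementary high-frequency region $|k|\geq r$ is handled by the same trick as in the soft-potential case since $\eps$ is bounded below there by Assumption~\ref{Ass:onepart1}.
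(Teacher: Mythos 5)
Your proposal follows essentially the same route as the paper: the paper writes $\hat{g}_B=(k\cdot v_r-i0)^{-1}\hat{\Gamma}$, inverts the $(k\cdot v_r-i0)^{-1}$ factor as a half-line convolution along $\vartheta_r$ (which is exactly your characteristics argument and produces the $|v_r|^{-1}(|b|+d_-)^{-1}$ prefactor), splits $k$ into low and high frequencies, isolates the Langmuir-root region via the $\Psi^\pm$ change of variables with the $C^6$ bounds of Assumptions \ref{Ass:Exponential}/\ref{Ass:Maxwellian} to expand $\hat{h}_B=-f+|k|^\gamma\hat{h}_{B,0}$, and converts the resulting Hölder regularity of $\hat{\Gamma}$ at $k=0$ into $(1+|x|)^{-(3-\delta)}$ or $(1+|x|)^{-(2-\delta)}$ decay via Lemmas \ref{lemholder}--\ref{cordecay}. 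The only point worth making explicit in your write-up is the cancellation of the bare source $(\nabla_{v_1}-\nabla_{v_2})(ff)$ against the leading $-f(v)$ term of $\hat{h}_B$ in $\hat{\Gamma}$, which is what upgrades the naive $|k|^{-2}$ singularity to the stated one; otherwise the plan matches the paper's proof.
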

More precise estimates can be found in the Theorems \ref{thm:screen} and \ref{thm:screensoft}.

The derivation
of the Balescu-Lenard equation proposed by Bogolyubov postulates that steady states do not only exist, but are also
stable in microscopic times. More precisely, Bogolyubov's argument requires that the fluxes in $f_1$ induced
by the function $g_2$ (cf. \eqref{eq:truncBBGKY}) converge to the fluxes associated to the equilibrium correlations $g_B[f_1]$.
In the case of soft potential interaction, we prove the stability of the equilibrium correlations if
$f_1$ in \eqref{eq:truncBBGKY} is assumed to be time-independent.

\begin{theorem} \label{thm:stability}
	\label{thm:distrconv} Let $\phi$ be a soft potential and $f$ satisfy the strong stability  Assumption \ref{Ass:onepart2}. 
	Further let $\tilde{g}_{0}(\xi_1,\xi_2)=g_0(x_1-x_2,v_1,v_2)$, $g_0 \in \Sch((\Reals^3)^3)$ be 
	translation invariant and symmetric:
	\begin{align}
		\tilde{g}_0(\xi_1,\xi_2) 		&= \tilde{g}_0(\xi_2,\xi_1) 		&\quad \text{for all $\xi_1,\xi_2 \in \Reals^3 \times \Reals^3$, }		\label{eq:g0symm} \\
		\tilde{g}_0(x_1,v_1,x_2,v_2)	&= \tilde{g}_0(x_1+a,v_1,x_2+a,v_2) &\quad \text{for all $x_1,x_2,a,v_1,v_2 \in \Reals^3$. } 	\label{eq:g0trans}
	\end{align}
	Consider the function $\tilde{g}(t):=(\G(t) \tilde{g}_0)$ given by \eqref{eq:Bogolyubovprop}, which (using \eqref{eq:g0trans}) we identify with 
	\begin{align} \label{eq:greldist}
		g(t,x_{1}-x_{2},v_{1},v_{2})=	\tilde{g}(t,x_{1},v_{1},x_{2},v_{2}).
	\end{align}
	Then we have $g,\partial_t g \in  C({\mathbb{R}}^{+},\s({\mathbb{R}}^{9}))$ and $g$ solves the Bogolyubov equation \eqref{BBGKYint3g} with
	initial datum $g_0$. 
	The steady state $g_B$ given in Theorem \ref{thm:wellpos} is linearly stable, more precisely:
	\begin{align}\label{eq:distrstab}
		g(t) \longrightarrow g_B \quad \text{ in $D'(\Reals^3 \times \Reals^3 \times \Reals^3)$ as $t\rightarrow \infty$}. 
	\end{align}
	Furthermore, the associated fluxes in the space of velocities are stable, i.e. for all $v \in{\mathbb{R}}^{3}$ we have: 
	\begin{align} \label{eq:convfluxes}
	 \nabla_{v} \cdot\left( \int\nabla\phi(x) g(t,x,v,v') \ud{v'} \ud{x}\right)	\longrightarrow  \nabla_{v} \cdot\left( \int\nabla\phi(x)
	g_{B}(x,v,v') \ud{v'} \ud{x}\right) \quad \text{as $t\rightarrow \infty$}.
	\end{align}
\end{theorem}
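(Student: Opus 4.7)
The plan is to use the explicit Fourier--Laplace representation \eqref{hvlas} of the Vlasov propagator and to exploit the analytic stability Assumption~\ref{Ass:onepart2}, which permits shifting the Bromwich contour into $\{\Re z<0\}$. A convenient preliminary observation is that free transport preserves $S$: since $S=\delta(x_1-x_2)\delta(v_1-v_2)f(v_1)$, the $\delta(v_1-v_2)$ factor makes the shift by $(v_1-v_2)t$ inert, so $T(t)[S]=S$ and
$$\G(t)[\tilde g_0]=\bigl(\V_{\xi_1}(t)\V_{\xi_2}(t)[S]-S\bigr)+\V_{\xi_1}(t)\V_{\xi_2}(t)[\tilde g_0].$$
This decomposition splits the evolution into a structural piece driven by the $\delta$-source and a Schwartz transient.

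For the regularity and the verification that $g$ solves \eqref{BBGKYint3g}, I would insert \eqref{hvlas} twice; the operators $\V_{\xi_1}$ and $\V_{\xi_2}$ commute because they differentiate disjoint sets of variables and their self-consistent fields appear as mutual parameters. For the $\tilde g_0$ summand, the lower bound \eqref{ass:diel} combined with the Cauchy estimate \eqref{eq:Cauchy} transfers Schwartz bounds through the double inverse Fourier--Laplace transform uniformly in $t$. For the $S$ summand, the singular part $\hat h_0/(z+ikv)$ of \eqref{hvlas} integrates back to $T(t)[S]=S$ and is killed by the $-S$ subtraction, so only the induced term $iQ\tilde\varrho/(z+ikv)$ survives, with $Q=k\,\nabla f\,\hat\phi\in\s$ by Assumption~\ref{def:potentials}. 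Differentiating in $t$ reproduces the linearized Vlasov equations in each variable; identifying the coupling terms with the $h[g]$ integrals in \eqref{BBGKYint3g}, and recognising the source $(\nabla_{v_1}-\nabla_{v_2})(f\otimes f)\nabla\phi$ as arising from $\partial_t\bigl(\V_{\xi_1}\V_{\xi_2}[S]\bigr)\big|_{t=0}$ after the $-S$ cancellation, completes the equation verification. The initial value $g(0)=\tilde g_0$ is immediate since the bracket vanishes at $t=0$ and $\V_{\xi_i}(0)=\Id$.

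For the distributional convergence \eqref{eq:distrstab}, I work on the Fourier side in the relative spatial variable, permitted by the translation invariance \eqref{eq:g0trans}. The Schwartz transient $\V_{\xi_1}(t)\V_{\xi_2}(t)[\tilde g_0]$ tends to zero in $\D'$: after one application of $\V(t)$ the induced density decays exponentially pointwise in $k$ by the contour shift allowed by \eqref{ass:diel}, while the free-streaming contribution vanishes in $\D'$ by dispersion against Schwartz test functions in $v$; a second application preserves this. For $\V_{\xi_1}(t)\V_{\xi_2}(t)[S]-S$, the same Fourier--Laplace inversion yields a decomposition $\hat g(t,k,\cdot)=\hat g_B(k,\cdot)+R(t,k,\cdot)$ with $R\to 0$ pointwise in $k$; the identification of the $t$-independent piece with the Fourier representation of $g_B$ produced in Theorem~\ref{thm:wellpos} is by direct comparison with the Oberman--Williams formula, since both are expressed through the same dielectric factor $\eps(k,-iz)^{-1}$ evaluated on the imaginary axis. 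Distributional convergence then follows by dominated convergence after pairing with $\hat\psi$. The flux convergence \eqref{eq:convfluxes} is obtained identically: the Schwartz decay of $\widehat{\nabla\phi}(k)$ provides the integrable majorant in $k$.

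The main obstacle is uniformity in $k$. Assumption~\ref{Ass:onepart2} gives a strip of width $c|k|$ in the original Laplace variable, hence an effective decay rate $c|k|t$ that degenerates at $k=0$; additionally, the resolvent poles $z=-ikv_i$ coalesce at the shifted contour as $k\to0$. A careful Taylor analysis of $\eps(k,-i|k|z)$ near $k=0$, leveraging the Schwartz regularity of the Radon transform $F$ (via \eqref{eq:Cauchy}) and of $\hat\phi$ (Assumption~\ref{def:potentials}), is needed to ensure that the remainder $R(t,k,\cdot)$ and its analogue in the Schwartz-transient analysis stay dominated by functions integrable in $k$ uniformly in $t$, so that dominated convergence applies in both \eqref{eq:distrstab} and \eqref{eq:convfluxes}.
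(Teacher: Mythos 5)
Your overall architecture matches the paper's: the same splitting of $\G(t)[\tilde g_0]$ into the $S$-driven piece $\V_{\xi_1}\V_{\xi_2}[S]-T(t)[S]$ and the transient $\V_{\xi_1}\V_{\xi_2}[\tilde g_0]$, the double Fourier--Laplace representation, commutativity of $\V_{\xi_1}$ and $\V_{\xi_2}$, and the contour shift to $\Re z=-c|k|$ permitted by Assumption \ref{Ass:onepart2}. The regularity/equation-verification part and the vanishing of the transient are essentially the paper's Lemmas \ref{lem:evdirac}--\ref{lem:Vlasovcomm} and the $\Lambda\to 0$ lemma.

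However, there is a genuine gap in how you identify the limit of the $S$-driven piece. You assert a decomposition $\hat g(t,k,\cdot)=\hat g_B(k,\cdot)+R(t,k,\cdot)$ with $R\to 0$ \emph{pointwise} in $k$, followed by dominated convergence. This is structurally wrong. After shifting the Bromwich contour, the contour integrals do decay like $e^{-c|k|t}$, but the residues picked up at the poles $z=-ikv$ are oscillatory in time: they contain factors such as
\begin{align*}
\frac{1}{k(v'-v)}\left(\frac{e^{-ikvt}}{\eps(k,-kv)}-\frac{e^{-ikv't}}{\eps(k,-kv')}\right),
\end{align*}
and products of two such residues. These have no pointwise limit and do not tend to zero; their limit exists only in $\D'$ and equals a nonzero contribution supported on the resonant set $\{k\cdot(v_1-v_2)=0\}$ (the paper's Lemma \ref{lem:dirac}: $\PV\, e^{-ik(v_1-v_2)t}/(k(v_1-v_2))\to -i\pi\delta(k(v_1-v_2))$). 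The $t$-independent part of the residues, taken alone, is a principal value $1/(k(v_1-v_2))$, which is a \emph{different} distribution from the $1/(k(v_1-v_2)-i0)$ appearing in the Oberman--Williams formula for $\hat g_B$; only the sum of the principal value and the $\delta$-contribution from the oscillatory terms, combined via the Plemelj formula, reproduces the $-i0$ prescription. If you discard the oscillatory residues as a vanishing remainder, your "direct comparison" with $\hat g_B$ fails, and the same defect propagates to the flux limit, where the $\delta(k\cdot(v_1-v'))$ in the Balescu--Lenard kernel arises precisely from these oscillatory limits. The missing ingredient is therefore the stationary-phase/Plemelj analysis of the residual terms (the paper's expansion $\Psi_\infty=\sum_{j,l}\Psi_\infty^{j,l}$, the cancellation of $\Psi_\infty^{1,1}$ with $\Psi_\infty^{2,1}$, and the recombination of the remaining terms), not the uniform-in-$k$ domination you flag as the main obstacle.
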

This theorem is proved in Section \ref{sec:stability}.

\subsection{Auxiliary results}

The following lemmas provide a version of the well-known Plemelj-Sokhotski formula, which allows us to write the original function $f$ in terms of
$P^+[f]$ and $P^-[f]$ as introduced in Definition \ref{def:Pdefs}. In a more general
setting, such formulas are discussed in \cite{muskhelishvili_singular_1992}. 
\begin{lemma} \label{Lem:P}
	The operators $P^\pm$ and $P$ are bounded from $L^2$ to $L^2$. Let $f\in L^2(\Reals;\Reals)$,
	then we have $P^+[f]=\ol{P^-[f]}$. Furthermore for $f\in L^2 (\Reals;\Complex)$ there holds:
	\begin{align}\label{eq:PpmLem}	
	f= \frac{1}{2\pi i} (P^+[f] - P^-[f]).
	\end{align}
\end{lemma}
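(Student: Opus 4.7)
The plan is to prove this classical Plemelj–Sokhotski identity by first establishing a pointwise/distributional splitting of the Cauchy kernel and then lifting it to $L^2$ via Fourier analysis. The key algebraic observation is the decomposition
\begin{align*}
\frac{1}{x'-x\mp i\delta} = \frac{x'-x}{(x'-x)^2+\delta^2} \pm i\frac{\delta}{(x'-x)^2+\delta^2}.
\end{align*}
As $\delta\to 0^+$, the real part converges to the principal value kernel defining $P$, while the imaginary part is the Poisson-type approximation to the identity: $\frac{\delta}{y^2+\delta^2}\to \pi\,\delta_0$ in the sense of distributions (and the convolution with an $L^2$ function converges in $L^2$ to $\pi f$, by a standard approximation-to-the-identity argument). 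This would yield the operator identity
\begin{align*}
P^{\pm}[f] = P[f] \pm i\pi f,
\end{align*}
from which everything else drops out.

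The step I would do next is $L^2$-boundedness of $P$. The cleanest route is Fourier analysis: if $f\in L^2(\Reals)$, then by Plancherel and a direct computation of the Fourier transform of the (regularized) kernel $\frac{y}{y^2+\delta^2}$, one sees that $P$ is (up to a constant multiple) the Hilbert transform, whose $L^2$-boundedness corresponds to multiplication by a bounded function ($-i\pi\operatorname{sign}(\xi)$) in Fourier space. This simultaneously establishes that the limits in \eqref{def:Ppm} exist in $L^2$ and that $P$, hence $P^{\pm}=P\pm i\pi\,\mathrm{Id}$, are bounded on $L^2$.

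The two remaining assertions are then immediate. For real-valued $f$, the operator $P$ preserves real-valuedness (its Fourier multiplier is purely imaginary and odd, mapping real-even to real and real-odd to real), so
\begin{align*}
\overline{P^-[f]} = \overline{P[f] - i\pi f} = P[f] + i\pi f = P^+[f].
\end{align*}
Subtracting the two forms of $P^\pm[f]=P[f]\pm i\pi f$ gives $P^+[f]-P^-[f] = 2\pi i f$, which is exactly \eqref{eq:PpmLem}.

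The only genuine technicality is the passage $\delta\to 0^+$ from the integrals defining $P^{\pm}$ to the distributional limit $P\pm i\pi\,\mathrm{Id}$ in the $L^2$ sense; I expect this to be the main obstacle because one has to handle the principal-value part and the approximate-identity part simultaneously. I would control them separately: the Poisson piece $\frac{\delta}{(\cdot)^2+\delta^2}*f\to \pi f$ in $L^2$ by the standard approximation-to-identity estimate together with dominated convergence on the Fourier side, while the real piece $\frac{(\cdot)}{(\cdot)^2+\delta^2}*f\to P[f]$ in $L^2$ by writing its Fourier multiplier explicitly and invoking dominated convergence on the bounded, pointwise-convergent multiplier $-i\pi\operatorname{sign}(\xi)(1-e^{-\delta|\xi|})\to -i\pi\operatorname{sign}(\xi)$.
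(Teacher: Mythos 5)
Your proof is correct, and it reaches the same two classical facts the paper relies on, but by a somewhat more self-contained route. The paper simply cites that $P^\pm$ and $P$ are Fourier multiplication operators and reads off both the $L^2$-boundedness and the identity \eqref{eq:PpmLem} from Plancherel; you instead derive the Plemelj splitting $P^{\pm}=P\pm i\pi\,\operatorname{Id}$ in physical space, from the decomposition of the Cauchy kernel into its principal-value part and the Poisson approximation of the identity $\frac{\delta}{y^2+\delta^2}\to\pi\delta_0$, and only invoke the Fourier multiplier for the boundedness of $P$. This effectively proves the "classical result" the paper cites, and your treatment of the conjugation identity (realness of $P[f]$ for real $f$, hence $\ol{P^-[f]}=P[f]+i\pi f=P^+[f]$) is a clean alternative to the paper's observation that $P^+[f]=\ol{P^-[f]}$ already holds for every $\delta>0$. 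Two cosmetic slips, neither of which affects the argument: with the paper's kernel $\frac{1}{x'-x}$ (note the order of $x$ and $x'$) and Fourier convention the symbol of $P$ is $+i\pi\sign(\xi)$ rather than $-i\pi\sign(\xi)$, though you only use that it is bounded, purely imaginary and odd; and the regularized multiplier should carry the factor $e^{-\delta|\xi|}$ rather than $(1-e^{-\delta|\xi|})$ --- as written your expression tends to $0$ as $\delta\to 0^+$, whereas the intended dominated-convergence argument applies verbatim to $\mp i\pi\sign(\xi)e^{-\delta|\xi|}\to\mp i\pi\sign(\xi)$.
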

\begin{proof}
	By a classical result, $P^\pm$ are Fourier multiplication operators with
	symbols $\pm 2 \pi i \cf_{\xi>0}$. The same holds
	for $P$ with multiplier $ i \pi \sign{\xi}$. 
	Combining this with Plancherel's theorem, we find that the operators are bounded on $L^2$
	and satisfy the identity \eqref{eq:PpmLem}. For real-valued functions $f$, the
	identity $P^+[f]=\ol{P^-[f]}$ holds, since these operators are obtained in a limit $\delta\rightarrow 0$ (cf. \eqref{def:Ppm}) and the
	identity holds for all $\delta>0$.
\end{proof}
\begin{lemma} \label{lem:wienuniq}
	Let $f\in L^2(\Reals)$, and $q^+$ be analytic on the upper half plane, $q^-$ analytic
	on the lower half plane and decaying: $|q^\pm(z)|
	\rightarrow 0$, $|z| \rightarrow \infty$. Assume that $\lim_{\delta
		\rightarrow 0^+} q^\pm(\cdot\pm i\delta)$ exists in $L^2(\Reals)$ and:
	\begin{align} \label{eq:qmatching}
		\lim_{\delta \rightarrow 0^+} \frac{1}{2\pi i}
		\left(q^+(\cdot+i\delta)-q^-(\cdot-i\delta)\right) = f.
	\end{align}
	Then we have: $P^\pm[f] = q^\pm$.
\end{lemma}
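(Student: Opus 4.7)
The plan is to view $q^\pm$ and $P^\pm[f]$ as boundary values of functions holomorphic on the upper/lower half planes, form their difference, and show it extends to an entire function that must vanish by Liouville's theorem.

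First I would extend $P^+[f]$ and $P^-[f]$ to analytic functions on $\{\Im z > 0\}$ and $\{\Im z < 0\}$ respectively by setting
\begin{align*}
P^\pm[f](z) := \int_{\Reals} \frac{f(x')}{x' - z}\, dx',
\end{align*}
which is holomorphic off the real axis, decays as $|z| \to \infty$ because $f \in L^2(\Reals)$, and whose $L^2$ boundary values from the correct side agree with those of Definition \ref{def:Pdefs}. Lemma \ref{Lem:P} already provides the Plemelj-Sokhotski jump $P^+[f] - P^-[f] = 2\pi i f$ in the $L^2$ sense.

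Next I would define the candidate entire function
\begin{align*}
R(z) := \begin{cases} q^+(z) - P^+[f](z), & \Im(z) > 0, \\ q^-(z) - P^-[f](z), & \Im(z) < 0. \end{cases}
\end{align*}
By hypothesis, $R$ is holomorphic on $\Complex \setminus \Reals$, decays at infinity on each half-plane, and its $L^2$ boundary traces from above and below both exist. Subtracting the matching condition \eqref{eq:qmatching} from the identity \eqref{eq:PpmLem} shows that these traces agree, i.e.\ the jump of $R$ across the real line vanishes in $L^2$.

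The main step is then to conclude that $R$ extends to an entire function. I would use a Morera-style argument: for any triangle $\Delta$ straddling the real axis, shift the horizontal portion slightly up and down by $\delta>0$, apply Cauchy's theorem on each of the two resulting pieces, and pass to the limit $\delta\to 0^+$. The contributions on the two shifted horizontal segments combine into the boundary jump, which vanishes by the matching, so $\oint_{\partial \Delta} R\,d\zeta = 0$; Morera's theorem then gives entirety of $R$. Since $R(z)\to 0$ as $|z|\to\infty$, Liouville forces $R\equiv 0$, and passing to boundary values yields $q^\pm = P^\pm[f]$.

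The hard part will be making the Morera/matching step rigorous with only $L^2$ convergence of boundary traces rather than pointwise convergence. I expect to handle this either by a mollification on the horizontal sides (testing the integral against a compactly supported smooth weight and then removing the mollifier) or, equivalently, by noting that $\bar\partial R$ is a distribution supported on $\Reals$ whose action against any test function is the $L^2$ boundary jump, which is zero; hence $R$ is a distributional holomorphic function on $\Complex$, and by Weyl's lemma (or elliptic regularity for $\bar\partial$) actually holomorphic.
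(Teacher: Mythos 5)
Your proposal is essentially the paper's argument: both form the difference $R=q^\pm-P^\pm[f]$ on the two half-planes, continue it across the real line using the vanishing of the $L^2$ boundary jump (the paper does this via an antiderivative that is continuous across $\Reals$, you via Morera or the $\bar\partial$-distribution argument — these are interchangeable), and conclude by Liouville. The one point where you are less careful than the paper is the final Liouville step: the decay hypothesis and entirety alone do not immediately bound $R$ on the strip $\{|\Im z|\le 1,\ |\Re z|\ \text{large}\}$, and the paper closes this by combining the mean value property with the uniform $L^2$ bounds of $q^\pm$ and $P^\pm[f]$ on horizontal lines.
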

\begin{proof}
	We consider the differences $\zeta^\pm := q^\pm -P^\pm[f]$. The functions are
	analytic in the upper, respectively the lower half-plane and decay as
	$|z|\rightarrow \infty$, $|\Im(z)|\geq 1$. We claim the function $\zeta$, given by $\zeta^+$ on the 
	upper half-plane and $\zeta^-$ on the lower half-plane, is an entire function. To see this, fix $z_0\in \Complex$ arbitrary and consider $Z(z):=
	\int_{\gamma{[z_0,z]}} \zeta(z') d\gamma(z')$, where $\gamma{[z_0,z]}$ is an arbitrary curve
	connecting $z_0$ and $z$. 
	Then $Z$ is an analytic function above and below and is continuous at 
	the real line by \eqref{eq:PpmLem} and \eqref{eq:qmatching}, hence an entire function. Using $Z'=\zeta$, we infer that
	$\zeta$ is an entire function as well. Outside the strip with $|\Im(z)|\leq
	1$, $\zeta$ is bounded and decays for $|z|\rightarrow \infty$. On the strip,
	we use the $L^2$ convergence of $P^\pm[f]$ and $q^\pm$ together with the mean
	value property of $\Re(\zeta), \Im(\zeta)$ to obtain:
	\begin{align*}
	|\zeta(z)| 	&\leq C \int_{B_1(z)} |\zeta(z')| \ud{z'} \leq C \left(\|f\|_{L^2} + \sup_{|r|<2} \|q^\pm(\cdot \pm ir)
			\|_{L^2(\Reals)}\right) \leq C.
	\end{align*}
	So $\zeta $ is a bounded entire function, hence constant. By $ \lim
	_{R\rightarrow \infty} \zeta(iR) = 0$ we get $\zeta\equiv 0$ as claimed.
\end{proof}	

We make Assumption \ref{Ass:onepart1} to ensure that the dielectric function $\eps$ does not vanish.
In many arguments later we will make use of quantitative lower bounds on $|\eps|$, one of which is
provided by the following lemma.

\begin{lemma}[Estimate on the degeneracy of $\eps$] \label{lem:dielectric}
	Let $f$ satisfy  the Assumptions \ref{ass:regdecay}-\ref{Ass:onepart1}. If $\phi=\phi_S$ is a soft potential, there exists $c_1>0$ such that for all $k\in \Reals^3$
	and $v\in \Reals^3$ we have:
	\begin{align}
	|\eps(k,-k\cdot v)| \geq c_1>0.\label{est:epsSoft}	
	\end{align} 
	If $\phi=\phi_C$ is the Coulomb potential, for any $K\subset \Reals^3$ compact and $\delta>0$ we have:
	\begin{align}
	|\eps(k,-k\cdot v)| &\geq c_1(K)>0, 	\quad &\text{for all $0\neq k\in \Reals^3$, $v\in K$} \label{est:epsK}\\
	|\eps(k,-k\cdot v)| &\geq c_2(\delta)>0, \quad &\text{for all $|k|\geq \delta$, $v\in \Reals^3$.} \label{est:epseps}
	\end{align}
\end{lemma}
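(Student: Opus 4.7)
The plan is to prove all three estimates by the same contradiction-plus-compactness scheme: assume a sequence $(k_n,v_n)$ with $|\eps(k_n,-k_n\cdot v_n)|\to 0$, extract limits of the three relevant parameters $|k_n|$, $\omega_n=k_n/|k_n|$, and $u_n=\omega_n\cdot v_n$, and rule out every possibility using (i) non-vanishing of $\eps$ from Assumption \ref{Ass:onepart1}, (ii) continuity of the map $(k,\omega,u)\mapsto \eps(k,-|k|u)$, and (iii) asymptotic behaviour of $\eps$ as the parameters escape to infinity.

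The key preliminary input is the following asymptotic behaviour of $P^-[\partial_u F(\omega,\cdot)](u)$, which I would record first. Since $f$ satisfies Assumption \ref{ass:regdecay}, the Radon transform $F(\omega,\cdot)$ lies in $\Sch(\Reals)$ with Schwartz bounds uniform in $\omega\in S^2$; in particular so does $\partial_u F(\omega,\cdot)$. By Plemelj-Sokhotski (cf. Lemma \ref{Lem:P}),
\begin{align*}
P^-[\partial_u F(\omega,\cdot)](u) = P[\partial_u F(\omega,\cdot)](u) - i\pi\,\partial_u F(\omega,u),
\end{align*}
and an integration by parts in the principal-value integral shows $|P[\partial_u F(\omega,\cdot)](u)|\le C/|u|$ uniformly in $\omega$, while $\partial_u F(\omega,u)$ decays faster than any power. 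Hence $P^-[\partial_u F(\omega,\cdot)](u)\to 0$ as $|u|\to\infty$, uniformly in $\omega$, and the map $(\omega,u)\mapsto P^-[\partial_u F(\omega,\cdot)](u)$ is continuous on $S^2\times\Reals$.

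With this in hand the three estimates follow by examining the possible limits. For the soft-potential case \eqref{est:epsSoft}, both $\hat\phi_S(k)\to 0$ as $|k|\to\infty$ and $P^-[\partial_u F(\omega,\cdot)](u)\to 0$ as $|u|\to\infty$, so $\eps(k,-k\cdot v)\to 1$ whenever $|k|+|\omega\cdot v|\to\infty$; any violating sequence thus remains in a compact subset of $(k,\omega,u)$-space, and Assumption \ref{Ass:onepart1} together with continuity yields the uniform positive lower bound. For the Coulomb estimate \eqref{est:epsK} on a compact set $K$, the values $u_n=\omega_n\cdot v_n$ are bounded; the limit $|k_n|\to\infty$ again gives $\eps\to 1$, while for $|k_n|\to 0$ I use that plugging $k=0$ into \eqref{fstable1} forces $P^-[\partial_u F(\omega,\cdot)](u)\neq 0$ for every $(\omega,u)$, so by continuity $|P^-[\partial_u F(\omega,\cdot)](u)|\ge c(K)>0$ on the compact set $S^2\times\{u:|u|\le\sup_{v\in K,\omega\in S^2}|\omega\cdot v|\}$, giving $|\eps(k,-|k|u)|\ge c(K)/|k|^2-1\to\infty$. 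The intermediate range $|k_n|\in[\delta',\delta'']$ is compact, so continuity plus Assumption \ref{Ass:onepart1} finishes the argument.

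For \eqref{est:epseps} with $|k|\ge\delta$, we have $\hat\phi_C(k)=1/|k|^2\le 1/\delta^2$ bounded; both $|k_n|\to\infty$ and $|u_n|\to\infty$ then force $\eps\to 1$, leaving a compact parameter region where Assumption \ref{Ass:onepart1} and continuity conclude. The most delicate point is checking continuity of $\eps$ at interior points, since the definition \eqref{def:eps} involves the boundary value operator $P^-$; this reduces to the continuity of $(\omega,u)\mapsto P^-[\partial_u F(\omega,\cdot)](u)$ above, combined with continuity of $\hat\phi$ (which holds for $\phi_S$ everywhere and for $\phi_C$ on $\Reals^3\setminus\{0\}$). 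The expected main obstacle is precisely the uniform decay $|P^-[\partial_u F(\omega,\cdot)](u)|\to 0$ uniformly in $\omega$, but as sketched above it follows from Schwartz bounds on $\partial_u F(\omega,\cdot)$ that are uniform in $\omega\in S^2$.
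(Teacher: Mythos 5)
Your proof is correct and follows essentially the same route as the paper's: a compactness/continuity argument combining the pointwise nonvanishing from Assumption \ref{Ass:onepart1} (including its $k=0$ consequence $P^-[\partial_u F]\neq 0$ for the small-$|k|$ Coulomb regime) with the trivial limits $\eps\to 1$ as $|k|$ or $|\omega\cdot v|$ escapes to infinity. You are in fact slightly more careful than the paper, which only invokes boundedness of $P^-[\partial_u F]$ where your uniform decay as $|u|\to\infty$ is what is actually needed for the infimum over the noncompact parameter set to be attained.
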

\begin{proof}
	Let $\phi=\phi_C$ be the Coulomb potential. Then we have: 
	\begin{align}\label{eq:epsmodul}
	|\eps(k,-k\cdot v)| &= \left| 1 - \frac{1}{|k|^2} P^-[\partial_u F(\omega,\cdot)](\omega \cdot v)\right|. 
	\end{align}
	Since $|P^-[\partial_u F(\omega,\cdot)]|$ is bounded,  $|\eps(k-k\cdot v)|$ attains its minimum
	on $(k,v)\in \left(\Reals^3 \setminus B_\delta(0)\right)  \times \Reals^3$ for any $\delta>0$. This minimum
	is nonzero by \eqref{fstable1}, so \eqref{est:epseps} holds. 
	
	On the other hand, since $P^-[\partial_u F] \neq 0$ (cf. \eqref{fstable1}), the mapping  $v \mapsto \inf_{k\in \Reals^3}|\eps(k,-k\cdot v)|$ is continuous, so
	\eqref{est:epsK} holds on compact sets $K$.
	
	The estimate \eqref{est:epsSoft} for soft potentials is immediate.
\end{proof}
\begin{remark} \label{rem:couldiel}
	In the Coulomb case, the estimates \eqref{est:epsK}-\eqref{est:epseps} cannot be improved, since it is known (cf. \cite{penrose_electrostatic_1960}) that:
	\begin{align*}
		\inf_{k\in \Reals^3, v\in \Reals^3} |\eps(k,-k\cdot v)|=0.
	\end{align*}  
\end{remark}

\begin{lemma}[Asymptotics of $\alpha(\chi,u)$] 
	Let $f$ satisfy the Assumptions \ref{ass:regdecay}-\ref{Ass:onepart1}.	
	We recall the function $\alpha$ introduced in \eqref{def:alpha}. There exist
	constants $C,R>0$ such that for $|u|\geq R$:
	\begin{align}
	|\partial_u^j \alpha(\chi,u)- \frac{(-1)^{j}(j+1)!}{u^{j+2}}|&\leq \frac{C}{u^{j+3}} \quad \text{for $j\in \Naturals_0$, $j \leq 6$,} \label{est:alphabounds} \\
	|\partial_\chi^\ell \partial_u^j \alpha(\chi,u)|&\leq \frac{C}{u^{j+3}} \quad \text{for $j\in \Naturals_0$, $\ell\in \Naturals$, $j+\ell\leq 6$  } \label{est:alphabounds2}.
	\end{align}
\end{lemma}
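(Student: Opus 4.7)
The strategy is to expand the PV integral representing $\alpha(\chi,u)=P[\partial_u F(\chi,\cdot)](u)$ in inverse powers of $u$ for large $|u|$, exploiting that $F(\chi,\cdot)$ and its $u$- and $\chi$-derivatives decay exponentially, which follows from Assumption~\ref{ass:regdecay} together with the pointwise bound $|v|\geq |\chi\cdot v|$.

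By Lemma~\ref{Lem:P}, $P$ acts as a Fourier multiplier in $u$ and hence commutes with $\partial_u$, so
\[
\partial_u^j \alpha(\chi,u)=\mathrm{PV}\!\int_{\mathbb{R}}\frac{\partial_{u'}^{j+1}F(\chi,u')}{u'-u}\,du'.
\]
I would insert the algebraic identity
\[
\frac{1}{u'-u}=-\sum_{n=0}^{j+1}\frac{(u')^n}{u^{n+1}}+\frac{(u')^{j+2}}{u^{j+2}(u'-u)}
\]
into this integral. Integration by parts (boundary terms vanish by exponential decay) shows that $\int(u')^n\partial_{u'}^{j+1}F\,du'=0$ for $n\leq j$ and equals $(-1)^{j+1}(j+1)!\int F(\chi,u')\,du'=(-1)^{j+1}(j+1)!$ for $n=j+1$, the latter because $F$ is the Radon transform of the unit-mass probability density $f$. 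This identifies the leading term $(-1)^j(j+1)!/u^{j+2}$ in \eqref{est:alphabounds}.

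What remains is to bound the remainder $u^{-(j+2)}I(u,\chi)$ where
\[
I(u,\chi):=\mathrm{PV}\!\int\frac{h(u')}{u'-u}\,du',\qquad h(u'):=(u')^{j+2}\partial_{u'}^{j+1}F(\chi,u'),
\]
by $O(|u|^{-(j+3)})$, which reduces to $|I(u,\chi)|\leq C/|u|$. I would split the integration into $\{|u'-u|\geq |u|/2\}$ (where $|u'-u|\geq|u|/2$ and $\|h\|_{L^1}<\infty$ give an immediate $O(1/|u|)$ bound) and $\{|u'-u|<|u|/2\}$ (where $|u'|\geq|u|/2$ forces both $h$ and $h'$ to be exponentially small in $|u|$; writing $h(u')=h(u)+[h(u')-h(u)]$, the PV of the constant vanishes by symmetry of the interval about $u$, while the bracket contributes at most $|u|\cdot\|h'\|_{L^\infty(|u'|\geq|u|/2)}$, which is exponentially small). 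Together this gives \eqref{est:alphabounds}.

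For the mixed derivative estimate \eqref{est:alphabounds2} I would differentiate the same expansion in $\chi$. The leading coefficient $\int F(\chi,u')\,du'=\int f(v)\,dv\equiv 1$ is $\chi$-independent, so $\partial_\chi^\ell$ annihilates it for $\ell\geq 1$, leaving only the remainder. Its $\chi$-derivatives are controlled by the same splitting applied to $(u')^{j+2}\partial_\chi^\ell\partial_{u'}^{j+1}F$, which still decays exponentially in $u'$ (since $\partial_\chi F$ equals, up to a derivative in $u$, the Radon transform of a first velocity moment of $f$, and iteration preserves exponential decay). The restriction $j+\ell\leq 6$ is exactly what is needed so that $\partial_\chi^\ell\partial_u^{j+2}F$ is well defined and uniformly exponentially bounded under the $C^8$ regularity asserted in \eqref{Ass:decayf}. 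The only real technical nuisance is the handling of the PV singularity near $u'=u$, but for $|u|\geq R$ large this region is completely dominated by the exponential decay of $h$, making the argument routine.
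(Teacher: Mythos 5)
Your proof is correct, but it runs along a genuinely different (if closely parallel) track from the paper's. The paper exploits that $P$ is the Fourier multiplier $i\pi\,\mathrm{sign}(\xi)$, writes $\partial_u^j\alpha$ as a Fourier inversion integral of $\mathrm{sign}(\xi)\,\F(\partial_u^{j+1}F)(\xi)$, and integrates by parts repeatedly in $\xi$; the asymptotic coefficients then appear as boundary terms $\partial_\xi^m\F(\partial_u^{j+1}F)(0)$ at the jump of the sign function, with the first nonvanishing one, $\partial_\xi^{j+1}\F(\partial_u^{j+1}F)(0)=i^{j+1}(j+1)!/(2\pi)^{1/2}$, supplying the leading term via the normalization $\int f=1$, and the remaining $\xi$-integral absorbed into the $O(u^{-(j+3)})$ error. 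You instead stay entirely in physical space: you expand the Cauchy kernel $1/(u'-u)$ in powers of $u'/u$, identify the same moments $\int (u')^n\partial_{u'}^{j+1}F\,du'$ directly by integration by parts in $u'$ (zero for $n\le j$, equal to $(-1)^{j+1}(j+1)!$ for $n=j+1$), and control the tail of the kernel expansion by the near/far splitting of the principal value, with the near region killed by the exponential decay of $F$ and the symmetric cancellation of the constant part. The two arguments compute identical quantities — the moments of $\partial_u^{j+1}F$ are precisely the Taylor coefficients of its Fourier transform at $\xi=0$ — and both handle the $\chi$-derivatives by observing that the leading coefficient is $\chi$-independent by \eqref{Ass:fint}. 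Your route is slightly more elementary (no Fourier inversion, no boundary-term bookkeeping at $\xi=0$), at the modest cost of the explicit PV splitting; the derivative count you need ($\partial_\chi^\ell\partial_u^{j+2}F$ for the near-region Lipschitz bound, hence $f\in C^8$ when $j+\ell\le 6$) matches Assumption \ref{ass:regdecay} exactly, so no hypotheses are overspent.
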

\begin{proof}
	The derivative $\partial_u^j$ can be taken inside the operator $P$:
	\begin{align}
	\partial^j_u \alpha(\chi,u) = P[\partial^{j+1}_u F(\chi,\cdot)](u).		
	\end{align}
	Using	that $P$ is a Fourier multiplication operator with multiplier $i \pi  \sign(\xi)$ we write:
	\begin{align*}
	\widehat{\partial^j_u \alpha(\chi,\cdot)}(\xi) = i \pi \sign(\xi) \F(\partial^{j+1}_u F(\chi,\cdot))(\xi). 
	\end{align*}
	Now we perform the Fourier inversion integral and integrate by parts:
	\begin{align*}
	\partial^j_u \alpha(\chi,u)&= -\int_{-\infty}^0 (\pi/2)^\frac12  i e^{i\xi\cdot u}  \F(\partial^{j+1}_u F(\chi,\cdot))(\xi) \ud{\xi}+
	\int_0^\infty (\pi/2)^\frac12  i e^{i\xi\cdot u}  \F(\partial^{j+1}_u F(\chi,\cdot))(\xi) \ud{\xi} \\
	&= (\pi/2)^\frac12 \int_{-\infty}^0   \frac{e^{i\xi\cdot u}}{u} \partial_\xi  \F(\partial^{j+1}_u F(\chi,\cdot))(\xi) \ud{\xi} + (\pi/2)^\frac12   \frac{1}{u}   \F(\partial^{j+1}_u F(\chi,\cdot))(0) \\
	-&(\pi/2)^\frac12 \int_{0}^\infty   \frac{e^{i\xi\cdot u}}{u} \partial_\xi  \F(\partial^{j+1}_u F(\chi,\cdot))(\xi) \ud{\xi} + (\pi/2)^\frac12   \frac{1}{u}   \F(\partial^{j+1}_u F(\chi,\cdot))(0)  .
	\end{align*}
	Since $\partial^{j+1}_u F$ is a derivative, we have $ \F(\partial^{j+1}_u F(\chi,\cdot))(\xi)=0$. Iterating the argument we find:
	\begin{align}
	\partial^j_u \alpha(\chi,u) 	=  &    -\frac{(2\pi)^\frac12 i}{(-iu)^{j+2}} \partial^{j+1}_{\xi}\F(\partial^{j+1}_u F(\chi,\cdot))(0)  -(2\pi)^\frac12     -\frac{(2\pi)^\frac12 i}{(-iu)^{j+3}} \partial^{j+2}_{\xi}\F(\partial^{j+1}_u F(\chi,\cdot))(0)\label{eq:Pmleading} \\
	&+\int_{0}^\infty   \frac{e^{i\xi\cdot u}i}{(-iu)^{j+3}} \partial^{j+3}_\xi\F(\partial^{j+1}_u F(\chi,\cdot))(\xi) \ud{\xi} \notag 	-\int_{-\infty}^0   \frac{e^{i\xi\cdot u}i}{(-iu)^{j+3}} \partial^{j+3}_\xi\F(\partial^{j+1}_u F(\chi,\cdot))(\xi) \ud{\xi}.											 	
	\end{align}
	The leading order term is explicit by \eqref{Ass:fint}: 
	\begin{align} \label{eq:alphaleading}
	\partial^{j+1}_{\xi}\F(\partial^{j+1}_u F(\chi,\cdot))(0) = \frac{i^{j+1} (j+1)!}{(2\pi)^\frac12}.
	\end{align}
	Combining \eqref{eq:Pmleading}, \eqref{eq:alphaleading}  gives \eqref{est:alphabounds}. The derivative of  \eqref{eq:alphaleading} in $\chi$ vanishes, so we obtain \eqref{est:alphabounds2}.
\end{proof}
The implicit function theorem gives the following Lemma on the function $u_0$ defined in Notation \ref{not:psidef}.
\begin{lemma} \label{lem:u0lemma}
	Let $f$ satisfy the Assumptions \ref{ass:regdecay}-\ref{Ass:onepart1}. Using \eqref{est:alphabounds}, for $|k|\leq r$ , $r>0$ small enough there are  unique $u_0^\pm(k,\chi)$ such that \eqref{def:u0} holds,
	and we have the estimates:
	\begin{align}
	|\partial^j u_0^\pm(k,\chi)|&\leq \frac{C}{|k|^{j+1}} \quad  \text{for $j\in \Naturals_0$, $j \leq 6$,}  \label{est:u0bounds} \\
	|\partial_\chi^\ell \partial^j u_0^\pm(k,\chi)|&\leq \frac{C}{|k|^{j}} \quad \text{for $j\in \Naturals_0$, $\ell\in \Naturals$, $j+\ell\leq 6$  } \label{est:u0bounds2}.
	\end{align}
\end{lemma}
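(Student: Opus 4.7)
The plan is to apply the implicit function theorem to $G(k,\chi,u) := |k|^2 - \alpha(\chi,u)$ on the region where $\alpha$ is asymptotically invertible in $u$. From the preceding lemma, $\alpha(\chi,u) = u^{-2} + O(u^{-3})$ and $\partial_u \alpha(\chi,u) = -2u^{-3} + O(u^{-4})$ as $|u|\to\infty$, uniformly in $\chi\in S^2$. Hence there exist $R>0$ and $c>0$ such that $\partial_u \alpha(\chi,u) \leq -c\, u^{-3}$ for $u\geq R$, so $\alpha(\chi,\cdot)$ is strictly decreasing on $[R,\infty)$ with range containing $(0,r^2]$ for some $r_0>0$ small. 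This produces, for each $|k|\leq r_0$, a unique positive solution $u_0^+(k,\chi) \in [R,\infty)$ of $\alpha(\chi,u)=|k|^2$, with $u_0^+(k,\chi) \sim 1/|k|$; the negative root $u_0^-$ is handled symmetrically. Since $\partial_u G(k,\chi,u_0^\pm) = -\partial_u \alpha(\chi,u_0^\pm) \sim \pm 2 |k|^3 \neq 0$, the implicit function theorem gives $C^6$ regularity of $u_0^\pm$ in $(k,\chi)$ and in particular the case $j=0$ of \eqref{est:u0bounds}.

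Next I would deduce the higher derivative estimates by implicit differentiation of $|k|^2 = \alpha(\chi,u_0(k,\chi))$. One derivative in $k_i$ gives
\begin{align*}
\partial_{k_i} u_0 \;=\; \frac{2k_i}{\partial_u \alpha(\chi,u_0)},
\end{align*}
and since $|\partial_u \alpha(\chi,u_0)| \geq c|k|^3$ from \eqref{est:alphabounds} applied at $u = u_0 \sim \pm 1/|k|$, this yields $|\partial_k u_0| \leq C/|k|^2$. Differentiating further and using Faà di Bruno produces a rational expression whose numerator is a polynomial in lower order derivatives $\partial_k^{j'} u_0$ (with $j'<j$) and in the derivatives $\partial_u^m \alpha$ (with $m\leq j+1$), and whose denominator is a positive power of $\partial_u\alpha$. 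Each $\partial_u^m \alpha$ evaluated at $u_0$ is bounded by $C|u_0|^{-(m+2)} \leq C|k|^{m+2}$ by \eqref{est:alphabounds}, while each factor $(\partial_u \alpha)^{-1}$ contributes $|k|^{-3}$. A straightforward induction on $j$ then produces the claimed bound $|k|^{-(j+1)}$ in \eqref{est:u0bounds}.

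For the mixed estimate \eqref{est:u0bounds2}, the case $\ell=1$, $j=0$ gives
\begin{align*}
\partial_{\chi_a} u_0 \;=\; -\,\frac{\partial_{\chi_a} \alpha(\chi,u_0)}{\partial_u \alpha(\chi,u_0)}.
\end{align*}
The crucial gain is that the leading coefficient $u^{-2}$ of $\alpha$ is independent of $\chi$, so \eqref{est:alphabounds2} with $\ell=1$, $j=0$ gives $|\partial_{\chi_a} \alpha(\chi,u_0)| \leq C|u_0|^{-3} \leq C|k|^3$, which cancels the $|k|^3$ appearing in the denominator and yields $|\partial_\chi u_0| \leq C$, matching the stated bound $|k|^{0}$. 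Higher mixed derivatives are again handled by the same Faà di Bruno expansion: every application of $\partial_\chi$ to $\alpha$ now contributes a factor $|k|^{j+3}$ (rather than the factor $|k|^{j+2}$ one would get for $\partial_u$), which precisely compensates one power of $|k|^{-3}$ from a $(\partial_u\alpha)^{-1}$ factor. This explains the asymmetry between the two bounds, and induction closes the argument.

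The only nontrivial aspect is the combinatorial bookkeeping in the inductive step for total orders $j+\ell \leq 6$: one must verify that among all terms generated by the Faà di Bruno expansion, the worst-case power of $|k|$ is exactly $-(j+1)$ for pure $k$-derivatives and $-j$ once at least one $\chi$-derivative is present. Once the two scalings ``each $\partial_k$ costs $|k|^{-1}$'' and ``each $\partial_\chi$ is free'' are established in the base cases above, tracking the worst-case exponent through the recursion is routine, so no genuine obstacle arises beyond careful indexing.
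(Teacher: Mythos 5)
Your proposal is correct and follows exactly the route the paper intends: the paper gives no written proof at all (it merely states that the lemma follows from the implicit function theorem applied via the asymptotics \eqref{est:alphabounds}--\eqref{est:alphabounds2}), and your argument supplies precisely those missing details — existence and uniqueness of the root $u_0^\pm\sim\pm1/|k|$ from the monotone decay $\alpha(\chi,u)\sim u^{-2}$, the bound $|\partial_u\alpha(\chi,u_0)|\gtrsim|k|^3$ to control implicit differentiation, and the one-power gain for $\chi$-derivatives coming from the fact that the leading term $u^{-2}$ of $\alpha$ is $\chi$-independent. The bookkeeping you describe (each $\partial_k$ costs $|k|^{-1}$; the first $\partial_\chi$ gains a power and subsequent ones are free, consistent with the $\ell$-independent bound in \eqref{est:alphabounds2}) closes the induction correctly.
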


We can represent the solution to the Bogolyubov equation \eqref{Bogolyubov} explicitly in Fourier variables. 
The decay properties of the solution are encoded in the singularity of their Fourier transform at the origin, which
motivates to make the following definition.
\begin{definition} \label{def:strong}
	Let $0 < \kappa \leq 1$  and $f: \Reals^n \setminus \{0\}
	\rightarrow \Reals$. Define the functional $[f]_{\kappa}$ by:
	\begin{align*}
	[f]_{\kappa}(x):= \sup_{\substack{0<|h|\leq 1 \\ x+h \neq 0}}
	\frac{| f(x+h)-  f(x)|}{|h|^{\kappa}}.
	\end{align*}
\end{definition} 
The following lemma gives sharp decay estimates for functions that have an isolated singularity
in Fourier variables. 
\begin{lemma} \label{lemholder}
	Let $l\in \Naturals$, $f: \Reals^n   \setminus \{0\} \rightarrow
	\Reals$ be $\ell$ times continuously differentiable with $|\nabla^j f|\in L^1$ for $0\leq j\leq
	\ell$.
	Further let $0<\kappa \leq 1$ and $[\nabla^\ell f]_{\kappa} \in L^1$. Then the Fourier transform $\hat{f}$ decays like:
	\begin{align} \label{est:singFourier}
	|\hat{f}(x)| \leq \frac{C}{1+|x|^{\ell+\kappa}}.
	\end{align}
\end{lemma}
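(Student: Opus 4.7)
The plan is to use the standard technique for decay of Fourier transforms under smoothness: apply integration by parts $\ell$ times to gain $\ell$ powers of $|x|$, then exploit the Hölder continuity of $\nabla^\ell f$ via a shift trick to squeeze out the extra $|x|^{-\kappa}$ factor. First, for $|x|\leq 1$, the inequality \eqref{est:singFourier} follows trivially from the $L^1$-bound $|\hat{f}(x)|\leq (2\pi)^{-n/2}\|f\|_{L^1}$; so the entire argument reduces to the regime $|x|\geq 1$. Note that the singularity of $f$ at the origin is harmless here, since $\{0\}$ has measure zero and the hypothesis $\nabla^j f\in L^1$ for $j\leq\ell$ means the classical derivatives on $\Reals^n\setminus\{0\}$ coincide with the distributional ones, so all integration by parts below is legitimate.

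For $|x|\geq 1$, I would pick the coordinate index $j=j(x)$ maximizing $|x_j|$, so that $|x_j|\geq |x|/\sqrt{n}$. Integration by parts $\ell$ times in the variable $y_j$ in the integral defining $\hat{f}$ yields the standard identity
\begin{equation*}
(ix_j)^\ell \hat{f}(x) \;=\; \widehat{\partial^\ell_{y_j} f}(x),
\end{equation*}
and hence $|x|^\ell|\hat{f}(x)|\leq C\,\|\nabla^\ell f\|_{L^1}$. This recovers the $|x|^{-\ell}$ decay but not yet the extra Hölder factor.

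To gain the factor $|x|^{-\kappa}$, I would apply the reflection trick to $g:=\partial^\ell_{y_j} f$. Setting $h=\pi x/|x|^2$ (so that $|h|=\pi/|x|$ and $e^{-ix\cdot h}=-1$), a change of variable gives
\begin{equation*}
2\,\hat{g}(x) \;=\; \frac{1}{(2\pi)^{n/2}}\int_{\Reals^n}\bigl[g(y)-g(y-h)\bigr]\,e^{-ix\cdot y}\,\uud y.
\end{equation*}
For $|x|\geq\pi$ we have $|h|\leq 1$, so the Hölder hypothesis yields $|g(y)-g(y-h)|\leq [\nabla^\ell f]_\kappa(y)\,|h|^\kappa$, and integrating against the $L^1$-function $[\nabla^\ell f]_\kappa$ produces $|\hat{g}(x)|\leq C\,|x|^{-\kappa}\,\|[\nabla^\ell f]_\kappa\|_{L^1}$. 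Combining with the earlier bound gives $|\hat{f}(x)|\leq C\,|x|^{-\ell-\kappa}$ for $|x|\geq \pi$, and the intermediate range $1\leq|x|\leq\pi$ is absorbed into the constant.

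The only mildly delicate point is the justification of the $\ell$-fold integration by parts in the presence of the isolated singularity of $f$ at the origin; this can be handled by introducing a smooth cutoff $\chi_\eta$ vanishing on $B_\eta(0)$, integrating by parts on $f\chi_\eta$, and then letting $\eta\to 0^+$, the resulting boundary contributions on $\partial B_\eta(0)$ tending to zero because $|\nabla^j f|\in L^1$ for $j\leq\ell$ controls the surface integrals as $\eta\to 0$. I do not expect any obstacle beyond this routine approximation.
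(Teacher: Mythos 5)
Your argument is essentially the paper's own proof: the trivial $L^1$ bound for small $|x|$, an $\ell$-fold integration by parts to gain $|x|^{-\ell}$, and then the half-period shift $h=\pi x/|x|^2$ applied to the $\ell$-th derivative to extract the extra $|x|^{-\kappa}$ from the $L^1$ Hölder seminorm $[\nabla^\ell f]_\kappa$. The only (cosmetic) differences are that the paper realizes $|x|^\ell$ via powers of the Laplacian (resp.\ $x\cdot\nabla\Delta^m$ for odd $\ell$) rather than repeated derivatives in a single well-chosen coordinate, and that you make the cutoff argument near the singularity explicit where the paper simply asserts $f\in W^{\ell,1}(\Reals^n)$; both proofs treat that removability point with the same level of informality.
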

\begin{proof}
	Since $f \in L^1$ we know $\hat{f} \in L^\infty$ with $\|\hat{f}\|_{L^\infty}\leq C
	\|f\|_{L^1}$. For the additional decay we inspect the transformation formula
	directly. We distinguish the cases $\ell$ even and $\ell$ odd. For $\ell=2m$ even,
	we use 
	\begin{align} \label{eq:Fourderiv}
		e^{-i \pi k x}= \frac{1}{(\pi|x|)^{2m}} \Delta^m (e^{- i\pi x k}).
	\end{align}
	Further we use that $f$ is in $f \in W^{l,1}(\Reals^n)$ to compute
	\begin{align} \label{eq:Fourierest}
	\hat{f}(\pi x) 	&= \frac{1}{(2\pi)^\frac{n}{2}}\int f(k) e^{-i\pi xk} \ud{k} 	= \frac{1}{(\pi|k|)^{2m}} \frac{1}{(2\pi)^\frac{n}{2}} \int \Delta^m f(k) e^{-i \pi xk} \ud{k}. 	 
	\end{align}
	Now $g:= \Delta^m f$ satisfies $|g| +[g]_{\kappa} \in L^1$. Therefore we can
	estimate
	\begin{align*}
	\hat{g}(\pi x) 	&= - \frac{1}{(2\pi)^\frac{n}{2}}\int g(k) e^{-i \pi (k-\frac{x}{|x|^2}) x } \ud{k} 
	= \frac{1}{2(2\pi)^\frac{n}{2}}  \int \left(g(k) - g(k+\frac{x}{|x|^2})\right)
	e^{- \pi k x}	\ud{k} . 
	\end{align*}
	Taking absolute values and using $[g]_\kappa \in L^1$ gives
	\begin{align*}
	|\hat{g}(\pi x)| 	\leq  \frac{1}{2(2\pi)^\frac{n}{2}} \int [g]_{\kappa}(k)/|x|^\kappa \ud{k} 	\leq  \frac{C}{|x|^\kappa}.
	\end{align*}
	Inserting this into \eqref{eq:Fourierest} gives $|\hat{f}(x)| \leq \frac{C}{1+|x|^{l+\kappa}}$ as claimed.
	For $\ell=2m+1$ odd we repeat the computation, except that we now use
	$e^{-i \pi k x}= \frac{i x}{(\pi|x|)^{2m}} \cdot \nabla \Delta^m (e^{- i\pi x k})$ instead of \eqref{eq:Fourderiv}.
\end{proof}
As a corollary we obtain bounds for the (inverse) Fourier transform of functions
that depend on the modulus $\omega = \frac{k}{|k|}$.
\begin{lemma} \label{cordecay}
	Let $\ell \in \Naturals$, $\Phi(k,\chi)
	\in C^{n+\ell}_c( B_1(0) \times S^{n-1} )$.
	Then the Fourier transform of the mapping 
	$T(k)= |k|^\ell\Phi(k,\frac{k}{|k|})$ on $\Reals^n$ decays like:
	\begin{align*}
	|\hat{T}(x)| \leq	\frac{C(\delta)}{1+|x|^{n+\ell-\delta}}, \quad \text{for $\delta>0$ arbitrary.}
	\end{align*}
\end{lemma}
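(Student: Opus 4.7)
The plan is to apply Lemma \ref{lemholder} directly to $T$ with $\ell' = n+\ell-1$ derivatives and Hölder exponent $\kappa = 1-\delta$. Since $\Phi \in C^{n+\ell}_c(B_1(0)\times S^{n-1})$, the function $T$ is compactly supported, and is smooth on $\Reals^n \setminus \{0\}$ because the unit vector map $k \mapsto k/|k|$ is smooth there and $\Phi(\cdot,\cdot)$ is $C^{n+\ell}$ in both arguments. Hence all the regularity required by Lemma \ref{lemholder} reduces to an analysis of the singularity at the origin.

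The first step is to establish the pointwise derivative bound
\begin{equation} \label{eq:scale}
|\nabla^m T(k)| \leq C |k|^{\ell - m} \quad \text{for all } 0<|k|\leq 1, \; 0\leq m \leq n+\ell,
\end{equation}
which I would prove by induction on $m$ using the chain rule, exploiting the homogeneity properties: $\partial^{\alpha} |k|^{\ell}$ is homogeneous of degree $\ell-|\alpha|$, and $\partial^{\alpha}(k/|k|)$ is homogeneous of degree $-|\alpha|$, while differentiation of $\Phi$ does not change the homogeneity scaling. Since each factor in the Leibniz expansion of $\nabla^m T$ scales as a power of $|k|$ whose exponents add up to $\ell-m$, the bound \eqref{eq:scale} follows with a constant depending on $\|\Phi\|_{C^{n+\ell}}$.

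Integrating \eqref{eq:scale} in spherical coordinates over the support of $T$ gives $\nabla^m T \in L^1(\Reals^n)$ exactly for $m\leq n+\ell-1$, which is why the hypothesis $\Phi\in C^{n+\ell}_c$ is calibrated to match $\ell' = n+\ell-1$. For the top-order Hölder estimate, I would split the supremum defining $[\nabla^{n+\ell-1}T]_{\kappa}(k)$ according to whether $|h|\leq |k|/2$ or $|h|> |k|/2$. In the first case the mean value theorem combined with \eqref{eq:scale} at order $m=n+\ell$ gives
\[
|\nabla^{n+\ell-1}T(k+h)-\nabla^{n+\ell-1}T(k)|\leq C|h|\,|k|^{-n},
\]
so the corresponding contribution to $[\cdot]_\kappa(k)$ is bounded by $C|h|^{1-\kappa}|k|^{-n}\leq C|k|^{1-\kappa-n}$. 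In the second case, the direct bound \eqref{eq:scale} at $m=n+\ell-1$ at both endpoints combined with $|h|^{-\kappa}\leq C|k|^{-\kappa}$ gives the same order $|k|^{1-\kappa-n}$. Thus
\[
[\nabla^{n+\ell-1} T]_{\kappa}(k) \leq C\bigl(|k|^{1-n-\kappa}+1\bigr),
\]
and this bound is integrable on the compact support of $T$ precisely because $\kappa<1$, i.e.\ $\delta>0$. Lemma \ref{lemholder} then yields $|\hat{T}(x)|\leq C(\delta)(1+|x|)^{-(n+\ell-1)-(1-\delta)}=C(\delta)(1+|x|)^{-(n+\ell-\delta)}$, as claimed.

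The main obstacle is the combinatorial bookkeeping that establishes \eqref{eq:scale}, because $\nabla^m T$ is a sum of many Leibniz terms involving mixed derivatives of $\Phi$, of $|k|^\ell$, and of $k/|k|$, each of which has singular behavior at the origin. The delicate point is that the apparent logarithmic loss at $\kappa=1$ (which would correspond to $n+\ell$ integrations by parts in a direct Fourier approach) is replaced here by the arbitrarily small $\delta$-loss coming from Hölder regularity.
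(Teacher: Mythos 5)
Your proposal is correct and follows essentially the same route as the paper: apply Lemma \ref{lemholder} with $n+\ell-1$ derivatives and Hölder exponent $\kappa=1-\delta$, using the homogeneity bounds $|\nabla^j T(k)|\leq C|k|^{\ell-j}$ and the resulting seminorm estimate $[\nabla^{n+\ell-1}T]_{1-\delta}(k)\leq C(\delta)|k|^{\delta-n}$, which is integrable on the compact support. You merely spell out the Leibniz/homogeneity bookkeeping and the $|h|\lessgtr|k|/2$ splitting that the paper leaves implicit.
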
 
\begin{proof}
	Follows by applying Lemma \ref{lemholder} to $T$.
	Differentiating the function we obtain the estimates:
	\begin{align*}
	[\nabla^{n+\ell-1} T]_{1-\delta}(k) \leq
	\frac{ C(\delta)|k|^\ell\|\Phi\|_{C^{n+\ell}}}{|k|^{\ell+n-\delta}}, \quad |\nabla^j T(k)| &\leq \frac{C |k|^\ell \|\Phi\|_{C^{n+\ell}}}{|k|^j}  \quad {0\leq j \leq n+\ell-1}.
	\end{align*}
	Since $T$ is compactly supported in the unit ball, we can apply  Lemma \ref{lemholder} and obtain the claim.
\end{proof}

\subsection{The Oberman-Williams-Lenard solution} \label{sec:oberwill}

The Fourier representation formula for the Bogolyubov correlations,
more precisely a Fourier representation $\hat{g}_B$ of the solution to \eqref{Bogolyubov} has been obtained by
Oberman and Williams in \cite{oberman_theory_1983}, following the complex-variable approach
by Lenard in \cite{lenard_bogoliubovs_1960}. We will briefly restate their result
in the mathematically rigorous framework of this work. We will define a function $g_B$ via its Fourier transform $\hat{g}_B$.
In order to complete the proof that $g_B$ is a solution of the Bogolyubov equation
in the sense of Definition~\ref{def:weaksol}, we need to show that $g_B$ is in $W$ and satisfies
the Bogolyubov condition \eqref{bogcond}. This is the content of Section \ref{sec:Screenest}, in particular of the Theorems \ref{thm:screen}, \ref{thm:screensoft}.  
\begin{notation} \label{Not:defAB}
	We introduce functions $A^\pm,B^\pm$, derived from $\eps$ and $F$ (cf. \eqref{def:dielectric},\eqref{def:F}):
	\begin{align}
	A^\pm(k,u)			&:= (1-B^\pm)P^\pm[\frac{F(k,\cdot)}{|\eps(k,-|k|\cdot)|^2}](u) 	\label{def:Apm} \\
	B^\pm(k,u)			&:= \hat{\phi}(k) P^\pm[\partial_u F(k,\cdot)](u)					\label{def:Bpm}. 
	\end{align}	
\end{notation}

\begin{definition} \label{def:Bogcorr}
	For $v_1,v_2 \in \Reals^3$, consider the Schwartz distribution $\hat{g}_B(\cdot,v_1,v_2) \in \Sch'(\Reals^3)$ given
	by the following linear functional $(\varphi, \hat{g}_B(v_1,v_2))_{\Sch,\Sch'}$ on $\Sch(\Reals^3)$ ($\omega$ as defined in \eqref{eq:vecshort}):
	\begin{align}
	(\varphi, \hat{g}_B(v_1,v_2))	&= \int \frac{\varphi(k)\hat{\phi}(k) \omega \left((\nabla_{v_1}-\nabla_{v_2} )(f  f) + \nabla f(v_1) \ol{\hat{h}}_B(k,v_2) - \nabla f(v_2)		\hat{h}_B(k,v_1)\right)}{\omega(v_1-v_2)-i0} \ud{k} \label{eq:Bogcorr}.
	\end{align}
	Here $-i0$ represents taking the limit $\delta \rightarrow 0^+$ with $-i\delta$ in \eqref{eq:Bogcorr}, and $\hat{h}_B$ is given by the formula:
	\begin{align}
	\hat{h}_B(k,v)		&:= f(v) \frac{(1-\eps(k,-kv))}{\eps(k,-kv)}- \hat{\phi}(k)
	\frac{A^-(k,\omega v)}{\eps(k,-kv)} (\omega \nabla f(v)) \label{eq:hFourier}.
	\end{align}	
	Then we  will call $g_B(\cdot,v_1,v_2)\in \Sch'(\Reals^3)= \F^{-1}\left(\hat{g}_B(\cdot,v_1,v_2)\right)$ the Bogolyubov
	correlation associated to $f$.
\end{definition}

The strategy for solving \eqref{Bogolyubov} is solving integrated versions of
the equation first. To fix ideas, let $g$ be a solution and consider the functions $h(x,v)$, $H(k,u)$ defined by
\begin{align*}
h(x,v_1)		&= \int_{\Reals^3} g(x,v_1,v_2) \ud{v_2} \\
\hat{H}(k,u) 			&= \int_{\Reals^3} \hat{h}(k,v) \delta(u-\frac{k v}{|k|})
\ud{v}.
\end{align*}
The key observation is that $g$, $h$ and $H$ solve the equations (as before: $\zeta(1)=2$, $\zeta(2)=1$)
\begin{align}
(v_1-v_2) \partial_x g  	&=  \sum_{j=1}^2 \nabla f(v_j) \int \nabla \phi((-1)^{j+1}x+y)
h(y,v_{\zeta(j)}) \ud{y}  + (\nabla_{v_1}-\nabla_{v_2} f)(f  f) \nabla
\phi(x) \label{eq:g} \\
\hat{h}(k,v) 				&=  \int_{\Reals^3} \frac{-\omega \hat{\phi}(k)
	((\nabla_{v_1}-\nabla_{v_2} f)(f  f)+ \nabla
	f(v_1)\ol{\hat{h}}(k,v_2)- \nabla f(v_2) \hat{h}(k,v_1)
	)}{\omega(v_1-v_2)-i0} \ud{v_2}\label{eq:h}\\
\hat{H}(k,u) 				&= 	- \hat{\phi}(k) \left(\partial_u F P^-[F] - P^-[\partial_u
F]F+  \partial_u F P^-[\ol{\hat{H}}]-  P^-[\partial_u F] \hat{H}\right) \label{eq:H}.
\end{align}
Note that the equation for $H$ is closed. This suggests to solve the equations
\eqref{eq:g}-\eqref{eq:H} in reverse order: Once we have found the solution $\hat{H}$
to \eqref{eq:H}, we can use \eqref{eq:h} to compute $\hat{h}$ and then compute $\hat{g}$ using \eqref{eq:g}.
Following this reasoning, we show the existence of a solution to \eqref{eq:H} in the first step of our rigorous analysis.
\begin{lemma} \label{lem:H} Let $f$ satisfy the Assumptions \ref{ass:regdecay}-\ref{Ass:onepart1}.
	We recall the definitions of $F$ in \eqref{def:F} and $A^\pm$ in \eqref{def:Apm}. 
	The function $\hat{H}_B: \Reals^3 \times \Reals\rightarrow \Reals$ given by
	\begin{align}
	\hat{H}_B(k,u) := \frac{1}{2\pi i}(A^{+} - A^-) - F(k,u) \label{Hdef}
	\end{align}
	is measurable in $\Reals^3 \times \Reals$ and satisfies $\hat{H}_B(k,\cdot) \in L^2$ a.e. in 
	$k\in \Reals^3$. Further, for a.e. $k\in \Reals^3$ it solves the equation:
	\begin{align} \label{eq:H1}
	\hat{H}_B(k,u) = - \hat{\phi}(k) \left(\partial_u F P^-[F] - P^-[\partial_u
	F]F+ \partial_u F P^-[\ol{\hat{H}_B}]- P^-[\partial_u F] \hat{H}_B\right).
	\end{align}
\end{lemma}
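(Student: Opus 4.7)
The plan is to split the proof into (i) verifying that $\hat{H}_B$ is measurable in $(k,u)$ with $\hat{H}_B(k,\cdot)\in L^2$ for a.e.\ $k$, and (ii) checking the equation \eqref{eq:H1} by a direct algebraic computation based on the Plemelj formula. For (i), I observe that under Assumptions \ref{ass:regdecay}--\ref{Ass:onepart1} the Radon transform $F(k,\cdot)$ is Schwartz in $u$ and smooth in $k$, while Lemma \ref{lem:dielectric} yields a lower bound $|\eps(k,-|k|\cdot)|\geq c>0$ on compacts of $k$. Hence $R(k,u):=F(k,u)/|\eps(k,-|k|u)|^2$ lies in $L^2(\Reals_u)$, so $P^\pm[R(k,\cdot)]\in L^2$ by Lemma \ref{Lem:P}. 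The multipliers $1-B^\pm=\eps,\bar{\eps}$ are bounded in $u$ because $\hat\phi(k)\,\partial_u F(k,\cdot)$ is Schwartz and $P^\pm$ maps it into a bounded continuous function. Therefore $A^\pm(k,\cdot)\in L^2$ and $\hat{H}_B(k,\cdot)\in L^2$; joint measurability in $(k,u)$ is inherited.

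For (ii) I apply the Plemelj identity $P^+-P^-=2\pi i\,\mathrm{id}$ (Lemma \ref{Lem:P}) to $\partial_u F$ to obtain the key algebraic identity
\begin{align*}
2\pi i\,\hat\phi(k)\,\partial_u F(k,u)\;=\;B^+(k,u)-B^-(k,u)\;=\;\eps(k,-|k|u)-\bar{\eps}(k,-|k|u),
\end{align*}
using $B^+=\overline{B^-}$ (since $\partial_u F$ is real) and $\eps=1-B^-$. The same conjugation gives $A^+=\overline{A^-}$, so $\hat{H}_B$ is real-valued and $P^-[\overline{\hat{H}_B}]=P^-[\hat{H}_B]$. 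Next, $A^+$ and $A^-$ are boundary values of functions that extend analytically and decay in the upper, respectively lower, half-plane: both $P^\pm[R]$ and $B^\pm$ are Cauchy-type boundary values of decaying analytic functions, and the products inherit the decay from $P^\pm[R]\sim 1/|z|$. Applying Lemma \ref{lem:wienuniq} to the defining identity $\hat{H}_B+F=\tfrac{1}{2\pi i}(A^+-A^-)$ yields the decisive identification
\begin{align*}
P^-[\hat{H}_B+F]\;=\;A^-\;=\;\eps\,P^-[R].
\end{align*}

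Substituting this into the right-hand side of \eqref{eq:H1} and using $\hat\phi\,P^-[\partial_u F]=1-\eps$, the right-hand side becomes $-\hat\phi\,\partial_u F\cdot A^-+(1-\eps)(F+\hat{H}_B)$. Expanding $A^\pm=\{\bar{\eps},\eps\}P^\pm[R]$, rewriting $P^+[R]=P^-[R]+2\pi i R$ via Plemelj, and inserting the key identity $2\pi i\,\hat\phi\,\partial_u F=\eps-\bar{\eps}$, the equation \eqref{eq:H1} reduces to the tautology $|\eps|^2 R=F$, which holds by the definition of $R$. The main obstacle is verifying that $A^\pm$ satisfy the hypotheses of Lemma \ref{lem:wienuniq}---$L^2$ boundary values, analytic continuation, and decay at infinity---since this legitimizes the identification $P^-[\hat{H}_B+F]=A^-$; once that is in place, the remaining algebra, while delicate with respect to signs and complex conjugation, is routine bookkeeping.
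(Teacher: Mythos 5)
Your proposal is correct and follows essentially the same route as the paper: real-valuedness of $\hat{H}_B$ via $A^+=\overline{A^-}$, the $L^2$ bound via $R=F/|\eps|^2$ and the $L^2$-boundedness of $P^\pm$, and the identification $P^\pm[\hat{H}_B+F]=A^\pm$ through Lemma \ref{lem:wienuniq}, after which the equation collapses to the tautology $|\eps|^2R=F$ (the paper phrases this same reduction as the Wiener--Hopf matching condition \eqref{eq:Hwien}). The only quibble is that the lower bound on $|\eps|$ in the Coulomb case holds for $|k|$ bounded away from zero (\eqref{est:epseps}), not on arbitrary compacts of $k$, which is exactly what produces the ``a.e.\ $k$'' qualifier in the statement.
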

\begin{proof}
	As a pointwise a.e. limit of measurable functions, $\hat{H}_B$ is measurable again.
	By Lemma \ref{Lem:P} we know that $A^+ = \ol{A^{-}}$, so $\hat{H}_B$ is real-valued. By 
	\eqref{est:epseps} $|\eps|$ is bounded below,
	so $\frac{F}{|\eps|}$ is $L^2$. We can rewrite $A^-$ using $\eps$ (as in cf. \eqref{def:eps}):
	\begin{align} \label{eq:Amterm}
	A^-(k,\cdot) = \eps(k,-|k|\cdot) \int_\Reals
	\frac{F(\omega,u')}{|\eps(k,-|k|u')|^2(u'-\cdot+i0)} \ud{u'},
	\end{align}
	and find this function is in $L^2$, since $P^\pm$ are bounded on $L^2$. It remains to show that $\hat{H}_B$ satisfies the equation.
	Since $\hat{H}_B$ is real-valued, equation \eqref{eq:H1} is equivalent to
	\begin{align*} 
	\hat{H}_B+F &= F -\hat{\phi}(k)	\left(\partial_u F P^-[F+\hat{H}_B] - (F+\hat{H}_B) P^-[\partial_u F] \right).
	\end{align*} 	
	Using that	$|1-\hat{\phi}(k) P^+[\partial_u F]|=|\eps|$ is non-zero, Lemma \ref{Lem:P} shows that the equation
	is equivalent to: 
	\begin{align} \label{eq:Hwien} 
	\frac{P^+[\hat{H}_B+F]}{1- \hat{\phi}(k) P^+[\partial_u F]} -
	\frac{P^-[\hat{H}_B+F]}{1-\hat{\phi}(k) P^-[\partial_u F]} = \frac{2\pi i
		F(u)}{(1-\hat{\phi}(k) P^+[\partial_u F])(1-\hat{\phi}(k) P^-[\partial_u F])}.
	\end{align}
	So it remains to check \eqref{eq:Hwien} is satisfied for $\hat{H}_B$ as defined in
	\eqref{Hdef} above. The equation is satisfied, if we can show that 
	\begin{align} \label{eq:Aplus}
	P^\pm [\hat{H}_B] = A^\pm -	P^\pm[F].	
	\end{align} 
	By the definition \eqref{Hdef} of $\hat{H}_B$, this is the case if for $A^\pm$
	as in \eqref{def:Apm} we have:
	\begin{align} \label{eq:Apmident}
	A^\pm = P^\pm[\frac{1}{2\pi i}(A^+-A^-)].
	\end{align}
	This however follows from the uniqueness proved in Lemma \ref{lem:wienuniq}.
\end{proof}
\begin{lemma} \label{lem:h}
	Let $f$ satisfy the Assumptions \ref{ass:regdecay}-\ref{Ass:onepart1} and
	consider the function $\hat{h}_B$ defined by the Fourier representation
	\eqref{eq:hFourier}. 
	Then $\hat{h}_B$ is a measurable function in $\Reals^3 \times \Reals^3$ and for $k\neq 0$ it satisfies:
	\begin{align} \label{eq:hdecay}
	|\hat{h}_B(k,v)| \leq C(k) e^{-|v|}
	\end{align}
	Furthermore, for $k\neq 0$ the function $\hat{h}_B(k,\cdot)$, $k\neq 0$ solves the equation:
	\begin{align}\label{eq:heq}
	\hat{h}_B(k,v) &=   \int_{\Reals^3} \frac{\omega \hat{\phi}(k)
		((\nabla_{v_1}-\nabla_{v_2} f)(f  f)+ \nabla
		f(v_1)\ol{\hat{h}}_B(k,v_2)- \nabla f(v_2) \hat{h}_B(k,v_1)
		)}{\omega(v_1-v_2)-i0} 
	\ud{v_2}.
	\end{align}
\end{lemma}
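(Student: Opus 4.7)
The strategy is to extract measurability and the pointwise decay directly from the explicit formula \eqref{eq:hFourier}, and then to verify the functional equation \eqref{eq:heq} by reducing each $v_2$-integral on its right-hand side to a one-dimensional Cauchy-type integral along the axis $u=\omega\cdot v$, ultimately invoking the identity $P^-[\hat H_B]=A^--P^-[F]$ already available through Lemma \ref{lem:H}.

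For the bound $|\hat{h}_B(k,v)|\leq C(k)e^{-|v|}$ at fixed $k\neq 0$, the denominator $|\eps(k,-kv)|$ is bounded below by Lemma \ref{lem:dielectric}, and the factor $A^-(k,\cdot)$ was rewritten in \eqref{eq:Amterm} as $\eps(k,-|k|\cdot)\,P^-[F(k,\cdot)/|\eps(k,-|k|\cdot)|^2]$. Because $f\in C^8$ with exponential decay (Assumption \ref{ass:regdecay}) and $|\eps|\geq c(k)>0$, the quotient $F(k,\cdot)/|\eps(k,-|k|\cdot)|^2$ is $C^7$ with exponential tails, so its $P^-$-transform is continuous and bounded by some $C(k)$. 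Inserting these estimates together with the exponential decay of $f$ and $\nabla f$ into \eqref{eq:hFourier} yields the claim, and joint measurability follows from the measurability of the constituents.

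To verify \eqref{eq:heq}, I slice $v_2=v_{2,\perp}+s\omega$ with $s=\omega\cdot v_2$ and integrate over $v_{2,\perp}$, converting the $v_2$-integrals into one-dimensional $P^-$-transforms in $u=\omega v_1$:
\begin{align*}
\int\frac{f(v_2)\,\ud{v_2}}{\omega(v_1-v_2)-i0} &= -P^-[F(k,\cdot)](u),\\
\int\frac{\omega\nabla f(v_2)\,\ud{v_2}}{\omega(v_1-v_2)-i0} &= -P^-[\partial_u F(k,\cdot)](u),\\
\int\frac{\overline{\hat{h}}_B(k,v_2)\,\ud{v_2}}{\omega(v_1-v_2)-i0} &= -P^-[\overline{\hat H}_B(k,\cdot)](u),
\end{align*}
where the third identity relies on the fact, checked by direct computation from \eqref{eq:hFourier}, that the Radon transform $u\mapsto\int\hat{h}_B(k,v)\delta(u-\omega v)\ud{v}$ coincides with $\hat H_B(k,u)$ from \eqref{Hdef}. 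Since $\hat H_B$ is real-valued (Lemma \ref{lem:H}), \eqref{eq:Aplus} gives $P^-[\overline{\hat H}_B]=P^-[\hat H_B]=A^--P^-[F]$, hence $P^-[F]+P^-[\overline{\hat H}_B]=A^-$. Grouping the contributions of $\nabla f(v_1)$ and of $f(v_1)+\hat{h}_B(k,v_1)$ and using $\hat\phi(k)P^-[\partial_u F]=1-\eps$, the right-hand side of \eqref{eq:heq} collapses to $-\hat\phi(k)(\omega\nabla f(v_1))A^-(k,\omega v_1)+(f(v_1)+\hat{h}_B(k,v_1))(1-\eps(k,-kv_1))$. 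Solving this linear relation for $\hat{h}_B(k,v_1)$ recovers exactly \eqref{eq:hFourier}.

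\textbf{Main obstacle.} The principal technical point is the legitimacy of the $-i0$-reductions to $P^-$-transforms and the identification of the Radon transform of $\hat{h}_B(k,\cdot)$ with $\hat H_B(k,\cdot)$. The first is handled by the $L^2$-boundedness of $P^-$ (Lemma \ref{Lem:P}) after Fubini, which applies because $f$, $\nabla f$, and (by the previous step) $\hat{h}_B(k,\cdot)$ all have integrable transverse profiles for fixed $k\neq 0$. The second is an algebraic computation in which the prefactors $(1-\eps)/\eps$ and $A^-/\eps$ of \eqref{eq:hFourier} combine with $\hat\phi(k)P^-[\partial_u F]=1-\eps$ and the definition \eqref{def:Apm} of $A^\pm$ to reproduce the formula \eqref{Hdef} for $\hat H_B$.
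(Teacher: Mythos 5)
Your proposal is correct and follows essentially the same route as the paper: identify the Radon transform of $\hat h_B(k,\cdot)$ with $\hat H_B(k,\cdot)$, reduce the $v_2$-integrals in \eqref{eq:heq} to one-dimensional $P^-$-transforms in $u=\omega\cdot v_1$, and use $P^-[\hat H_B]=A^--P^-[F]$ (i.e.\ \eqref{eq:Aplus} from Lemma \ref{lem:H}) together with $\hat\phi(k)P^-[\partial_u F]=1-\eps$ to solve the resulting linear relation and recover \eqref{eq:hFourier}. The one step you defer to a ``direct computation'' --- that $\int\hat h_B(k,v)\delta(u-\omega v)\ud{v}=\hat H_B(k,u)$ --- is precisely where the paper spends its effort (integrating \eqref{eq:hFourier} and reducing the resulting identity to \eqref{eq:Hwien} via \eqref{eq:Aplus}), but the ingredients you name are the right ones and the verification goes through.
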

\begin{proof}
	Measurability and decay of $\hat{h}_B$ follow  from the regularity and
	decay properties of $f$. It
	remains to show $\hat{h}_B(k,\cdot)$ solves \eqref{eq:heq}. To this
	end, we first show $H_*(k,\cdot) := \int_{\Reals^3}\hat{h}_B(k,v)
	\delta(\cdot -\omega v) \ud{v} $ coincides with the function $\hat{H}_B(k,\cdot)$
	(cf. \eqref{Hdef}). This
	can be seen by integrating 	\eqref{eq:hFourier}:
	\begin{align*}
	H_*(k,u) = F(k,u) \frac{1-\eps(k,-|k|u)}{\eps(k,-|k|u)} -
	\frac{A^-(k,u)}{ \eps(k,-|k|u)} \frac{1}{2 \pi i}(B^+-B^-).
	\end{align*}	
	Since $\eps(k,-|k|)= 1- B^-(k,u)$, the claim $\hat{H}_B=H_*$ is equivalent to
	verifying
	\begin{align} \label{Hver}
	\hat{H}=\frac{1}{2\pi i}(P^+[\hat{H}]-P^-[\hat{H}])= \frac{F B^-}{1-B^-} -
	\frac{A^-}{1-B^-}\frac{1}{2\pi i}(B^+ - B^-).
	\end{align}
	We add $F$ on both sides and use \eqref{eq:Aplus} to see this
	is equivalent to
	\begin{align*}
	\frac{1}{2\pi i}(A^+-A^-)= \frac{F B^-}{1-B^-} -
	\frac{A^-}{1-B^-}\frac{1}{2\pi i}(B^+ - B^-) + F.
	\end{align*}
	Rearranging terms, the claim can be rewritten as:
	\begin{align*}
	\frac{1}{2\pi i}(A^+(1-B^-)-A^-(1+B^+))=  F,
	\end{align*}
	which is equivalent to \eqref{eq:Hwien}. Hence we have verified
	\eqref{Hver} and proven $H_*=\hat{H}_B$.
	Using this we can prove $\hat{h}_B$ as defined above solves \eqref{eq:heq}. To
	this end, we integrate in $v_2$ and bring the last summand in
	\eqref{eq:heq} to the left-hand side, when the equation reads:
	\begin{align*}
	\eps(k,-kv) \hat{h}_B(k,v_1) 	&=  \int_{\Reals^3}
	\frac{\hat{\phi}(k) \omega }{\omega \cdot (v_1-v_2)-i0} \left((\nabla_{v_1}-\nabla_{v_2} f)(f  f)(v_1,v_2)+ \nabla
	f(v_1)\ol{\hat{h}_B}(k,v_2)\right) \ud{v_2} \\
	&= - \hat{\phi}(k)  \left(\omega \nabla f(v_1) P^-[F+\hat{H}_B]- P^-[F]
	f(v)\right) .
	\end{align*}
	Replacing $P^-[F+\hat{H}_B]=A^-$ by means of \eqref{eq:Aplus}, we have shown the claim to
	be equivalent to \eqref{eq:hFourier}, the definition of $\hat{h}_B$.	
\end{proof}
Now it is straightforward to check that $g_B$ defined in Definition \ref{def:Bogcorr} is a weak solution of the Bogolyubov equation,
assuming that $g_B$  has marginal $\int \hat{g}_B(x,v_1,v_2)= h_B(x,v_1)$ and satisfies the Bogolyubov boundary
condition \eqref{bogcond}. These conditions will be proved in the Theorems \ref{thm:screen}, \ref{thm:screensoft}, whose
proof does not depend on the results in this section. 
\begin{theorem} \label{thm:existence}
		Let $f$ satisfy the  Assumptions \ref{ass:regdecay} and \ref{Ass:onepart1} and
		$\phi$	be either the Coulomb potential or a soft potential. In the Coulomb case, assume further
		that $f$ satisfies Assumption \ref{Ass:Exponential} or \ref{Ass:Maxwellian}.
	If $g_B$ defined by \eqref{def:Bogcorr} satisfies $\int \hat{g}_B(x,v_1,v_2)= h_B(x,v_1)$, and the Bogolyubov boundary
	condition \eqref{bogcond}, then $g_B$ is a weak solution to the Bogolyubov equation.
\end{theorem}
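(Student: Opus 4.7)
The strategy is to verify the weak identity \eqref{eq:bogolyubov} by passing to Fourier variables in the spatial variable $x$. Since the Bogolyubov boundary condition \eqref{bogcond} is given as one of the hypotheses, what remains is precisely to check that the distributional equation \eqref{eq:bogolyubov} holds for every test function $\psi \in C_c^\infty(\Reals^9)$.

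First, fixing $v_1, v_2 \in \Reals^3$ with $v_r := v_1 - v_2 \neq 0$, I would view the definition \eqref{eq:Bogcorr} of $\hat g_B$ as an identity in $\Sch'(\Reals^3_k)$. Writing
\begin{align*}
M(k,v_1,v_2) \;:=\; (\nabla_{v_1}-\nabla_{v_2})(ff) + \nabla f(v_1)\overline{\hat h_B}(k,v_2) - \nabla f(v_2)\hat h_B(k,v_1),
\end{align*}
the formula \eqref{eq:Bogcorr} reads $\hat g_B = \hat\phi(k)\,\omega\cdot M / (\omega v_r - i0)$. For $v_r \neq 0$, the factor $(\omega v_r - i0)^{-1}$ is a well-defined tempered distribution in $k$, and the elementary identity $z\cdot(z - i0)^{-1} = 1$, applied in the smooth coordinate $z = \omega(k) v_r$ on $\Reals^3_k \setminus \{0\}$, yields the distributional relation
\begin{align*}
\omega\cdot v_r\,\hat g_B(k,v_1,v_2) \;=\; \hat\phi(k)\,\omega\cdot M(k,v_1,v_2).
\end{align*}
Multiplying by $i|k|$ gives the Fourier form of the strong Bogolyubov PDE,
\begin{align*}
i k \cdot v_r\,\hat g_B(k,v_1,v_2) \;=\; i k \hat\phi(k) \cdot M(k,v_1,v_2).
\end{align*}

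Second, I would translate \eqref{eq:bogolyubov} into Fourier variables via Parseval in $x$. The transport term $-\int (v_1-v_2)g_B\partial_x\psi$ becomes $\int i k \cdot v_r\,\hat g_B\,\overline{\hat\psi}$; the source term becomes $\int (\nabla_{v_1}-\nabla_{v_2})(ff)\cdot i k\hat\phi(k)\,\overline{\hat\psi}$; and the two convolution terms, using that $\hat\phi$ is real (radial potential), that $h_B$ is real, and the elementary identities $\widehat{\nabla\phi(\cdot + y)}(k) = ik\hat\phi(k)e^{iky}$, $\widehat{\nabla\phi(-\cdot + y)}(k) = -ik\hat\phi(k)e^{-iky}$, transform into $\int \nabla f(v_1)\cdot ik\hat\phi(k)\,\overline{\hat h_B}(k,v_2)\,\overline{\hat\psi}$ and $-\int \nabla f(v_2)\cdot ik\hat\phi(k)\,\hat h_B(k,v_1)\,\overline{\hat\psi}$ respectively. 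At this point the marginal hypothesis $\int g_B(\cdot,v_1,v_2)\,dv_2 = h_B(\cdot,v_1)$ is crucial: it identifies the function $h_B$ appearing in the convolution terms of \eqref{eq:bogolyubov} with the $\hat h_B$ defined in \eqref{eq:hFourier} and used in the definition of $\hat g_B$. After this identification, the weak formulation reduces to integrating the Fourier identity above against $\overline{\hat\psi}(k,v_1,v_2)$ over $(k,v_1,v_2)$, which is a tautology.

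The main technical point, and likely the only genuine obstacle, is justifying the absolute integrability required to apply Parseval termwise and to exchange orders of integration. Here I would use the decay estimate $|\hat h_B(k,v)|\leq C(k) e^{-|v|}$ from Lemma \ref{lem:h}, the Schwartz decay of $\nabla f$ provided by Assumption \ref{ass:regdecay}, the rapid decay of $\hat\psi$ in $k$ together with its compact support in $(v_1,v_2)$, and the local integrability of $k\hat\phi(k)$ on $\Reals^3_k$ (which holds in the Coulomb case because $k\hat\phi(k) = k/|k|^2$ is locally integrable in $\Reals^3$, and trivially in the soft case). A minor care is needed at the zero set $\{\omega v_r = 0\}$ where the cancellation $z\cdot(z-i0)^{-1}=1$ is applied; this is legitimate for each fixed $v_r \neq 0$ by the elementary distributional identity, and the value at $v_r = 0$ is negligible since the $(v_1,v_2)$ integration picks up this hyperplane only as a null set weighted by the smooth cutoff $\hat\psi$. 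Combining these steps with the assumed boundary behavior \eqref{bogcond} shows that $g_B$ is a weak solution in the sense of Definition \ref{def:weaksol}.
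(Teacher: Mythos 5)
Your proposal is correct and follows essentially the same route as the paper: the paper's proof also reduces the weak formulation \eqref{eq:bogolyubov} to its Fourier-transformed version, observes that the marginal hypothesis identifies $h[g_B]$ with $\hat h_B$, and notes that the resulting algebraic identity holds by the very definition \eqref{eq:Bogcorr} after cancelling the factor $k\cdot v_r$ against the $(k\cdot v_r - i0)^{-1}$ denominator. Your additional remarks on integrability and on the null set $\{\omega\cdot v_r=0\}$ only make explicit what the paper leaves implicit (and delegates to the membership $g_B\in W$ proved in Theorems \ref{thm:screen} and \ref{thm:screensoft}).
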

\begin{proof} 
	Since $g\in W$ by assumption, the equation \eqref{eq:bogolyubov} holds weakly if the Fourier-transformed equation 
	\begin{align}
		(v_1-v_2)ik \hat{g}_B - i k \hat{\phi} \nabla f(v_1) \hat{h}_B(k,v_2) + i\hat{\phi} \nabla f(v_1) \ol{\hat{h}}_B(k,v_2) = ik (\nabla_{v_1}-\nabla_{v_2})(f f)\hat{\phi} ,
	\end{align}
	 holds in the sense of distributions. This is true by the definition of $g_B$ (cf. \eqref{def:Bogcorr}).	
\end{proof}

\section{Characteristic length scale of the equilibrium correlations} \label{sec:Screenest}

In this section, we estimate the Bogolyubov correlations $g_B$, and give sufficient conditions for the onset of a characteristic length scale. In the Coulomb case,
we observe the onset of a characteristic length scale for one-particle functions $f$ that behave like Maxwellians for large velocities, and the characteristic length is given by the Debye length $L_D$ (cf. \eqref{def:Debye}). In the soft potential case, the Bogolyubov correlations always have a characteristic length scale, which coincides with the length scale of the potential. For both types of potentials, we derive the rate of decay. This will provide the assumptions on $h_B$, $g_B$ made in Theorem \ref{thm:existence}, and hence complete
the proof of Theorem \ref{thm:wellpos}.

To this end, for $v_1,v_2\in \Reals^3$ we define $\hat{\Gamma}(\cdot,v_1,v_2)\in \Sch'(\Reals^3)$ by:
\begin{align} \label{def:gamma}
	\hat{\Gamma}(k,v_1,v_2)&:= \hat{\phi}(k)k \left((\nabla_{v_1}-\nabla_{v_2} f)(f  f)+\nabla f(v_1) \ol{\hat{h}}_B(k,v_2)-\nabla f(v_2) \hat{h}_B(k,v_1)\right).
\end{align}
This allows us to get a representation of $g_B$ (cf. \eqref{eq:Bogcorr}) of the form:
\begin{align}
	\hat{g}_B(k,v_1,v_2)=\frac{1}{k(v_1-v_2)-i0} \hat{\Gamma}(k,v_1,v_2).	
\end{align}
Using the notation introduced in \eqref{eq:vecshort}, this yields the identity:
\begin{align} \label{eq:gBconvrep}
	g_B(x,v_1,v_2) &= \frac{2\pi i}{|v_r|} \Gamma(x,v_1,v_2) *_x \left( \cf_{(0,\infty)}(x \cdot \vr) \cdot \Haus^1 \llcorner \operatorname{span} \{\vr\}\right).	
\end{align}
Here we have used the one-dimensional Fourier transform $\F^{-1}(\frac{1}{\cdot-i0})=(2\pi)^\frac12 i \cf_{(0,\infty)}(\cdot)$, and the notation
$\Haus^1 \llcorner Y$ for the one-dimensional Hausdorff-measure supported on a line $Y$. 
The properties of the equilibrium correlations $g_B$ can be analyzed by first characterizing the properties of $\Gamma$, and then using the convolution representation \eqref{eq:gBconvrep}. 

\subsection{Coulomb interaction}

In this paragraph, we analyze the onset of a characteristic length in the Bogolyubov correlations $g_B$ (cf. \eqref{eq:Bogcorr}) in the case of Coulomb interacting particles. Taking the Debye length $L_D$ (cf. \eqref{def:Debye}) as unit of length, the Bogolyubov equation has the form \eqref{Bogolyubov} with $\phi=\phi_C$. 
The result we will prove in this paragraph is the following.
\begin{theorem}[Screening in the Coulomb case] \label{thm:screen}
	Let $g_B$ be defined by \eqref{eq:Bogcorr}, where $f$ satisfies
	the Assumptions  \ref{ass:regdecay}-\ref{Ass:onepart1} and  $\phi=\phi_{C}$ is
	the Coulomb potential (cf. Definition \ref{def:potentials}). 
	Further let $f$ satisfy Assumption \ref{Ass:Maxwellian} (Maxwellian behavior for $|v|\rightarrow \infty$)
	or Assumption \ref{Ass:Exponential} (Exponential behavior for $|v|\rightarrow \infty$).
	Then the marginal of $g_B$ coincides
	with $h_B$:
	\begin{align} \label{eq:hismarginal}
		\int g_B(x,v_1,v_2) \ud{v_2} = h_B(x,v_1).
	\end{align} 
	We recall the definition of  $v_r$ in \eqref{eq:vecshort}, and $b,d,d_-$ in \eqref{def:impactdist}. Let
	$K\subset \Reals^3$ be compact and $\delta \in (0,1)$.
	Under Assumption \ref{Ass:Maxwellian}, $g_B$, $h_B$ satisfy the following estimates for $x \in \Reals^3$, $v_1,v_2\in K$:
	\begin{align} 
		|g_B(x,v_1,v_2)| &\leq \frac{C(K,\delta)}{|v_r|} 
		\frac{1}{(|b|+d_-)(1+|b|+d_-)^{1-\delta}} \label{est:gBMax}, \\
		|h_B(x,v_1)| 	&\leq \frac{C(K,\delta)} {|x|(1+|x|^{3-\delta})}. \label{est:hcoulM}
	\end{align}
	Under Assumption \ref{Ass:Exponential}, $g_B$, $h_B$ satisfy the following estimates for $x \in \Reals^3$, $v_1,v_2\in K$:
	\begin{align}
		|g_B(x,v_1,v_2)| 	&\leq \frac{C(K,\delta)}{|v_r|}
		\frac{(1+|b|+d_-)^{\delta}}{(|b|+d_-)} \label{est:gBExp},  \\
		|h_B(x,v)| 		&\leq \frac{C(K,\delta)} {|x|(1+|x|^{2-\delta})}. \label{est:hcoulE}
	\end{align}
\end{theorem}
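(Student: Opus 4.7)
The plan is to separate the marginal identity \eqref{eq:hismarginal}, which follows algebraically from the preceding lemmas, from the pointwise estimates, which require harmonic analysis. For the marginal, writing $\hat g_B(k,v_1,v_2) = \hat\Gamma(k,v_1,v_2)/(k\cdot(v_1-v_2) - i0)$ and using $k\hat\phi(k) = |k|\omega\hat\phi(k)$ to rearrange factors, the integral $\int \hat g_B(k,v_1,v_2)\,\ud{v_2}$ is literally the right-hand side of the equation \eqref{eq:heq} that $\hat h_B(k,v_1)$ satisfies by Lemma \ref{lem:h}; inverse Fourier transform in $x$ then gives $\int g_B(x,v_1,v_2)\,\ud{v_2} = h_B(x,v_1)$.

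For the decay, I would first analyze $\hat h_B(k,v)$ via the representation \eqref{eq:hFourier}. On the region $|k|\geq\delta$ the lower bound \eqref{est:epseps} together with the decay of $f$ makes $\hat h_B(k,v)$ Schwartz in $k$ uniformly for $v$ in compact sets, and contributes only a Schwartz remainder to $h_B$ in $x$. The delicate region is $|k|\leq r$, where the factor $A^-(k,u)$ from \eqref{def:Apm} sees the Langmuir zeros $u_0^\pm(k,\omega)\sim \pm 1/|k|$ of $\Re\eps$ identified in Lemma \ref{lem:u0lemma}: the weight $|\eps(k,-|k|u)|^{-2}$ inside $A^-$ develops an exponentially tall spike in an exponentially thin layer around $u_0^\pm$. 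I would isolate the contribution of $I(k,\chi)$ by a smooth cutoff and apply the change of variables $u = \Psi^\pm(k,\omega,y)$ from \eqref{def:Psidef}; the bounds \eqref{eq:exp1}--\eqref{eq:exp3} under Assumption \ref{Ass:Exponential}, and the strengthening \eqref{eq:max3} under Assumption \ref{Ass:Maxwellian}, are tailored so that after the substitution the transformed integrand and its derivatives in $(k,\chi,y)$ up to order six are uniformly bounded. Unwinding, one finds that modulo a Schwartz remainder $\hat h_B(k,v)$ admits near $k=0$ a conic decomposition $|k|^{\ell}\,\Phi(k, k/|k|, v)$ with $\Phi\in C^6$ compactly supported, where $\ell = 0$ under Assumption \ref{Ass:Exponential} and $\ell = 1$ under Assumption \ref{Ass:Maxwellian}; the extra factor of $|k|$ is precisely what \eqref{eq:max3} encodes beyond \eqref{eq:exp3}. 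Applying Lemma \ref{cordecay} then produces the large-$|x|$ decay $|h_B(x,v)|\leq C(K,\delta)|x|^{-3-\ell+\delta}$, while the local $|x|^{-1}$ behavior is inherited from the $\hat\phi(k)=|k|^{-2}$ factor at large $|k|$; together these yield \eqref{est:hcoulM} and \eqref{est:hcoulE}.

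The same analysis applied to $\hat\Gamma$ in \eqref{def:gamma} produces $|\Gamma(x,v_1,v_2)|\leq C(K,\delta)|x|^{-2-\ell+\delta}$ uniformly for $v_1,v_2\in K$, where the extra $|k|^{-1}$ singularity at the origin coming from the $\hat\phi(k)k$ factor accounts for the exponent shift from the estimate on $h_B$. Feeding this into the convolution representation \eqref{eq:gBconvrep} finishes the proof: convolution against $\cf_{(0,\infty)}(x\cdot\vr)\,\Haus^{1}\llcorner\operatorname{span}\{\vr\}$ integrates $\Gamma(\cdot,v_1,v_2)$ along the forward ray from $x$ in direction $-\vr$; the Jacobian produces the prefactor $1/|v_r|$, and the one-dimensional integration converts the $|x|^{-2-\ell+\delta}$ decay of $\Gamma$ into decay $(|b|+d_-)^{-1-\ell+\delta}$ in the natural distance to the past half-line, with $d_-$ appearing precisely because only $x\cdot\vr>0$ contributes. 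This is exactly the content of \eqref{est:gBMax} and \eqref{est:gBExp}.

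The main obstacle is the conic decomposition of $\hat h_B$ near $k=0$: the cancellation between the exponentially tall spike of $|\eps|^{-2}$ and the simultaneously exponentially small values of $F$ and $\partial_u F$ at $u_0^\pm$ is subtle and only becomes manifest after the boundary-layer rescaling through $\Psi^\pm$. Controlling six derivatives of the resulting amplitude in $(k,\chi,y)$, which is what Lemma \ref{cordecay} requires in dimension three for the stated near-sharp decay, is the technical content of the detailed bounds \eqref{eq:exp1}--\eqref{eq:exp3} or \eqref{eq:max3}. The restriction to compact velocity sets reflects the same mechanism: for unbounded $v$, the value $\omega\cdot v$ can coincide with $u_0^\pm(k,\omega)$ at some small $|k|$, breaking the separation used to keep $\eps$ away from zero and invalidating the decomposition.
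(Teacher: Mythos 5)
Your route is the paper's route: expand $\hat h_B$ near $k=0$ by isolating the Langmuir interval $I(k,\chi)$ and rescaling through $\Psi^\pm$ (this is Lemmas \ref{lem:Tdec}, \ref{lem:Dest} and \ref{lem:hexpansion} in the paper), obtain a conic decomposition $|k|^{\ell}\Phi(k,k/|k|,v)$ with $\ell=0$ (exponential) or $\ell=1$ (Maxwellian), apply Lemma \ref{cordecay} to $\hat h_B$ and $\hat\Gamma$, keep the $|x|^{-1}$ local singularity from the $|k|^{-2}$ tail, and push the resulting bound on $\Gamma$ through the half-line convolution \eqref{eq:gBconvrep}. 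The exponents you state at each stage agree with the paper's ($m=2+\ell$ for $\Gamma$, one power lost in the line integration, $1/|v_r|$ from the kernel, $d_-$ from the one-sidedness). Two points, one of which is a genuine gap.

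First, the gap: your treatment of the marginal identity \eqref{eq:hismarginal} as a purely algebraic consequence of \eqref{eq:heq} presupposes that $g_B(x,v_1,\cdot)$ is absolutely integrable over $v_2\in\Reals^3$, and this does not follow from the pointwise estimates you prove, because those hold only for $v_1,v_2$ in a compact set $K$ with constants $C(K)$ that blow up as $K$ grows. The obstruction is exactly the mechanism you invoke to explain the compactness restriction: for $|v_2|$ large and $|k|$ small, $\omega\cdot v_2$ can hit the Langmuir zeros $u_0^\pm(k,\omega)$, $\eps(k,-k\cdot v_2)$ degenerates (cf. Remark \ref{rem:couldiel}), and $\hat h_B(k,v_2)$ is not uniformly controlled. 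The paper spends a substantial part of the proof on this: it splits $\hat g_{B,1}$ according to $|\omega\cdot(v_1-v_2)|\lessgtr 1$ (on the small-relative-velocity piece $|\omega\cdot v_2|$ stays bounded so $\eps$ is bounded below), and on the complementary piece it exploits a cancellation in the $v$-integral of $\hat h_B$, proved separately as Lemma \ref{lem:hbound}, to get $\int_{B_2}|\int\hat h_B(k,v)\,\ud v|\,\ud k\leq C$. Without some such argument the interchange of $\int\ud{v_2}$ with the Fourier inversion, and hence \eqref{eq:hismarginal} itself, is unjustified.

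Second, a step you should make explicit because the Maxwellian exponent depends on it: when you feed the expansion $\hat h_B=-f(v)+|k|^{\ell}h_{B,0}$ into $\hat\Gamma$, the source term $(\nabla_{v_1}-\nabla_{v_2})(ff)$ must cancel exactly against the contribution of the leading terms $-f(v_2)$, $-f(v_1)$, i.e. $\nabla f(v_1)f(v_2)-f(v_1)\nabla f(v_2)+\nabla f(v_1)(-f(v_2))-\nabla f(v_2)(-f(v_1))=0$. Without this cancellation $\hat\Gamma$ would retain an $\omega/|k|$ singularity independent of the decay of $f$, and you would only get $|\Gamma|\lesssim|x|^{-2+\delta}$ in both cases, losing \eqref{est:gBMax}. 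Your stated exponent $|x|^{-2-\ell+\delta}$ is correct, but the proposal as written does not identify where the improvement over the naive $\hat\phi(k)k$ count comes from. (A minor related inaccuracy: the region $|k|\geq\delta$ does not make $\hat h_B$ Schwartz in $k$ — it only gives the polynomial decay \eqref{est:hinfty}, which is precisely why the $|x|^{-1}$ local singularity survives, as you correctly use later.)
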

Note that the result \eqref{est:gBMax} shows the onset of a characteristic length in the correlations $g_B$ if $f$ satisfies
Assumption \ref{Ass:Maxwellian}, but the estimate \eqref{est:gBExp} indicates this is not in general true for functions satisfying Assumption \ref{Ass:Exponential}.
Furthermore, the estimates \eqref{est:gBMax} and \eqref{est:gBExp} prove that $g_B$ satisfies the Bogolyubov boundary condition \eqref{bogcond}. 

For estimating the decay of the function $g_B$, we use Lemma \ref{cordecay}, i.e.
we expand the Fourier transform of $h_B$ near $k=0$ into  
\begin{align}\label{eq:happrox}
	\hat{h}_B(k,v) = |k|^r T(k,\omega,v), 
\end{align}
where $T$ is some smooth function. Note that
the representation formula for $\hat{h}_B$ \eqref{eq:hFourier} suggests
that \eqref{eq:happrox} holds with $r=-2$, in which case Lemma \ref{cordecay} gives an estimate of $|h(x,v)| \leq C/|x|$ for $|x|\rightarrow \infty$. In other words, naively
one might expect the decay of the correlations to be the same as the decay of the Coulomb potential. 
However, since $\hat{\phi}(k)$ appears also in the dielectric constant $\eps$ in the denominator, we obtain $r>-2$ in \eqref{eq:happrox}.
Computing the precise value of $r$ is subtle, since the denominator $|\eps(k,-|k|u')|^2$ in $P^-[A]$ (appearing in \eqref{eq:hFourier}) becomes singular for $|u'|\sim 1/|k|$, $k\rightarrow 0$ as observed in Remark \ref{rem:couldiel}. The following lemma allows to separate the critical region from the remainder.  
\begin{lemma} \label{lem:Tdec}
	Assume that $f$ satisfies the Assumptions \ref{ass:regdecay}-\ref{Ass:onepart1} and $\phi=\phi_C$ is
	the Coulomb potential.
	There exists $r_0>0$ and $T(k,\chi,v) \in C^6(B_{r_0}(0)\times S^2 \times \Reals^3)$ such that for $|k|\in (0,r_0)$, $\chi\in S^2$, $v\in \Reals^3$:
	\begin{equation} \label{eq:Tdecompos}
		\begin{aligned}
			\int_{\Reals} \frac{\hat{\phi}(k)(\omega \cdot \nabla f(v)) F(\omega,u')}{|1-\hat{\phi}(k)\alpha^-(\omega, u')|^2(\omega\cdot v-u'+i0)} \ud{u'}
			= D(k,\omega,v)  +|k|^2 T(k,\omega,v) .	
		\end{aligned}
	\end{equation}
	Here $D$ is given by the formula ($u_0^\pm, I$ as in Notation \ref{not:psidef})
	\begin{align} \label{def:D}
	D(k,\chi,v) = \int_{I(k,\chi)} \frac{\hat{\phi}(k)(\chi \cdot \nabla f(v)) F(\chi,u')}{|1-\hat{\phi}(k)\alpha^-(\chi, u')|^2(\chi\cdot v-u')} \ud{u'}.
	\end{align}
	Moreover,  $T$ satisfies the estimate:
	\begin{align} \label{est:T}
	\|T(\cdot,\cdot,v)\|_{C^6(B_{r_0}(0)\times S^2)} &\leq C.	
	\end{align}
\end{lemma}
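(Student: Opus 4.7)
The plan is essentially an explicit algebraic decomposition combined with careful book-keeping. First I would compute $|1-\hat\phi(k)\alpha^-(\omega,u')|^2$ explicitly. By the Plemelj--Sokhotski formula $P^- = P + i\pi\,\operatorname{Id}$ (applied in the $u$-variable), $\alpha^-(\omega,u') = \alpha(\omega,u') + i\pi\,\partial_u F(\omega,u')$, so for the Coulomb potential with $\hat\phi(k) = 1/|k|^2$,
\begin{equation*}
|1-\hat\phi(k)\alpha^-(\omega,u')|^2 \;=\; \frac{(|k|^2-\alpha(\omega,u'))^2 + \pi^2(\partial_u F(\omega,u'))^2}{|k|^4},
\end{equation*}
and the integrand on the left-hand side of \eqref{eq:Tdecompos} takes the form
\begin{equation*}
\mathcal{I}(k,\omega,v,u') \;=\; \frac{|k|^2(\omega\cdot\nabla f(v))\,F(\omega,u')}{\bigl[(|k|^2-\alpha(\omega,u'))^2+\pi^2(\partial_u F(\omega,u'))^2\bigr]\,(\omega\cdot v - u' + i0)}.
\end{equation*}
The pre-factor $|k|^2$ now sits explicitly in the numerator; this is the source of the $|k|^2$ appearing in front of $T$.

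Next I would split $\Reals = I(k,\omega)\cup(\Reals\setminus I(k,\omega))$. For $|k|<r_0$ small and $v$ bounded, Lemma \ref{lem:u0lemma} gives $|u_0^\pm(k,\omega)|\gtrsim 1/|k|$, so $\omega\cdot v\notin I(k,\omega)$; hence the $+i0$ prescription is inactive on $I$ and the integral of $\mathcal{I}$ over $I(k,\omega)$ equals $D(k,\omega,v)$ as defined in \eqref{def:D}. One then \emph{defines}
\begin{equation*}
T(k,\omega,v) \;:=\; \int_{\Reals\setminus I(k,\omega)} \frac{(\omega\cdot\nabla f(v))\,F(\omega,u')}{\bigl[(|k|^2-\alpha)^2 + \pi^2(\partial_u F)^2\bigr]\,(\omega\cdot v - u' + i0)}\,du',
\end{equation*}
so that \eqref{eq:Tdecompos} holds by construction.

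The substantive step is the uniform $C^6$ estimate \eqref{est:T}. I would straighten the moving boundary of $\Reals\setminus I(k,\omega)$ by changing variables $u' = u_0^\pm(k,\omega) + s$ on the two subintervals adjoining $u_0^\pm \pm 1$, using Lemma \ref{lem:u0lemma} to control $\partial^{\beta}_{k,\omega} u_0^\pm$; on the resulting fixed-domain integrals, $\partial^{\beta}_{k,\omega,v}$ commutes with integration. Outside the small windows near $u_0^\pm$ the denominator is bounded below by an order-one constant (using Assumption \ref{Ass:onepart1} and the asymptotics of $\alpha$), and the integrand is trivially $C^6$. Inside these windows the denominator can degenerate to order $|k|^6$, because $|k|^2-\alpha$ vanishes at $u_0^\pm$ while $\partial_u F(u_0^\pm)$ is exponentially small; this near-degeneration is precisely what Assumptions \ref{Ass:Exponential}--\ref{Ass:Maxwellian} are engineered to control, the profile $\Psi^\pm$ of Notation \ref{not:psidef} converting the singular quotients into the bounded $C^6$ expressions \eqref{eq:exp1}--\eqref{eq:exp3} (or \eqref{eq:max3} in the Gaussian case). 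The $+i0$ at $u'=\omega\cdot v$ is handled by the Plemelj splitting $1/(\omega\cdot v - u' + i0) = \PV\,1/(\omega\cdot v - u') - i\pi\delta(\omega\cdot v - u')$; both pieces yield smooth dependence on $(k,\omega,v)$ since the rest of the integrand is $C^6$ in $u'$ on the complement of $I(k,\omega)$.

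The main obstacle is precisely the last step: tracking six derivatives through the near-degenerate window $\{|u' - u_0^\pm| \le 1\}$, where one must ensure that the joint smallness of $|k|^2-\alpha$ and $\partial_u F$ really does cancel the negative powers of $|k|$ produced by differentiating $u_0^\pm$ and $\alpha$. This is exactly the role of the rather technical bounds \eqref{eq:exp1}--\eqref{eq:exp3} (respectively \eqref{eq:max3}); once those are invoked, the $C^6$ estimate \eqref{est:T} follows from routine bookkeeping of the Leibniz and chain rules.
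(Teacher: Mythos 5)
Your decomposition is exactly the paper's: write $\hat{\phi}(k)/|1-\hat{\phi}(k)\alpha^-|^2=|k|^2/\bigl||k|^2-\alpha^-\bigr|^2$, split $\Reals$ into $I(k,\chi)$ and its complement, let the window contribution be $D$ and define $T$ as the integral over $\Reals\setminus I(k,\chi)$. The gap is in your justification of the uniform $C^6$ bound \eqref{est:T}. First, your claim that ``outside the small windows near $u_0^\pm$ the denominator is bounded below by an order-one constant'' is false: at $u'=u_0^\pm(k,\chi)\pm 1$, which lies in the domain of $T$, one has $|k|^2-\alpha(\chi,u')\approx \partial_u\alpha(\chi,u_0^\pm)\sim |u_0^\pm|^{-3}\sim |k|^{3}$ by \eqref{est:alphabounds}, while $\partial_u F(\chi,u')$ is exponentially small there, so the denominator degenerates to order $|k|^{6}$ \emph{inside} the domain of $T$, not only inside $I(k,\chi)$. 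The correct mechanism, which is what the paper uses, is the quantitative lower bound $\bigl||k|^2-\alpha^-(\chi,u)\bigr|^{-1}\leq C(1+|u|^3)$ valid on all of $\Reals\setminus I(k,\chi)$ (a consequence of \eqref{est:alphabounds} and the stability condition \eqref{fstable1}); the resulting polynomial growth in $u'$, together with the negative powers of $|k|$ produced by differentiating $u_0^\pm$ (Lemma \ref{lem:u0lemma}), is then absorbed by the exponential decay of $F(\chi,u')$ from \eqref{Ass:decayf}, since the degenerate region sits at $|u'|\sim 1/|k|$.

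Second, and more seriously, you repair the degeneration by invoking Assumptions \ref{Ass:Exponential}--\ref{Ass:Maxwellian} and the profiles $\Psi^\pm$ of Notation \ref{not:psidef}. Those are not hypotheses of this lemma, which assumes only \ref{ass:regdecay}--\ref{Ass:onepart1}; a proof of \eqref{est:T} that needs them proves a weaker statement. In the paper's architecture those assumptions are reserved for the \emph{window} term $D$, which is estimated separately in Lemma \ref{lem:Dest}; the whole point of excising $I(k,\chi)$ is that the remainder $T$ can be controlled without them. So while your splitting and the extraction of the $|k|^2$ prefactor are correct, the argument for \eqref{est:T} as written would not close: you need to replace the order-one lower bound by the polynomial bound $C(1+|u|^3)$ and then trade the ensuing powers of $|u'|\sim 1/|k|$ against $F(\chi,u')\lesssim e^{-|u'|}$, rather than appealing to the asymptotic-profile assumptions. (The sign in your Plemelj formula, $P^-=P-i\pi\,\mathrm{Id}$ rather than $+i\pi$, is immaterial here since only $|\alpha^-|^2$ enters.)
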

\begin{proof}
	We decompose $\alpha^-$ (cf. \eqref{def:alpha}) into its real and imaginary part:
	\begin{align}
	\alpha^-(\chi,u) &= \alpha(\chi,u) -i\pi \partial_u F(\chi,u).
	\end{align}	
	By Lemma \ref{lem:u0lemma}, for $|k|\in (0,r_0)$ small enough and $\chi \in S^2$ there exist $u_0^\pm(k,\chi)$ such that 
	\eqref{def:u0} holds. 
	By the estimate \eqref{est:alphabounds}, after possibly choosing a smaller $r_0>0$, the following holds for $|k|\in (0,r_0)$ and $u \neq I(k,\chi)$:
	\begin{align} \label{est:alphabound2}
	\frac{1}{||k|^2 + \alpha^-(\chi,u)|}\leq C(1+|u|^3).
	\end{align}
	Now the claim follows by decomposing:
	\begin{align*} 
	&\int_{\Reals} \frac{\hat{\phi}(k)(\omega \cdot \nabla f(v)) F(\omega,u')}{|1-\hat{\phi}(k)\alpha^-(\omega, -|k|u')|^2(\omega\cdot v-u'+i0)} \ud{u'} \\
	=	& |k|^2\int_{\Reals \setminus I(k)} \frac{(\omega \cdot  \nabla f(v)) F(\omega,u')}{||k|^2 + \alpha^-(\omega,u')|^2(\omega v-u'+i0)} \ud{u'}
	+ \int_{I(k)} \frac{\hat{\phi}(k)(\omega \cdot \nabla f(v)) F(\omega,u')}{|1-\hat{\phi}(k)\alpha^-(\omega, u')|^2(\omega v-u')} \ud{u'}, 	
	\end{align*}
	since by \eqref{est:alphabound2} the function $T$ given by:
	\begin{align}
	T(k,\chi,v):=\int_{\Reals \setminus I(k)} \frac{(\chi \cdot \nabla f(v)) F(\chi,u')}{||k|^2 + \alpha^-(\chi,u')|^2(\chi\cdot v-u'+i0)} \ud{u'}	
	\end{align}
	satisfies the estimate \eqref{est:T}.
\end{proof}
Now we have decomposed the integral \eqref{eq:Tdecompos} into a well-behaved part $T$, and the singular integral $D$. The behavior of $D$ for large $v$ depends on the behavior of $f$ as $v\rightarrow \infty$. If $f$ behaves like a Maxwellian, we have $D(k,v)\approx |k|$ for small $k$. If $f$
behaves like an exponential, the function is of order one close to the origin.

\begin{lemma}[Expansion of $D$ at $k=0$] \label{lem:Dest}
	Let $f$ satisfy the Assumptions \ref{ass:regdecay}-\ref{Ass:onepart1}. Rewrite the function $D$ defined by \eqref{def:D} in the following form:
	\begin{align}
	D(k,\chi,v) &=  		\gamma_h(k,\chi,v)  \quad &\text{ if $f$ satisfies Assumption \ref{Ass:Exponential},} \label{exp:Dexp}\\
	D(k,\chi,v) &= |k| 	\gamma_h(k,\chi,v) \quad 	&\text{ if $f$ satisfies Assumption \ref{Ass:Maxwellian}}  \label{exp:DMax}.
	\end{align}
	We can choose $\gamma_h \in C(B_{r_0}(0) \times S^2 \times \Reals^3)$ ($r_0$ as in Lemma \ref{lem:Tdec}) such that for any $K\subset \Reals^3$ compact  
	\begin{align}
	|\nabla^j_{k,\chi} \gamma_h(k,\chi,v)| \leq C(K), \quad  \text{for $j=0,1\ldots,6$, $k\in B_{r_0}(0)$, $\chi \in S^2$ and $v\in K$.} \label{est:D}
	\end{align}
	Similarly, for $\chi \in S^2$, $k\in B_{r_0}(0)$, $v_1,v_2\in \Reals^3$ write :
	\begin{align}
	  \chi (\nabla f(v_2) D(k,\chi,v_1) -   \nabla f(v_1) D(k,\chi,v_2))  = |k| \gamma_g (k,\chi,v_1,v_2) \quad  &\text{under  Assumption \ref{Ass:Exponential},}  \label{exp:symDexp}\\
	\chi (\nabla f(v_2) D(k,\chi,v_1) -   \nabla f(v_1) D(k,\chi,v_2))  = |k|^2 \gamma_g (k,\chi,v_1,v_2) \quad  &\text{under  Assumption \ref{Ass:Maxwellian}}\label{exp:symDMax}.
	\end{align}
	In both cases, we can choose $\gamma_g \in C(B_{r_0}(0) \times S^2 \times \Reals^3 \times \Reals^3)$ such that for all $K\subset \Reals^3$ compact:
	\begin{align} \label{est:Dsym}
	|\nabla^j_{k,\chi} \gamma_g(k,\chi,v_1,v_2)| \leq  C(K), \quad  \text{for $0\leq j\leq 6$, $k\in B_{r_0}(0)$, $\chi \in S^2$ and $v_1,v_2 \in K$.}	
	\end{align}
\end{lemma}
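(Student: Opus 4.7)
My plan is to perform the substitution $u' = \Psi^\pm(k,\chi,y) = u_0^\pm + yL^\pm$ from Notation \ref{not:psidef} in the integral defining $D^\pm(k,\chi,v)$ over $I^\pm(k,\chi) = (u_0^\pm-1, u_0^\pm+1)$, carrying the integration to $|y| \leq 1/L^\pm$ with Jacobian $L^\pm$. This rescaling is chosen precisely so that the Assumptions \ref{Ass:Exponential}, \ref{Ass:Maxwellian} guarantee the integrand, after substitution, factors as a $C^6$-bounded function of $(k,\chi,y)$ divided by an integrable elliptic kernel $r_1^2 y^2 + \pi^2$.

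\textbf{Step 1 (substitution).} Using $\hat\phi_C(k) = 1/|k|^2$ together with the Plemelj identity $\alpha^- = \alpha - i\pi \partial_u F$, one has
\[
|1-\hat\phi\alpha^-(\chi,\Psi^\pm)|^2 = |k|^{-4}\bigl[(|k|^2-\alpha(\chi,\Psi^\pm))^2 + \pi^2(\partial_u F(\chi,\Psi^\pm))^2\bigr].
\]
By \eqref{eq:exp2a}--\eqref{eq:exp2b}, the ratio $r_1(k,\chi,y) := (|k|^2-\alpha(\chi,\Psi^\pm))/(y\partial_u F(\chi,\Psi^\pm))$ is $C^6$-bounded and bounded away from zero, so the bracket factors as $(\partial_u F(\chi,\Psi^\pm))^2(r_1^2 y^2+\pi^2)$. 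Writing $F_* = F(\chi,\Psi^\pm)$, $G_* = \partial_u F(\chi,\Psi^\pm)$, the integrand becomes
\[
\frac{|k|^2(\chi\cdot\nabla f(v))\,L^\pm}{\chi\cdot v - \Psi^\pm}\cdot\frac{F_*/G_*}{G_*(r_1^2 y^2+\pi^2)}\,dy.
\]

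\textbf{Step 2 (counting powers of $|k|$).} Using the identity $L^\pm\,\partial_u\alpha(\chi,u_0) = \partial_u F(\chi,u_0)$ together with \eqref{eq:exp1}, the factor $|k|^2 L^\pm / G_*$ rewrites as a $C^6$-bounded quantity divided by $|k|$ (the $|k|^3/A_0$ contributes one bounded piece, and the residual $G_0/G_*$ is bounded after composition with the smooth $\Psi^\pm$-dependence on $y$). The ratio $F_*/G_*$ contributes $O(1)$ (Exp, by \eqref{eq:exp3}) or $O(|k|)$ (Max, by \eqref{eq:max3}). Finally, since $|\Psi^\pm| \sim 1/|k|$ by \eqref{est:u0bounds} and $v \in K$ compact, $1/(\chi\cdot v - \Psi^\pm) = O(|k|)$ with uniform $C^6$-bounds in $(k,\chi)$. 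Multiplying these contributions, and absorbing the absolutely convergent $y$-integral of $1/(r_1^2 y^2+\pi^2)$, yields $D^\pm = \gamma_h$ (Exp) or $|k|\gamma_h$ (Max) with $\gamma_h$ obeying \eqref{est:D}.

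\textbf{Step 3 (antisymmetric combination).} For \eqref{exp:symDexp}--\eqref{exp:symDMax} I use the factorization $D(k,\chi,v) = (\chi\cdot\nabla f(v))\tilde D(k,\chi,\chi\cdot v)$ obtained by pulling the $\chi\cdot\nabla f$ prefactor out of the integral. Then
\[
\chi\cdot\bigl(\nabla f(v_2)\,D(k,\chi,v_1) - \nabla f(v_1)\,D(k,\chi,v_2)\bigr) = (\chi\cdot\nabla f(v_1))(\chi\cdot\nabla f(v_2))\bigl[\tilde D(\chi\cdot v_1) - \tilde D(\chi\cdot v_2)\bigr].
\]
Writing $\tilde D(w_1)-\tilde D(w_2) = \chi\cdot(v_1-v_2)\int_0^1 \partial_w \tilde D(w_2+s(w_1-w_2))\,ds$ gains one extra power of $|k|$: the derivative $\partial_w$ replaces $1/(\chi\cdot v-\Psi^\pm)$ by $1/(\chi\cdot v-\Psi^\pm)^2$, which is of size $O(|k|^2)$ rather than $O(|k|)$. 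This upgrades the $O(1)$ scaling to $O(|k|)$ in the exponential case and the $O(|k|)$ scaling to $O(|k|^2)$ in the Maxwellian case, giving the bounds \eqref{est:Dsym}.

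\textbf{Main obstacle.} The delicate point is that the $C^6$-smoothness in $(k,\chi)$ must be established despite the integration endpoints $\pm 1/L^\pm$ themselves depending on $(k,\chi)$, with $L^\pm$ exponentially small in $1/|k|$ (resp. $1/|k|^2$). Differentiation under the integral then produces boundary terms at $y = \pm 1/L^\pm$, where $\partial_k(1/L^\pm)$ can be exponentially large. However, at $y = \pm 1/L^\pm$ we have $\Psi^\pm = u_0^\pm \pm 1 \sim \pm 1/|k|$, and the integrand carries factors $F(\chi,u_0^\pm\pm 1)$, $\partial_u F(\chi,u_0^\pm\pm 1)$ that are exponentially small in $|u_0^\pm \pm 1|$ by Assumption \ref{ass:regdecay}; a direct scaling check shows these boundary contributions (together with their $k,\chi$-derivatives up to order six) are $O(e^{-c/|k|})$ and hence uniformly bounded. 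Thus the $C^6$-estimate reduces to differentiating the $C^6$-bounded integrand under the integral sign, where the bounds from Assumptions \ref{Ass:Exponential}, \ref{Ass:Maxwellian} close the argument.
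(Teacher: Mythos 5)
Your proposal is correct and follows essentially the same route as the paper: the change of variables $u'=\Psi^\pm(k,\chi,y)$, the factorization of the integrand into the $C^6$-bounded ratios $\tfrac{|k|^3}{\partial_u\alpha(\chi,\Psi)}$, $\tfrac{|k|^2-\alpha(\Psi)}{y\,\partial_u F(\chi,\Psi)}$, $F/\partial_u F$ and $\tfrac{|k|^{-1}}{\chi\cdot v-\Psi}$ supplied by Assumptions \ref{Ass:Exponential}/\ref{Ass:Maxwellian}, the identical power counting in $|k|$, and for the antisymmetric combination the same gain of one power of $|k|$ (your fundamental-theorem-of-calculus representation is equivalent to the paper's algebraic identity for $z_{sym}$). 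Your explicit treatment of the $(k,\chi)$-dependent endpoints $\pm 1/L^\pm$ is a point the paper leaves implicit, and is a welcome addition rather than a deviation.
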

\begin{proof}
	After changing variables with $\Psi(k,\chi,\cdot)$, $D$ reads:
	\begin{align}
	D(k,\chi,v) &= \int_{-1/L}^{1/L} \frac{|k|^2 \chi \nabla f(v) F(\chi,\Psi(y)) L(k,\chi)}{||k|^2-\alpha(\Psi(y))|^2 + |\partial_u F(\chi,\Psi(y))|^2} \frac{1}{\chi \cdot v-\Psi(y)} \ud{y} \\
	&=  \int_{-1/L}^{1/L} \frac{|k|^3}{\partial_u \alpha(\chi,\Psi)}\frac{ \chi \nabla f(v) 
		(F/\partial_u F)(\chi,\Psi) }
	{\left| \frac{|k|^2-\alpha(\Psi)}{y\partial_u F(\chi,\Psi)}\right|^2 y^2 + 1} \frac{|k|^{-1}}{\chi \cdot v-\Psi(y)} \ud{y}.
	\end{align}
	If $f$ satisfies Assumption \ref{Ass:Exponential}, then for $|k|\leq \lambda$ small enough, the functions $F/\partial_u F$, $\frac{|k|^3}{\partial_u \alpha(\chi,\Psi)} $ and
	$\frac{|k|^2-\alpha(\Psi)}{y\partial_u F(\chi,\Psi)}$ are bounded, as well as their derivatives in $k,\chi$. Furthermore,
	$|\frac{|k|^2-\alpha(\Psi)}{y\partial_u F(\chi,\Psi)}|\geq c >0$ is bounded below. 
	Additionally, we use $\psi(k,\chi, y)\in I(k,\chi)$ and $|\chi \cdot v|\leq C(K)$ to infer that the function 
	\begin{align}
	z(k,\chi,v,y) = \frac{|k|^{-1}}{\chi \cdot v-\Psi(y)}	
	\end{align}
	is bounded as well as its derivatives in $k,\chi$. Hence, under Assumption \ref{Ass:Exponential}
	the expansion \eqref{exp:Dexp} with the estimate
	\eqref{est:D} follow by differentiating through the integral. Similarly, we prove \eqref{exp:DMax} with the estimate \eqref{est:D} under Assumption	\ref{Ass:Maxwellian}.
	
	The expansions \eqref{exp:symDexp}-\eqref{exp:symDMax} with the estimate \eqref{est:Dsym} are proved analogously, using the fact that
	\begin{align}
	z_{sym}(k,\chi,v_1,v_2,y) 	= (\frac{|k|^{-2}}{\chi \cdot v_1 -\Psi(y)} - \frac{|k|^{-2}}{\chi \cdot v_2 -\Psi(y)} )
	= \frac{|k|^{-2}\chi(v_2-v_1)}{(\chi \cdot v_1 -\Psi(y))(\chi \cdot v_1 -\Psi(y))},
	\end{align}
	is a bounded function, as well as its derivatives in $k,\chi$. 
\end{proof}
We now prove an integral estimate for $\hat{h}_B(k,v)$ (cf. \eqref{eq:hFourier}).
\begin{lemma} \label{lem:hbound}
	Let $f$ satisfy the Assumptions \ref{ass:regdecay}-\ref{Ass:onepart1}, and Assumption \ref{Ass:Exponential} or \ref{Ass:Maxwellian}.
	Further let $\phi=\phi_C$ be the Coulomb potential and $h_B$ be given by \eqref{eq:hFourier}. Then there exists $C>0$ such that
	\begin{align} \label{eq:hbound}
		\int_{B_2} \left|\int_{\Reals^3} \hat{h}_B(k,v) \ud{v} \right|\ud{k}\leq C. 
	\end{align}
\end{lemma}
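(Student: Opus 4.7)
The plan is to first derive the explicit identity
\[
\int_{\Reals^3} \hat{h}_B(k,v) \ud{v} = \int_{\Reals} \frac{F(\omega,u)}{|\eps(k,-|k|u)|^2} \ud{u} - 1,
\]
which I shall call $(\ast)$, and then to bound its right-hand side in $L^1(B_2)$ using the known behavior of $\eps$ near $k=0$.

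To derive $(\ast)$, I would split $\hat{h}_B$ into the two summands of \eqref{eq:hFourier} and reduce each $v$-integral to a $u$-integral via the coarea identities $\int f(v) g(\omega\cdot v) \ud{v} = \int F(\omega,u) g(u) \ud{u}$ and $\int (\omega\cdot\nabla f)(v) g(\omega\cdot v) \ud{v} = \int \partial_u F(\omega,u) g(u) \ud{u}$. The first summand produces $\int F(1/\eps - 1) \ud{u}$. For the second, I would rewrite $A^-/\eps = P^-[F/|\eps|^2]$ directly from Notation \ref{Not:defAB} (using $1 - B^- = \eps$), and then move the $P^-$ operator across via the Hilbert-transform duality $\int f \cdot P^-[g] \ud{u} = -\int g \cdot P^+[f] \ud{u}$, which is an immediate consequence of $P^\pm = P \pm i\pi\,\operatorname{Id}$ from the Plemelj--Sokhotski formula. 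This produces $\int F\, B^+/|\eps|^2 \ud{u}$. Summing the two contributions and using $\bar{\eps} + B^+ = 1$ (since $\bar{\eps} = 1 - \overline{B^-} = 1 - B^+$ by Lemma \ref{Lem:P}) collapses $1/\eps + B^+/|\eps|^2$ into the single term $1/|\eps|^2$, yielding $(\ast)$.

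With $(\ast)$ established, I would split $B_2 = B_{r_0/2} \cup (B_2 \setminus B_{r_0/2})$ with $r_0$ as in Lemma \ref{lem:Tdec}. On the outer annulus, \eqref{est:epseps} gives $|\eps(k,-|k|u)| \geq c_2(r_0/2) > 0$ uniformly, so the integrand of $(\ast)$ is bounded and contributes a finite amount to the $L^1$ norm. On $B_{r_0/2}$ I would further split the $u$-integral into $I(k,\omega)$ and its complement. Off $I(k,\omega)$, the estimate \eqref{est:alphabound2} gives $1/|\eps|^2 = |k|^4/||k|^2 - \alpha^-|^2 \leq C|k|^4(1+|u|^6)$, and the rapid decay of $F$ yields $\int_{\Reals \setminus I} F/|\eps|^2 \ud{u} \leq C|k|^4$. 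On $I(k,\omega)$ I would perform the substitution $u = \Psi^\pm(k,\omega,y)$ used in the proof of Lemma \ref{lem:Dest}; the integrand becomes a Lorentzian in $y$, and Assumption \ref{Ass:Exponential} or \ref{Ass:Maxwellian} controls the relevant ratios $F/\partial_u F$ and $|k|^3/\partial_u\alpha$, giving $\int_{I} F/|\eps|^2 \ud{u} = O(|k|)$ (indeed $O(|k|^2)$ in the Maxwellian case). Altogether, $\bigl|\int F/|\eps|^2 \ud{u} - 1\bigr| = O(|k|)$ as $|k|\to 0$, which is comfortably integrable on $B_{r_0/2}$.

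The main obstacle is the derivation of $(\ast)$: each term in \eqref{eq:hFourier} separately carries the strong $\hat{\phi}(k) = 1/|k|^2$ singularity at the origin, and it is only after invoking the Hilbert-transform duality together with the algebraic identity $\bar{\eps} + B^+ = 1$ that these apparently singular contributions combine into the manageable, non-negative quantity $F/|\eps|^2$. Once $(\ast)$ is in hand, the $L^1(B_2)$ bound reduces to pointwise estimates on $\eps$ that are essentially contained in Lemmas \ref{lem:dielectric} and \ref{lem:Dest}.
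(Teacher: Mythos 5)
Your proof is correct, but it takes a genuinely different route from the paper's. The paper never forms your identity $(\ast)$: it simply integrates \eqref{eq:hFourier} in $v$ via the Radon transform and bounds the two resulting terms, $\int F\,\tfrac{1-\eps}{\eps}\ud{u}$ and $\hat{\phi}(k)\int P^-[F/|\eps|^2]\,\partial_u F\ud{u}$, \emph{separately}, showing each inner integral is uniformly bounded for small $|k|$ and then using that the leftover prefactor $\hat{\phi}(k)=|k|^{-2}$ is locally integrable in $\Reals^3$. This means your framing of the ``main obstacle'' is an overstatement: no cancellation between the two summands is actually required, and the apparent $|k|^{-2}$ singularity is harmless in three dimensions. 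That said, your derivation of $(\ast)$ is sound — the duality $\int f\,P^-[g]=-\int g\,P^+[f]$ follows from $P^\pm=P\pm i\pi\,\Id$ and the antisymmetry of $P$, the identity $A^-/\eps=P^-[F/|\eps|^2]$ is exactly how the paper itself rewrites the second term, and $\bar{\eps}+B^+=1$ holds since $\hat{\phi}$ and $\partial_u F$ are real (Lemma \ref{Lem:P}); one can also cross-check $(\ast)$ against $\hat{H}_B=\tfrac{1}{2\pi i}(A^+-A^-)-F$ by noting $\int B^\pm P^\pm[F/|\eps|^2]\ud{u}=0$ by closing the contour. What your route buys is a strictly stronger conclusion: the integrand $\int_{\Reals^3}\hat{h}_B(k,v)\ud{v}=\int F/|\eps|^2\ud{u}-1$ is uniformly \emph{bounded} on $B_2$ (and tends to $-1$ as $k\to 0$, consistent with \eqref{eq:hMax}--\eqref{eq:hExp}), whereas the paper's argument only controls it by $C|k|^{-2}$. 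The subsequent estimation of $\int F/|\eps|^2\ud{u}$ — splitting off $I(k,\omega)$, using \eqref{est:alphabound2} outside and the substitution $u=\Psi^\pm$ with the ratios from Assumptions \ref{Ass:Exponential}/\ref{Ass:Maxwellian} inside — uses exactly the same ingredients as the paper's treatment of its two terms, so the analytic content at the singular set $u\approx u_0^\pm$ is shared by both proofs.
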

\begin{proof}
	We start by performing the integration in the direction orthogonal to $\omega$ using Fubini's Theorem:
	\begin{align} \label{est:hintegrated} 
			\int_{B_2} &\left|\int_{\Reals^3} \hat{h}_B(k,v) \ud{v} \right|\ud{k} =\int_{B_2} \left|\int_{\Reals} \int_{\Reals^3} \hat{h}_B (k,v) \delta(u-\omega v) \ud{v}\ud{u}\right| \ud{k} \notag	\\
		\leq &\int_{B_2} \left|\int_{\Reals} F(\omega,u) \frac{1-\eps(k,-|k|u)}{\eps(k,-|k|u)} -\hat{\phi}(k) \frac{A^-(k,u)}{\eps(k,-|k|u)} \partial_u F(k,u)\ud{u}\right| \ud{k} \notag \\
		\leq &C+ \int_{B_2} \left|\int_{\Reals}  \frac{F(\omega,u)}{\eps(k,-|k|u)}\ud{u}\right| \ud{k} + \int_{B_2} \left|\int_{\Reals}  \hat{\phi}(k) \frac{A^-(k,u)}{\eps(k,-|k|u)} \partial_u F(k,u)\ud{u}\right| \ud{k} .
	\end{align}
	Now the estimates follow similar to the proof of the last Lemma. 
	We observe that for $|k|\geq \lambda>0$ bounded away from the origin,  the integrand in the first integral in \eqref{est:hintegrated} is bounded. Further, for
	$\lambda>0$ small enough we know that $|F(u)/\eps(k,-|k|u)|\leq |F(u)/\partial_u F|$ is bounded for $|u-u_0^\pm(k,\omega)|\leq 1$. Finally, on the region
	$|k|\leq \lambda$, $|u-u_0|\geq 1$, the integral is bounded since $|\eps(k,-|k|u)|^{-1}\leq C(1+|u|^3)$.
	
	In order to bound the second integral in \eqref{est:hintegrated}, we recall the definition of $A^-$ \eqref{def:Apm} to rewrite:
	\begin{align*}
		\int_{B_2} \left|\int_{\Reals}  \hat{\phi}(k) \frac{A^-(k,u)}{\eps(k,-|k|u)} \partial_u F(k,u)\ud{u}\right| \ud{k} = \int_{B_2} \left|\int_{\Reals}  \hat{\phi}(k) P^-[\frac{F(k,\cdot)}{|\eps(k,-|k|\cdot)|^2} ](u)\partial_u F(k,u)\ud{u}\right| \ud{k}.
	\end{align*}
	Now the claim follows if we can show that $\left|\int P^-[\frac{F}{|\eps|^2} ](u)\partial_u F(k,u)\ud{u}\right|\leq C$
	is uniformly bounded, for $|k|$ sufficiently small.
	For $I(k,\omega)$ as introduced in \eqref{eq:Idef} we can estimate
	\begin{align}\label{est:hintegrated2}
		\left|\int P^-[\frac{F}{|\eps|^2} ](u)\partial_u F(k,u)\ud{u}\right| \leq 
		C +\left|\int_{I(k)} \int_{ I(k)}\frac{F(k,u') \partial_u F(u)}{|\eps(k,-|k|u')|^2(u-u'-i0)}  \ud{u'} \ud{u}\right| .
	\end{align}
	Now since $f$ satisfies Assumption \ref{Ass:Exponential} or \ref{Ass:Maxwellian}, the function
	$\frac{F(k,u') \partial_u F(u)}{|\eps(k,-|k|u')|^2}$ and its derivative in $u'$ is bounded
	for $u,u'\in I(k)$ and $|k|$ sufficiently small. Therefore, the integral \eqref{est:hintegrated2} is uniformly bounded and the claim follows.
\end{proof}
From the expansion of $D$ near $k= 0$ in Lemma \ref{lem:Dest}, we can now obtain an expansion 
of $\hat{h}_B$ and $\hat{g}_B$ near $k=0$.

\begin{lemma}[Expansion of $\hat{h}_B$ for $|k|\rightarrow  0$ and $|k|\rightarrow \infty$] \label{lem:hexpansion}
	Assume that $f$ satisfies the Assumptions \ref{ass:regdecay}-\ref{Ass:onepart1} and $\phi=\phi_C$ is
	the Coulomb potential.
	Let $\hat{h}_B$ be given by	\eqref{eq:hFourier} and $K \subset\Reals^3$ compact.
	Then there exists a function $\hat{h}_{B,0}(k,\chi,v)\in C^6(B_1(0)\times S^2 \times \Reals^3)$ such that:
	\begin{align}
		\|\hat{h}_{B,0}(\cdot,\cdot,v)\|_{C^6(B_1(0)\times S^2)}& \leq C(K), \quad &\text{for $v\in K$} \label{est:hBOest}\\
	\hat{h}_B(k,v) &= -f(v) + |k|\hat{h}_{B.0}(k,k/|k|,v), \quad 	&\text{under Assumption \ref{Ass:Maxwellian}} \label{eq:hMax} \\
	\hat{h}_B(k,v) &= -f(v) + \hat{h}_{B.0}(k,k/|k|,v), 	\quad 	&\text{under Assumption \ref{Ass:Exponential}} \label{eq:hExp}.
	\end{align}	
	Furthermore for  $|k|\geq 1$ and $\ell\in 1,\cdots,6$ we have:
	\begin{align}
	|\nabla^\ell_k \hat{h}_B(k,v)| \leq \frac{C}{1+|k|^{\ell+2}}e^{-|v|} \label{est:hinfty}. 
	\end{align}
\end{lemma}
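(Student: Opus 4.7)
The plan is to reduce the small-$|k|$ analysis to Lemmas \ref{lem:Tdec} and \ref{lem:Dest} by exploiting identities specific to the Coulomb case, and to handle the large-$|k|$ tail directly from \eqref{eq:hFourier}. For Coulomb one has $\hat\phi(k)=|k|^{-2}$ and $\eps(k,-|k|u)=(|k|^2-\alpha^-(\omega,u))/|k|^2$. The first move is to rewrite the leading summand of \eqref{eq:hFourier} as
\begin{align*}
f(v)\frac{1-\eps(k,-kv)}{\eps(k,-kv)} \;=\; f(v)\frac{\alpha^-(\omega,\omega v)}{|k|^2-\alpha^-(\omega,\omega v)} \;=\; -f(v) \;+\; \frac{f(v)\,|k|^2}{|k|^2-\alpha^-(\omega,\omega v)},
\end{align*}
which extracts the dominant $-f(v)$ contribution. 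The remainder is $C^\infty$ jointly in $(k,\chi)$ on $B_1\times S^2$ since the denominator is uniformly bounded away from zero by \eqref{fstable1} and Lemma \ref{lem:dielectric}, and it vanishes to order $|k|^2$ at the origin.

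For the second summand of \eqref{eq:hFourier}, using $A^-=\eps\,P^-[F/|\eps|^2]$ (which is immediate from \eqref{def:Apm} combined with $1-B^-=\eps$) I would rewrite it as $-\hat\phi(k)(\omega\nabla f(v))\,P^-[F/|\eps|^2](\omega v)$. Converting $P^-$ to $P^+$ via the Plemelj identity \eqref{eq:PpmLem} matches the integral of Lemma \ref{lem:Tdec}, up to an additive residue proportional to $\hat\phi\,(\omega\nabla f)F/|\eps|^2 = |k|^2(\omega\nabla f)F/\bigl||k|^2-\alpha^-\bigr|^2$, which is $C^6$ in $(k,\chi)$ and vanishes to order $|k|^2$. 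Combining with the decomposition \eqref{eq:Tdecompos} yields
\begin{align*}
\hat h_B(k,v) \;=\; -f(v) \;+\; D(k,\omega,v) \;+\; |k|^2\,\Xi(k,\omega,v),
\end{align*}
with $\Xi$ accumulating $T$ from \eqref{est:T}, the first-summand remainder divided by $|k|^2$, and the Plemelj residue; by construction $\Xi$ is $C^6$ in $(k,\chi)$ with norm $\leq C(K)$ for $v\in K$. Inserting the factorisations $D=|k|\gamma_h$ (Maxwellian) and $D=\gamma_h$ (Exponential) from Lemma \ref{lem:Dest} delivers the expansions \eqref{eq:hMax}--\eqref{eq:hExp} with $\hat h_{B,0}:=\gamma_h+|k|\Xi$ and $\hat h_{B,0}:=\gamma_h+|k|^2\Xi$ respectively, and the bound \eqref{est:hBOest} is inherited from \eqref{est:D} and \eqref{est:T}. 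Extending the domain from $B_{r_0}$ to $B_1$ costs nothing, since on the annulus $r_0\leq|k|\leq 1$ all quantities appearing are smooth and bounded.

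For the large-$|k|$ estimate \eqref{est:hinfty}, I would estimate \eqref{eq:hFourier} pointwise. The bound \eqref{est:epseps} gives $|\eps|\geq c>0$ on $|k|\geq 1$, and after rewriting $f(v)(1-\eps)/\eps=-f(v)\hat\phi\alpha^-/\eps$ both summands carry the explicit prefactor $\hat\phi(k)=|k|^{-2}$. The exponential factor $e^{-|v|}$ is inherited from $f(v)$ and $\omega\nabla f(v)$ via \eqref{Ass:decayf}. Each $k$-derivative hits either $\hat\phi$, $1/\eps$, or the unit direction $\omega=k/|k|$, and for $|k|\geq 1$ every such differentiation gains an additional factor of $1/|k|$; inducting on $\ell$ yields \eqref{est:hinfty}. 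The main technical obstacle throughout is the regularity bookkeeping for the $P^-$-based integrals defining $A^-$ and $D$ near $k=0$, specifically ensuring that sixth-order derivatives in $(k,\chi)$ remain uniformly bounded down to the origin; but this is precisely what Lemmas \ref{lem:Tdec} and \ref{lem:Dest} have been set up to resolve, reducing the present argument to algebraic manipulation and careful power-counting in $|k|$.
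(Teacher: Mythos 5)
Your proposal follows essentially the same route as the paper's proof: extract the $-f(v)$ term from $f(1-\eps)/\eps$ using $\hat\phi(k)=|k|^{-2}$, reduce the $A^-/\eps$ summand to the integral decomposed in Lemma \ref{lem:Tdec} into $D+|k|^2T$, feed $D$ into the factorisations of Lemma \ref{lem:Dest}, and obtain \eqref{est:hinfty} directly from the lower bound \eqref{est:epseps} on $|\eps|$ for $|k|\geq1$. Your explicit tracking of the Plemelj residue between the $P^-$ convention in \eqref{eq:hFourier} and the $+i0$ convention in \eqref{eq:Tdecompos} is a point the paper glosses over, but the residue is $O(|k|^2)$ and smooth as you note, so the argument is the same.
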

\begin{proof}
	On the region $|k|\in (r_0,1)$, the function $\hat{h}_B(k,v)$ is smooth by \eqref{est:epseps}.
	For $|k|\in (0,r_0)$ small, we use $\hat{\phi}(k)=\frac{1}{|k|^2}$ and the decomposition \eqref{eq:Tdecompos}:
	\begin{align}
	\hat{h}_B(k,v)&= -f(v) + |k|^2 \left(\frac{ f(v) }{|k|^2 -\alpha^-(\omega\cdot v)+ i \partial_u F(\omega,\omega \cdot v)})
	- T(k,\omega,v) \right) + D(k,\omega,v).
	\end{align}
	The first two summands can be written in the forms \eqref{eq:hMax}, \eqref{eq:hExp} respectively, as can be inferred from
	from Lemma \ref{lem:Tdec} and \eqref{fstable1}. For the last summand, the claim follows from Lemma \ref{lem:Dest}.
	It remains to prove the estimate \eqref{est:hinfty}. This however follows from the lower bound \eqref{est:epseps} on $|\eps|$ for $|k|\geq 1$.
\end{proof}
\begin{proofof}[Proof of Theorem \ref{thm:screen}]
	Let  $\eta\in C^\infty_c$ be
	a cutoff function with $\eta(k)=1$ for $|k|\leq 1/2$ and $\eta(k)=0$ for $|k|\geq 1$.
	We recall the functions $\Gamma$ (cf. \eqref{def:gamma}) and $h_B$ (cf. \eqref{eq:hFourier}),
	and separate the contributions of large and small Fourier modes:
	\begin{align}
		\hat{\Gamma}(k,v_1,v_2) &= \eta(k) \hat{\Gamma} + (1-\eta)(k) \hat{\Gamma} =: \hat{\Gamma}_1+\hat{\Gamma}_2  \label{GammaDecomp}\\
		\hat{h}_B(k,v) &= \eta(k) \hat{h}_B + (1-\eta)(k) \hat{h}_B =: \hat{h}_{B,1}+\hat{h}_{B,2} \label{hDecomp}.
	\end{align}
	The function $h_{B,1}$ satisfies the estimates \eqref{est:hcoulM},\eqref{est:hcoulE}, which
	can be seen by applying Lemma \ref{lemholder} to the
	expansions \eqref{eq:hMax},\eqref{eq:hExp}. The function $h_{B,2}$ satisfies
	the estimates \eqref{est:hcoulM},\eqref{est:hcoulE} by \eqref{est:hinfty}.
	
	In order to estimate $\Gamma_1$, we again apply Lemma \ref{lemholder}. To this end,
	we insert the expansion of $\hat{h}_B$ into the definition of $\Gamma$ (cf. \eqref{def:gamma})
	to find:
	\begin{align*}
		\hat{\Gamma}(k,v_1,v_2) = k/|k|^2 (\nabla f(v_1) h_{B,0}(k,v_2) - \nabla f(v_2) h_{B,0}(k,v_1)), \quad \text{ for $|k|\leq 1$}.
	\end{align*}
	Hence for any $\delta>0$ and $R>0$, Lemma \ref{lemholder} shows that $\Gamma_1$ decays like
	\begin{align} \label{est:gamma1}
		|\Gamma_1(x,v_1,v_2)|\leq \frac{C(K,\delta)}{1+|x|^{m-\delta}}, \quad \text{for $x\in \Reals^3$, $|v_1|,|v_2|\leq  R$},
	\end{align}
	where $m=3$ if $f_1$ satisfies Assumption \ref{Ass:Maxwellian}, and $m=2$ under Assumption \ref{Ass:Exponential}. On the other hand, the estimate \eqref{est:hinfty} shows that 
	\begin{align*}
		|\nabla^j_k \left(\hat{\Gamma}(k,v_1,v_2) - k/|k|^2 (\nabla_{v_1}-\nabla_{v_2})(f f)(v_1,v_2)\right)|
		\leq \frac{C(K)}{1+|k|^{2+j}}, \quad \text{for $j=0,\ldots, 6$, $|k|\geq 1$}.
	\end{align*}
	Therefore, $\Gamma_2$ satisfies the estimate:
	\begin{align} \label{est:gamma2} 
		|\Gamma_2(x,v_1,v_2)|\leq \frac{Ce^{-(|v_1|+|v_2|)}}{|x| (1+|x|)^4}.
	\end{align} 
	Now inserting the estimates \eqref{est:gamma1} and \eqref{est:gamma2} into the representation
	\eqref{eq:gBconvrep} shows the estimates \eqref{est:gBMax} and \eqref{est:gBExp}.

	It remains to show that $g_B$ is in the space $W$ introduced in \eqref{def:W}). 
	We remark 
	that by construction $\hat{g}_B(k,v_1,v_2)=\hat{g}_B(-k,v_2,v_1)$, so $g_B$ satisfies the symmetry
	property \eqref{particlesymmetry}.  

	To show that $|h|[g_B] \in L^1_{loc}$ we use the decomposition \eqref{GammaDecomp}:
	\begin{align} \label{eq:gB1dec}
		g_B 			&= \frac{2\pi i}{|v_r|} (\Gamma_1+ \Gamma_2)(x,v_1,v_2) *_x \left( \cf_{(0,\infty)}(x \cdot \vr) \cdot \Haus^1 \llcorner \operatorname{span} \{\vr\}\right) =: g_{B,1} + g_{B,2}.
	\end{align}
	From the estimate \eqref{est:gamma2} we deduced that $g_{B,2}$ satisfies $|h|[g_{B,2}] \in L^1_{loc}$.
	
	We now estimate $|h|[g_{B,1}]$. To this end, we decompose the function further into:
	\begin{align} \label{eq:gBab}
		\hat{g}_{B,1}(k,v_1,v_2) =   \cf_{|\omega (v_1-v_2)|>1 }\hat{g}_{B,1}+ \cf_{|\omega (v_1-v_2)|\leq 1} \hat{g}_{B,1}  =: \hat{g}_{B,a} + \hat{g}_{B,b}.
	\end{align}
	Inserting the definition of $g_B$ \eqref{eq:Bogcorr}, and using $|v_1|\leq R$ we can estimate $g_{B,a}$  by:
	\begin{align*}
		\int_{\Reals^3} |g_{B,a}(x,v_1)| \ud{v_2} 
		\leq &C\left(1+ \int\int_{B_2}  |\nabla f(v_2)| |\hat{h}_B(k,v_1)|  \ud{k} \ud{v_2}  \right)+ \int_{B_2} \left|\int_{\Reals^3} \hat{h}_B(k,v) \ud{v} \right|\ud{k}
			  \\
		\leq &C(R) + C \int_{B_2}\left|\int_{\Reals^3}  e^{ikx}\hat{h}_B(k,v)\ud{v}\right| \ud{k} ,
	\end{align*}
	which is bounded by \eqref{eq:hbound}.
	Hence $|h|[g_{B,a}]\in L^1_{loc}$. 
	
	In order to estimate $g_{B,b}$ given by \eqref{eq:gBab}, we use the fact that $|\omega(v_1-v_2)|\leq 1$ and $|v_1|\leq R$ implies $|\omega v_2|\leq R+1$. 
	Hence $|\eps(k,-kv_2)|\geq c >0$ is bounded below uniformly on the support of $\hat{g}_{B,b}$, and
	$|h|[g_{B,b}] \in L^1_{loc}$ follows. Hence also $|h|[g_{B}] \in L^1_{loc}$ as claimed. 
	
	It then immediately follows that $h_B$ is indeed the marginal of $g_B$ (cf. \eqref{eq:Bogcorr}), since:
	\begin{align*}
		\int \hat{g}_B(k,v_1,v_2) \ud{v_2} = \int \frac{\hat{\phi}(k) \omega \left((\nabla_{v_1}-\nabla_{v_2} )(f  f) + \nabla f(v_1) \ol{\hat{h}}_B(k,v_2) - \nabla f(v_2)				\hat{h}_B(k,v_1)\right)}{\omega(v_1-v_2)-i0} \ud{v_2},
	\end{align*}
	and $\hat{h}_B$ satisfies the equation \eqref{eq:heq}. The estimates \eqref{est:hcoulM}-\eqref{est:hcoulE} imply  $\sup_{|v|\leq R}\|h[g_B](\cdot ,v)\|_{L^2} \leq C(R)$ as claimed.
\end{proofof}
\subsection{Soft potential interaction}

\begin{theorem}[Decay estimate for soft potentials] \label{thm:screensoft}
	We recall $g_B$ as introduced in Definition \ref{def:Bogcorr}, and assume $f$ satisfies
	the Assumptions \ref{ass:regdecay}-\ref{Ass:onepart1} and  $\phi=\phi_{S}$ is
	a soft potential (cf. Definition \ref{def:potentials}). Further we use the shorthand notation $v_r$, $\vr$ in \eqref{eq:vecshort}, 
	and $b,d,d_-$ introduced in \eqref{def:impactdist}.
	Write $v_r=v_1-v_2$, $\vr= v_r/|v_r|$ and let $\delta \in (0,1)$. 
	For almost every $(x,v_1) \in \Reals^3 \times \Reals^3$, there holds
	$g_B(z,v_1,\cdot) \in L^1(\Reals^3)$, and the marginal of $g$ coincides
	with $h_B$:
	\begin{align} \label{eq:hismarginalsoft}
	\int g_B(x,v_1,v_2) \ud{v_2} = h_B(x,v_1).
	\end{align} 
	Furthermore, for $n\in \Naturals$ the function $g_B$ satisfies the estimate: 
	\begin{align} 
	|g_B(x,v_1,v_2)| &\leq \frac{C(\delta)}{|v_r|}
	\frac{1}{(1+|b|+d_-)^{2-\delta}} e^{-(|v_1|+|v_2|)} \label{est:gBsoft}, \\
	|h_B(x,v_1)| 	&\leq \frac{C(\delta)} {1+|x|^{3-\delta}}e^{-(|v_1|+|v_2|)} . \label{est:hsoft}
	\end{align}
\end{theorem}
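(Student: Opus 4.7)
My plan is to parallel the Coulomb analysis of Theorem \ref{thm:screen}, which simplifies substantially in the soft-potential setting because of two strong structural features. First, Lemma \ref{lem:dielectric} yields the uniform lower bound $|\eps(k,-|k|u)|\geq c_1>0$ on all of $\Reals^3\times\Reals$, so the Langmuir-type degeneracies near the zeros $u_0^\pm$ of $\Re(\eps)$ (which forced the delicate decompositions of Lemmas \ref{lem:Tdec}--\ref{lem:hexpansion}) never appear and no auxiliary hypothesis such as Assumption \ref{Ass:Exponential} or \ref{Ass:Maxwellian} is needed. Second, $\hat\phi_S\in\Sch(\Reals^3)$ supplies automatic Schwartz decay at $|k|\to\infty$ together with a factor $\hat\phi_S(k)\,k$ that vanishes at $k=0$. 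These two features combine to make all constants $K$-independent and to preserve an overall exponential velocity weight $e^{-(|v_1|+|v_2|)}$, inherited from the pointwise bound \eqref{Ass:decayf}.

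The core quantitative step is a regularity analysis of $\hat h_B$ at $k=0$, using the same cutoff split $\hat h_B=\eta(k)\hat h_B+(1-\eta)(k)\hat h_B$ as in the Coulomb proof. For the near-zero piece, the representation \eqref{eq:hFourier}, the Schwartz regularity of $F(\chi,u)$, and the uniform smoothness of $1/\eps(k,-|k|u)$ allow me to write
\begin{align*}
\hat h_B(k,v)=\Phi_0(k,\,k/|k|,\,v),\qquad \|\Phi_0(\cdot,\cdot,v)\|_{C^6(\overline{B_1(0)}\times S^2)}\leq Ce^{-|v|},
\end{align*}
for $k\in B_1(0)$. Lemma \ref{cordecay} with $\ell=0$, $n=3$ then yields $|h_{B,1}(x,v)|\leq C(\delta)(1+|x|)^{-(3-\delta)}e^{-|v|}$, while the complement is Schwartz in $k$ (because $\hat\phi_S$ is Schwartz and $1/\eps$ is smooth), so $h_{B,2}$ decays rapidly in $x$. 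Summing these two contributions gives estimate \eqref{est:hsoft}. The identical analysis applied to $\hat\Gamma$ from \eqref{def:gamma} uses the extra factor $\hat\phi_S(k)\,k$ to supply one additional power of $|k|$ at the origin: $\hat\Gamma=|k|\,\Phi_1(k,k/|k|,v_1,v_2)$ near $k=0$ with the same $C^6$ regularity and the product weight $e^{-(|v_1|+|v_2|)}$, and Lemma \ref{cordecay} with $\ell=1$ produces $|\Gamma(x,v_1,v_2)|\leq C(\delta)(1+|x|)^{-(4-\delta)}e^{-(|v_1|+|v_2|)}$.

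For the pointwise bound on $g_B$ I insert the estimate on $\Gamma$ into the convolution representation \eqref{eq:gBconvrep}, which reduces \eqref{est:gBsoft} to a one-dimensional integral $\int_0^\infty (1+\sqrt{|b|^2+(d-t)^2})^{-(4-\delta)}\,\uud t$; a routine case split on the sign of $d$ bounds this by $C(\delta)(1+|b|+d_-)^{-(3-\delta)}$, which is strictly stronger than the claimed $(1+|b|+d_-)^{-(2-\delta)}$. The marginal identity \eqref{eq:hismarginalsoft} then follows exactly as in the Coulomb proof from the defining equation \eqref{eq:heq} for $\hat h_B$ together with the Fourier representation \eqref{eq:Bogcorr}, and the a.e.\ $L^1$-integrability of $g_B(x,v_1,\cdot)$ is immediate from the factor $e^{-|v_2|}$ in \eqref{est:gBsoft}.

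The main technical obstacle is establishing the uniform $C^6$-bound on $\Phi_0$ (and thus on $\Phi_1$) with the $e^{-|v|}$ weight. Each $k$-derivative either falls on the Schwartz factor $\hat\phi_S$ (harmless), on $1/\eps(k,-|k|u)$ (smooth and bounded below by \eqref{est:epsSoft}), or most delicately on the singular integrals $P^\pm[\partial_u F(\chi,\cdot)](\omega v)$; in the last case, differentiation produces $\chi$- and $u$-derivatives of $\partial_u F$ evaluated at $\omega v$, all of which are controlled by Schwartz norms of $F$ and, crucially, retain the weight $e^{-|v|}$ through the pointwise bound $|\nabla^j F(\chi,u)|\leq Ce^{-|u|}$ inherited from Assumption \ref{ass:regdecay}. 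Once this regularity claim is in hand, verifying $g_B\in W$ (as required by Definition \ref{def:weaksol}) is a direct transcription of the final part of the Coulomb argument, with every invocation of \eqref{est:epsK} or \eqref{est:epseps} replaced by the uniform bound \eqref{est:epsSoft}.
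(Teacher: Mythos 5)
Your proposal is correct and follows essentially the same route as the paper's (very terse) proof: exploit the uniform lower bound \eqref{est:epsSoft} on $|\eps|$ and the Schwartz decay of $\hat\phi_S$ to get $C^6$ control of $\hat h_B$ and $\hat\Gamma$ as functions of $(k,k/|k|)$, then apply Lemma \ref{cordecay} and the convolution representation \eqref{eq:gBconvrep}, with the marginal identity handled as in the Coulomb case. You in fact obtain the slightly stronger exponent $(1+|b|+d_-)^{-(3-\delta)}$, which implies the stated bound \eqref{est:gBsoft}.
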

\begin{proof}
	The identity \eqref{eq:hismarginalsoft} follows analogously to the Coulomb case.
	For proving the estimates  \eqref{est:gBsoft}, \eqref{est:hsoft}, we recall the definition of $h$ in Fourier variables:
	\begin{align} 
	\hat{h}_B(k,v)		&:= f(v) \frac{(1-\eps(k,-kv))}{\eps(k,-kv)}- \hat{\phi}(k)	\frac{A^-(k,kv)}{\eps(k,-kv)} (\omega \nabla f(v)) .
	\end{align}
	Since $\eps$ is non-degenerate by Assumption, the functions $(1-\eps)/\eps$ and $A^-/\eps$ are bounded, as well as their first
	three derivatives in $k$. Using the exponential decay of $f(v)$ and $\nabla f(v)$, the decay estimate \eqref{est:hsoft} follows
	from Lemma \ref{cordecay}. A similar argument proves \eqref{est:gBsoft}.
\end{proof}
We observe that the result shows that the rate of decay is independent of the rate of the decay of the soft potential. Further, we do not
observe a singularity for small impact parameters $b$.

\section{Stability of the linearized evolution of the truncated two-particle correlation function} \label{sec:stability}

\subsection{The linearized evolution semigroup}

The goal of this subsection is to prove that the Bogolyubov propagator $\G$
introduced in Definition \ref{def:Bogolyubovprop} provides a strong
solution to the linear Bogolyubov evolution equation \eqref{BBGKYint3g}.
We start by proving the well-posedness of the propagator. Since the definition
involves the action of the Vlasov semigroup both on smooth initial data and on
Dirac masses, we first derive properties for both cases.
We recall that for translation invariant functions, we can reduce the number of variables
using \eqref{translationsystem}.

Since we prove the well-posedness of the linear evolution problem in the Schwartz space, we recall
the seminorms generating this space.
\begin{definition}\label{def:seminorm}
	For $k,l\in \Naturals_0$ and $n\in \Naturals$, let $\|\cdot\|_{C^{k,l}(\Reals^n)}$ be the seminorm defined by:
	\begin{align} \label{eq:seminormdef}
		\|f\|_{C^{k,l}(\Reals^n)}:= \sup_{x\in \Reals^n} (1+|x|)^l(|f(x)|+|\nabla^k f(x)|). 
	\end{align}
\end{definition} 
\begin{remark}
	The collection of norms $\|\cdot \|_{C^{k,l}(\Reals^n)}$ with $k,l\in \Naturals_0$ generates the Schwartz space, which
	can be equipped with the associated Frechèt-metric.
\end{remark}
\begin{lemma}[Solution of the Vlasov equation for Dirac masses] \label{lem:evdirac}
	Let $\phi=\phi_S$ be a soft potential, let $f\in \Sch(\Reals^3)$ satisfy Assumption \ref{Ass:onepart2} and let $x_0,v_0\in \Reals^3$. We set $h_0(x,v)=\delta(x-x_0) \delta(v-v_0)f(v)$. Consider the function $h(t)=\V(t)[h_0]$ defined by the Fourier-Laplace
	representation \eqref{hvlas}. Then there exists a function $Y\in C(\Reals^+,\Sch((\Reals^3)^3))$ such that
	$\partial_t Y(t,x) \in C(\Reals^+,\Sch((\Reals^3)^3))$ and: 
	\begin{align} \label{eq:evdir}
		h(t,x,v) = Y(t,x-x_0,v,v_0) + \delta(x-x_0-tv) \delta(v-v_0)f(v).
	\end{align} 
	Furthermore, $h$ is a weak solution to the Vlasov equation \eqref{linVlasov}, and $Y$ solves:
	\begin{align} \label{eq:Y}
	\partial_t Y + v\nabla_x Y - \nabla E_h \nabla f = 0, \quad Y(0,\cdot)=0.
	\end{align}
\end{lemma}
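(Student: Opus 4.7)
The plan is to read off the free-transport contribution and the dispersive self-consistent correction directly from the Fourier--Laplace representation \eqref{hvlas}, and then recover Schwartz regularity from Assumption \ref{Ass:onepart2} combined with the softness of $\phi$. First I would compute $\hat h_0(k,v) = (2\pi)^{-3/2} e^{-ik\cdot x_0} f(v_0)\delta(v-v_0)$ and substitute into \eqref{hvlas}. The representation splits as $\tilde h = \tilde h_{\mathrm{tr}} + \tilde h_{\mathrm{cor}}$, where the transport piece $\tilde h_{\mathrm{tr}}(z,k,v) = \hat h_0(k,v)/(z+ik\cdot v)$ inverts to $h_0(x-tv,v) = \delta(x-x_0-tv)\delta(v-v_0)f(v)$ (the delta in $v$ identifies $v$ with $v_0$), while the correction piece is
\begin{align*}
\tilde h_{\mathrm{cor}}(z,k,v) = \frac{iQ(k,v)\, e^{-ik\cdot x_0} f(v_0)}{(2\pi)^{3/2}(z+ik\cdot v)(z+ik\cdot v_0)\,\eps(k,-iz)}.
\end{align*}
Absorbing $e^{-ik\cdot x_0}$ as a spatial translation by $x_0$, I would \emph{define} $Y(t,\cdot,v,v_0)$ to be the inverse Fourier--Laplace transform of the above expression with the $e^{-ik\cdot x_0}$ removed, which yields a function of $x-x_0$ as required in \eqref{eq:evdir}.

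Next I would use Assumption \ref{Ass:onepart2} to shift the Laplace inversion contour from $\Re z = \sigma>0$ to $\Re z = -c'$ with $0<c'<c$. The lower bound \eqref{ass:diel} makes this permissible and keeps the integral on the shifted contour absolutely convergent, helped by the Schwartz decay of $\hat\phi$ built into $Q$. The only poles crossed are the simple poles at $z=-ik\cdot v$ and $z=-ik\cdot v_0$, whose residues combine to
\begin{align*}
\hat Y(t,k,v,v_0) = \frac{Q(k,v)\, f(v_0)}{(2\pi)^{3/2}\, k\cdot(v_0-v)}\left(\frac{e^{-i(k\cdot v)t}}{\eps(k,-k\cdot v)} - \frac{e^{-i(k\cdot v_0)t}}{\eps(k,-k\cdot v_0)}\right) + \hat R(t,k,v,v_0),
\end{align*}
where the remainder $\hat R$ is $\mathcal{O}(e^{-c't})$ uniformly in $(k,v,v_0)$. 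The apparent singularity at $k\cdot(v-v_0)=0$ is removable: rewriting
\begin{align*}
\frac{e^{-i(k\cdot v)t}-e^{-i(k\cdot v_0)t}}{ik\cdot(v_0-v)} = -t\int_0^1 e^{-itk\cdot(sv+(1-s)v_0)}\ud{s}
\end{align*}
makes smoothness manifest and bounds all its derivatives by polynomial factors in $t$.

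For Schwartz regularity in $(x,v,v_0)$ the inputs are $\hat\phi\in\Sch(\Reals^3)$, $f\in\Sch(\Reals^3)$ and the uniform bound $|\eps(k,-iz)|\geq c_0$ from \eqref{ass:diel}. In $k$: the factor $Q(k,v)=k\nabla f(v)\hat\phi(k)$ is smooth and rapidly decaying in $k$; dividing by $\eps$ preserves this, and once the removable singularity is resolved $\hat Y(t,\cdot,v,v_0)$ is smooth and Schwartz in $k$ with bounds growing at most polynomially in $t$, yielding Schwartz decay of $Y(t,\cdot,v,v_0)$ in $x$. In $v$: the only $v$-dependence outside the phases sits in $\nabla f(v)$, and differentiating under the integral preserves Schwartz decay. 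In $v_0$: the prefactor $f(v_0)$ provides decay and its derivatives inherit Schwartz decay. The seminorms \eqref{eq:seminormdef} are then controlled routinely, and continuity of $Y$ and $\partial_t Y$ as $\Sch$-valued maps follows by dominated convergence on the Fourier side, the time derivative only introducing an extra factor $z$ already absorbed.

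Finally, the Vlasov equation \eqref{linVlasov} for $h$ and \eqref{eq:Y} for $Y$ are verified directly on the Fourier--Laplace side: the factor $(z+ik\cdot v)$ in the denominator encodes the transport operator $\partial_t + v\cdot\nabla_x$, while the $Q\tilde\varrho$ term corresponds to $-\nabla E_h\cdot\nabla f$; the initial condition $Y(0,\cdot,v,v_0)=0$ holds because the two residues cancel at $t=0$. The main obstacle will be proving the Schwartz seminorm bounds uniformly on compact time intervals in the presence of the resonance $k\cdot(v-v_0)=0$: differentiating the parameter-integral representation above in $v$ and $v_0$ must not spawn uncontrolled $t$-growth, and the contour-remainder $\hat R$ must be controlled jointly in $(k,v,v_0)$ from the shifted-line bounds.
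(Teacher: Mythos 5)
Your decomposition into the free-transport Dirac mass and the self-consistent correction $Y$ is exactly the paper's, and your overall strategy is sound; the difference lies in how you handle the Laplace inversion. The paper never shifts the contour: it keeps the inversion on the line $\Re(z)=1$, observes that the integrand (with the Schwartz prefactors $Q(k,v)f(v_0)$ stripped off) has all $k,v,v_0$-derivatives uniformly bounded on $[0,T]\times\Reals^9$ by differentiating under the integral, and then multiplies back by the Schwartz factors to get $Y\in C(\Reals^+,\Sch)$. Your residue computation is more explicit and also yields the long-time asymptotics, but it forces you to confront the coalescing poles at $k\cdot(v-v_0)=0$ (which you correctly resolve via the parameter integral, at the cost of polynomial growth in $t$) — a complication the paper's route avoids entirely, since for this lemma only bounds on compact time intervals are needed.

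Two technical points deserve correction. First, Assumption \ref{Ass:onepart2} gives the lower bound \eqref{ass:diel} for $\eps(k,-i|k|z)$ on $\Re(z)\geq -c$, i.e.\ for the Laplace variable in the half-plane $\Re(z)\geq -c|k|$; the analyticity strip shrinks linearly as $|k|\to 0$, so you may only shift the contour to $\Re(z)=-c|k|$, and the remainder $\hat R$ is $\mathcal{O}(e^{-c|k|t})$, not $\mathcal{O}(e^{-c't})$ uniformly in $k$. (This is how the contour shift is performed everywhere else in the paper, e.g.\ in the proof of Lemma \ref{Psi1conv}.) This does not damage the present lemma, which needs no time decay, but the uniform exponential rate you claim is false near $k=0$. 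Second, your remark that the time derivative "only introduces an extra factor $z$ already absorbed" is too quick: on either contour the integrand then decays only like $|z|^{-1}$ in $\Im(z)$, so the differentiated inversion integral is no longer absolutely convergent. The paper sidesteps this by first establishing that $Y$ solves \eqref{eq:Y} weakly and then reading off $\partial_t Y = -v\nabla_x Y + \nabla E_h\nabla f \in C(\Reals^+,\Sch)$ from the equation itself; you should do the same rather than differentiate the representation in $t$.
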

\begin{proof}
	We start by proving that $h$ can be decomposed as claimed in \eqref{eq:evdir}. W.l.og. let $x_0=0$. By the Fourier-Laplace representation of $h$ 
	in  \eqref{hvlas} we have:
	\begin{align} \label{eq:evdir1}
		\hat{h}(t,x,v) = \frac1{2\pi i} \int_{L_1} \tilde{h}(z,k,v) e^{zt} \ud{z} =  \frac1{2\pi i} \left(\int_{L_1}  \frac{\hat{h}_{0}(k,v)}{z+ikv} e^{zt}\ud{z}+ \int_{L_1} \frac{i Q(k,v) \tilde{\varrho}(z,k)}{z+ikv}e^{zt} \ud{z}\right) 
	\end{align} 	
	where $L_\gamma:= \{z \in \Complex: \Re(z)=\gamma \}$ is the line with real part $\gamma$, oriented upwards.
	The line integral is evaluated in the improper sense 
	\begin{align} \label{eq:invintegral}
		\int_{L_\gamma} f(z) \ud{z} = \lim_{T\rightarrow \infty}\int_{L_\gamma} f(z) \cf(|z|\leq T) \ud{z} . 
	\end{align}
	The first line integral in \eqref{eq:evdir1} is explicit and yields:
	\begin{align*}
		\frac1{2\pi i} \int_{L_1} \frac{\hat{h}_{0}(k,v)}{z+ikv}\ud{z} = e^{-ikv t} \hat{h}_0(k,v),
	\end{align*}
	so we obtain the second term in \eqref{eq:evdir}. It remains to show that the second line integral in \eqref{eq:evdir1} gives
	a function $Y$ with the desired properties. Using the formula \eqref{hvlas}, the term can be rewritten as:
	\begin{align} \label{eq:Yrepr}
		\hat{Y}(t,k,v,v_0) = \frac{f(v_0)}{(2\pi)^\frac32}\frac1{2\pi i}\int_{L_1}  \frac{iQ(k,v)  e^{zt}}{\eps(k,-iz)(z+ikv)(z+ikv_0)}\ud{z}	.
	\end{align}
	Now $\eps(k,-iz)$ is smooth and bounded below by Assumption \ref{Ass:onepart2}. The line integral is absolutely
	convergent and differentiating through it shows that for all $\ell_1,\ell_2, \ell_3 \in \Naturals_0$, $T>0$, there exists a $C>0$ such that:
	\begin{align}
		\|\nabla^{\ell_1}_{v_0} \nabla^{\ell_2}_{v}\nabla^{\ell_3}_{k} \frac1{2\pi i}\int_{L_1}  \frac{e^{zt}}{\eps(k,-iz)(z+ikv)(z+ikv_0)}\ud{z}\|_{C([0,T]\times \Reals^9)} \leq C.
	\end{align}
	Using that $Q$ and $f$ in \eqref{eq:Yrepr} are Schwartz functions, we obtain
	$Y \in C(\Reals^+, \Sch(\Reals^9))$. Next we observe that $\int h(t,x,v) \ud{v}= \varrho(t,x)$. To see this, we
	use  $\int \tilde{h}(z,k,v) \ud{v} = \tilde{\varrho}(z,k)$. The integration in $v$ commutes with the Laplace
	inversion \eqref{eq:evdir1}, so $\varrho$ is the spatial density of $h$. Hence the Fourier-Laplace definition \eqref{hvlas} of $h$ gives
	a weak solution of the Vlasov equation. Combining this with the decomposition \eqref{eq:evdir} we find that $Y$ is
	a weak solution to \eqref{eq:Y}. Using equation \eqref{eq:Y} we find $\partial_t Y \in C(\Reals^+, \Sch(\Reals^9))$ as claimed.
\end{proof}
\begin{lemma}[Vlasov equation with Schwartz initial data] \label{lem:evSchwartz}
	Let $\phi=\phi_S$ be a soft potential, let $f\in \Sch(\Reals^3)$ satisfy Assumption \ref{Ass:onepart2}. Further assume $h_0\in \Sch((\Reals^3)^2)$. Let  $h(t)=\V(t)[h_0]$ be defined by formula \eqref{hvlas}. There exists an $m \in \Naturals_0$ such that for any $k,l\in \Naturals_0$, there is a $C>0$ such that:
	\begin{align}\label{eq:vlasseminorm}
		\|h\|_{C^1([0,T];C^{k,l})} \leq C \|h_0\|_{C^{k+m,l+m}}.
	\end{align}
	 Further, the function is a strong solution to the Vlasov equation \eqref{linVlasov}.
\end{lemma}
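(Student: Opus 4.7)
The plan is to split $h=h^{(\mathrm{tr})}+h^{(\mathrm{int})}$ according to the two summands in the Fourier--Laplace formula \eqref{hvlas}. The transport part $h^{(\mathrm{tr})}(t,x,v)=h_0(x-vt,v)$ obeys a direct estimate from the definition \eqref{eq:seminormdef}: using $(1+|x|)\leq (1+|x-vt|)(1+T|v|)$ one trades decay in $x$ for decay in $v$, and since the spatial derivatives act only through the first slot $x-vt$, no further loss occurs, yielding $\|h^{(\mathrm{tr})}(t)\|_{C^{k,l}}\leq C(T)\|h_0\|_{C^{k,k+l}}$ for $t\in[0,T]$.

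For $h^{(\mathrm{int})}$ I would work in Fourier variables and shift the Laplace contour. Writing
\[
\tilde h^{(\mathrm{int})}(z,k,v)=\frac{iQ(k,v)\,N(z,k)}{\eps(k,-iz)\,(z+ikv)},\qquad N(z,k)=\int_{\Reals^3}\frac{\hat h_0(k,v')}{z+ikv'}\,\ud{v'},
\]
the central step is to deform $L_1$ down to $L_{-c/2}$ inside the strip $H_{-c}^-$. For Schwartz $h_0$, the map $z\mapsto N(z,k)$ is a Cauchy transform of $\hat h_0(k,\cdot)$, hence extends analytically across $\{\Re z=0\}$ with a jump expressible in terms of $\hat h_0$; combined with the lower bound $|\eps(k,-iz)|\geq c_0$ from \eqref{ass:diel}, this justifies the deformation, picking up a jump/residue contribution on the imaginary axis (which recombines cleanly with the transport piece and with the simple pole at $z=-ikv$) plus an absolutely convergent line integral on $L_{-c/2}$ carrying the factor $e^{-ct/2}$. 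Derivatives in $k$ and $v$ are handled by differentiating under the integral, using Assumption \ref{Ass:onepart2} to control derivatives of $\eps^{-1}$ and the Schwartz structure of $Q(k,v)=k\nabla f(v)\hat\phi(k)$ (both $f$ and $\phi_S$ are Schwartz). Transferring back via inverse Fourier transform converts polynomial decay in $x$ into $k$-derivatives of $\hat h^{(\mathrm{int})}$, and since $h_0\in\Sch$ these are controlled by $\|h_0\|_{C^{k+m,l+m}}$ for a fixed $m$ depending on dimension and on the number of integrations by parts used in the contour-shifting argument.

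Once $h\in C([0,T],\Sch((\Reals^3)^2))$ is established, the $C^1$-in-time bound and the strong-solution property both follow from the equation itself: the self-consistent field $\nabla E_h=\nabla\phi_S*\int h\,\ud{v}$ is Schwartz in $x$, so the identity $\partial_t h=-v\cdot\nabla_x h+\nabla E_h\cdot\nabla f$ holds pointwise and yields $\partial_t h\in C([0,T],\Sch)$ with seminorms controlled by those of $h$, hence of $h_0$ with a further fixed loss. The main obstacle will be the contour deformation uniformly in $k$: one has to treat separately the degenerate regime $|k|\to 0$ (where the poles coalesce at the origin and $\eps\to 1$) and the regime $|k|\to\infty$ (where the factor $\hat\phi(k)$ decays and must be balanced against the $k$-growth in $Q$ and in $(z+ikv)^{-1}$). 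In both regimes Assumption \ref{Ass:onepart2} keeps all denominators bounded away from zero, so the bookkeeping closes at the cost of the announced fixed loss of derivatives and polynomial weight.
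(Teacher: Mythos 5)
Your decomposition into the transport term and the interaction term is exactly the paper's starting point, and the core mechanism (differentiate under the Laplace-inversion integral, control the denominator via the lower bound on $\eps$, convert spatial weights into $k$-derivatives at the cost of a fixed loss $m$) is the same. Where you diverge is the contour deformation: the paper's proof of this lemma never leaves the line $L_1=\{\Re z=1\}$, because Assumption \ref{Ass:onepart2} already gives $|\eps(k,-iz)|\geq c_0$ there (since $\Re(z/|k|)>0\geq -c$), and a bound uniform on $[0,T]$ requires no decay in $t$; the deformation to the left of the imaginary axis, with the attendant analytic continuation of the Cauchy transform $N(z,k)$ and the residue at $z=-ikv$, is the machinery the paper reserves for the $t\to\infty$ asymptotics in Section \ref{sec:stability}. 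Your version buys exponential decay that the statement does not ask for, at the price of extra bookkeeping, and it contains one genuine imprecision there: the region where $\eps(k,-i|k|\cdot)$ is controlled is $\{\Re z\geq -c\}$ in the rescaled variable, i.e.\ $\{\Re w\geq -c|k|\}$ for the actual Laplace variable $w$, so the admissible shift degenerates as $|k|\to 0$ and a fixed contour $L_{-c/2}$ exits the region of validity (compare the $k$-dependent contour $i\Reals-c|k|$ in Lemma \ref{lem:contint}). Since the deformation is dispensable for this lemma, this does not break your argument, but if you keep it you must shift only to $L_{-c|k|/2}$. The remainder (transport estimate, passage to $\partial_t h$ via the equation and the Schwartz regularity of $\nabla E_h$) matches the paper's reasoning.
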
 
\begin{proof}
	For proving the estimate \eqref{eq:vlasseminorm}, we use the definition of $\V(t)[h_0]$ in Fourier-Laplace variables
	(cf. \eqref{hvlas}) to obtain the representation:
	\begin{align} \label{eq:evSchwartz1}
	\hat{h}(t,x,v) &=  \frac1{2\pi i} \left(\int_{L_1}  \frac{\hat{h}_{0}(k,v)}{z+ikv} e^{zt}\ud{z}+ \int_{L_1} \frac{i Q(k,v) \tilde{\varrho}(k,z)}{z+ikv}e^{zt} \ud{z}\right),  \\
	\tilde{\varrho}(k,z) & := \frac{\int\frac{\hat{h}_{0}(k,v^{\prime})}{%
			z+ikv^{\prime}}\ud{v^{\prime}}}{\eps(k,-iz)} .
	\end{align} 
	Since $\eps(k,-iz)$ is uniformly bounded below on the line $L_1$, the claim follows by differentiating through the integrals in	
	\eqref{eq:evSchwartz1}.
\end{proof}
We recall the Bogolyubov propagator $\G$ introduced in \eqref{eq:Bogolyubovprop}.
The previous two lemmas allow us to prove that the Bogolyubov propagator is well-defined. In order to
show that the function $g(t):= \G(t)[g_0]$ indeed solves the Bogolyubov equation, we show commutativity 
for Vlasov operators acting on different sets of variables. To this end we introduce the following shorthand notation.
\begin{notation}
	Let $S$ be the Schwartz distribution given by:
	\begin{align} \label{eq:Sdef}
	S(\xi_1,\xi_2) = \delta(\xi_1-\xi_2) f(v_1).
	\end{align}
\end{notation}
\begin{lemma} \label{lem:Vlasovcomm}
	Let $g_0(\xi_1,\xi_2) = \ol{g}_0(x_1-x_2,v_1,v_2)+ S(\xi_1,\xi_2)$, where $\ol{g}_0 \in \Sch$ and
	$S$ as introduced in \eqref{eq:Sdef}. Then the compositions of operators
	$\V_{\xi_1} \V_{\xi_2}[g_0]$, $\V_{\xi_2} \V_{\xi_1}[g_0]$ as introduced in Definition \ref{def:Bogolyubovprop} are well-defined and the following 
	commutation relation between $\V_{\xi_1}$ and $\V_{\xi_2}$ holds:
	\begin{align}\label{eq:Vlasovcomm}
		\V_{\xi_1}(t') \V_{\xi_2}(t)[g_0] &= \V_{\xi_2}(t) \V_{\xi_1}(t')[g_0].
	\end{align}
\end{lemma}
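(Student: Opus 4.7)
The plan is to reduce the statement to its two constituents, namely the Schwartz part $\bar g_0$ and the singular part $S$, and then to exploit the fact that the two Vlasov propagators admit the explicit Fourier–Laplace representation \eqref{hvlas}, which acts on disjoint sets of variables.

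First I would split $g_0 = \bar g_0(x_1-x_2,v_1,v_2) + S(\xi_1,\xi_2)$ and treat each piece separately. For the Schwartz contribution, Lemma \ref{lem:evSchwartz} applied in $\xi_2$ (with $\xi_1$ treated as a parameter) produces $\V_{\xi_2}(t)[\bar g_0]$, which by the estimate \eqref{eq:vlasseminorm} still lies in $\Sch$ jointly in $(\xi_1,\xi_2)$, because $\xi_1$ enters only through the Schwartz dependence on $x_1-x_2$ and on the parameter $v_1$, which is untouched by $\V_{\xi_2}$. A second application of Lemma \ref{lem:evSchwartz}, this time in the variable $\xi_1$, then yields $\V_{\xi_1}(t')\V_{\xi_2}(t)[\bar g_0] \in \Sch$, and by symmetry of the argument the reversed composition is defined as well.

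For the singular piece $S(\xi_1,\xi_2)=\delta(\xi_1-\xi_2)f(v_1)$, I would regard it for fixed $\xi_1$ as a Dirac mass in $\xi_2$ at $\xi_1$ multiplied by $f(v_1)=f(v_2)$ (using $\delta(v_1-v_2)f(v_1)=\delta(v_1-v_2)f(v_2)$). Lemma \ref{lem:evdirac} then gives the decomposition
\begin{equation*}
\V_{\xi_2}(t)[S](\xi_1,\xi_2) = Y(t,x_2-x_1,v_2,v_1) + \delta(x_2-x_1-tv_2)\delta(v_2-v_1)f(v_1),
\end{equation*}
where $Y\in C(\Reals^+,\Sch((\Reals^3)^3))$. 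The first summand is Schwartz jointly in $(\xi_1,\xi_2)$ by the same translation-invariance observation as before, so Lemma \ref{lem:evSchwartz} handles the subsequent application of $\V_{\xi_1}(t')$. The second summand is, for fixed $\xi_2$, again a Dirac mass in $\xi_1$ (now at $(x_2-tv_2,v_2)$), to which Lemma \ref{lem:evdirac} applies a second time. This establishes well-posedness of $\V_{\xi_1}(t')\V_{\xi_2}(t)[g_0]$, and the same argument with the roles of the indices exchanged establishes well-posedness of $\V_{\xi_2}(t)\V_{\xi_1}(t')[g_0]$.

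For the commutation \eqref{eq:Vlasovcomm} itself, I would pass to Fourier–Laplace variables in both spatial arguments, with independent variables $(k_1,z_1)$ for $\xi_1$ and $(k_2,z_2)$ for $\xi_2$. The key point is that formula \eqref{hvlas} shows that $\V_{\xi_1}$ corresponds to a multiplier and contour integral depending only on $(k_1,z_1,v_1)$, while $\V_{\xi_2}$ corresponds to one depending only on $(k_2,z_2,v_2)$; in particular the two transformations act on disjoint blocks of variables. The commutation therefore reduces to a Fubini-type exchange of two absolutely convergent line integrals against the corresponding multiplier. For the Schwartz part $\bar g_0$ this is immediate from the estimates in Lemma \ref{lem:evSchwartz}. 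For the $S$-piece I would apply the decomposition of Lemma \ref{lem:evdirac} symmetrically in both orderings: in either order one obtains, up to a Schwartz remainder with Fourier--Laplace representation uniformly convergent in the remaining variable, a product of two transported Dirac factors together with $f$-weights, and these clearly agree.

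The main obstacle is the last step for the singular part: one has to check that the two expansions produced by applying Lemma \ref{lem:evdirac} in either order actually coincide as distributions, since each application leaves both a Schwartz remainder and a transported Dirac whose position depends on the other variable. I expect this to boil down to verifying that the double Fourier–Laplace contour integral of $\hat S$ against the product of the two $\V$-multipliers in \eqref{hvlas} is absolutely convergent on some shifted lines $L_{\gamma_1}\times L_{\gamma_2}$, using the uniform lower bound on $|\eps|$ from Assumption \ref{Ass:onepart2}, so that Fubini legitimately exchanges the order of the $z_1$- and $z_2$-integrations and yields the same distribution regardless of ordering.
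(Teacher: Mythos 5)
Your proposal is correct and follows essentially the same route as the paper: the paper's proof likewise obtains well-definedness from the Schwartz-plus-Dirac decomposition of Lemma \ref{lem:evdirac} (together with Lemma \ref{lem:evSchwartz} for the regular part) and deduces the commutation from the explicit Fourier--Laplace representation \eqref{hvlas}, where the two propagators act as multipliers on disjoint blocks of variables. Your additional care about reconciling the two orderings on the singular part via Fubini on the shifted contours is a legitimate fleshing-out of what the paper leaves implicit, not a different argument.
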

\begin{proof}
	By Lemma \ref{lem:evdirac}, $\V_{\xi_2}(t)[g_0]$ is the sum of a Schwartz function and a Dirac mass, so
	the composition with $\V_{\xi_1}(t')$ is well defined. The commutativity relation \eqref{eq:Vlasovcomm} follows from the explicit Fourier-Laplace
	representation \eqref{hvlas}.
\end{proof}
Now can now prove that $\G(t)$ gives the solution of the Bogolyubov equation \eqref{BBGKYint3g}. For convenience we introduce the following notation.
\begin{notation}
	We write $E_{j}[g]$, $j=1,2$ for the following expressions:
	\begin{align} \label{def:Ej}
		E_{2}[g](x,v_2) = \int \phi(x+y) g(y,v_1,v_2)\ud{v_1}, \quad E_{1}[g](x,v_1) = \int \phi(-x+y) g(y,v_1,v_2) \ud{v_2}.
	\end{align}
\end{notation}
\begin{theorem}[Solution of the linearized evolution equation] \label{thm:evolutionsol}
	Let $g_0$, $f$ be as in Theorem \ref{thm:stability}.
	The function $g$ given by  $g(t)=\G(t)[g_0]$ satisfies $g\in C(\Reals^+,\Sch((\Reals^3)^3))$, $\partial_t g\in C(\Reals^+,\Sch((\Reals^3)^3))$ and solves the Bogolyubov equation~\eqref{BBGKYint3g}. 
\end{theorem}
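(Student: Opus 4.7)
The approach is to unwind the composition $\mathcal V_{\xi_1}(t)\mathcal V_{\xi_2}(t)[S+\tilde g_0]$ appearing in Definition~\ref{def:Bogolyubovprop}, isolating the singular Dirac component that is canceled by the explicit subtraction of $T(t)[S]$. First I will split $S+\tilde g_0$ into the Dirac mass $S$ and the Schwartz part $\tilde g_0$. For the Schwartz part, $\mathcal V_{\xi_2}(t)[\tilde g_0]$ is again Schwartz-valued in $\xi_2$ with seminorm bounds locally uniform in the frozen parameter $\xi_1$ by Lemma~\ref{lem:evSchwartz}, and translation invariance in $x_1-x_2$ is preserved because the Vlasov generator has constant coefficients in $x$. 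For the singular part, Lemma~\ref{lem:evdirac} applied pointwise in $\xi_1$ gives
\begin{equation*}
\mathcal V_{\xi_2}(t)[S](\xi_1,\xi_2)=Y(t,x_2-x_1,v_2,v_1)+\delta(x_2-x_1-tv_2)\delta(v_1-v_2)f(v_1),
\end{equation*}
a Schwartz remainder $Y$ plus a transported Dirac.

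Applying $\mathcal V_{\xi_1}(t)$ in the second step, the Schwartz contributions remain in $\mathcal S$ by a further use of Lemma~\ref{lem:evSchwartz}. The leftover Dirac piece, for fixed $\xi_2$, is a point mass at $(x_1,v_1)=(x_2-tv_2,v_2)$ with weight $f(v_2)$, so Lemma~\ref{lem:evdirac} applies once more and produces another Schwartz correction together with $\delta(x_1-(x_2-tv_2)-tv_1)\delta(v_1-v_2)f(v_1)$. The factor $\delta(v_1-v_2)$ forces $v_1=v_2$, collapsing this to $\delta(x_1-x_2)\delta(v_1-v_2)f(v_1)=S=T(t)[S]$, where the last equality uses $(v_1-v_2)t=0$ under the same delta. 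Subtracting $T(t)[S]$ therefore annihilates all singular contributions and leaves $g(t)$ as a sum of three Schwartz functions. Continuity and $C^{1}$-regularity in time in every Schwartz seminorm follow from the corresponding statements $Y,\partial_t Y\in C(\Reals^{+},\mathcal S)$ in Lemma~\ref{lem:evdirac} and from~\eqref{eq:vlasseminorm} in Lemma~\ref{lem:evSchwartz}, yielding $g,\partial_t g\in C(\Reals^{+},\mathcal S((\Reals^{3})^{3}))$.

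To verify that $g$ solves \eqref{BBGKYint3g}, I set $\tilde g(t):=\mathcal V_{\xi_1}(t)\mathcal V_{\xi_2}(t)[S+\tilde g_0]$, so $g=\tilde g-S$ and $\partial_t g=\partial_t\tilde g$. Differentiating and using the commutation relation from Lemma~\ref{lem:Vlasovcomm} to move the $\xi_2$-generator past $\mathcal V_{\xi_1}(t)$, I obtain $\partial_t\tilde g=(L_{\xi_1}+L_{\xi_2})\tilde g$, where $L_{\xi_i}h:=-v_i\nabla_{x_i}h+\nabla_{x_i}E[h(\cdot,\xi_{\zeta(i)})](x_i)\cdot\nabla_{v_i}f(v_i)$ is the Vlasov generator in the $i$-th phase-space variable. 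Writing $\tilde g=g+S$, the transport pieces of $(L_{\xi_1}+L_{\xi_2})S$ cancel because $(v_1-v_2)\delta(v_1-v_2)=0$, while the force pieces, computed from $E[S(\cdot,\xi_2)](x_1)=\phi(x_1-x_2)f(v_2)$ and its $\xi_2$ counterpart, reduce to
\begin{equation*}
\bigl(f(v_2)\nabla_{v_1}f(v_1)-f(v_1)\nabla_{v_2}f(v_2)\bigr)\nabla\phi(x_1-x_2)=(\nabla_{v_1}-\nabla_{v_2})(f\otimes f)\,\nabla\phi(x_1-x_2),
\end{equation*}
which is precisely the source on the right-hand side of \eqref{BBGKYint3g}. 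The remaining terms $(L_{\xi_1}+L_{\xi_2})g$ reproduce the transport and self-consistent-force terms on the left-hand side, and the labeling $\tilde g_2(\xi_{\zeta(i)},\xi_3)$ is recovered from $g(\cdot,\xi_{\zeta(i)})$ via the particle-exchange symmetry~\eqref{particlesymmetry}, which is preserved under the evolution because $\mathcal V_{\xi_1}$ and $\mathcal V_{\xi_2}$ play symmetric roles.

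The main technical obstacle will be handling the distributional singular contributions rigorously through the two Vlasov steps. In particular, after the first step the remaining Dirac mass is concentrated at $x_1=x_2-tv_2$, a base point that depends smoothly on the frozen variables $(x_2,v_2)$ and on $t$; applying Lemma~\ref{lem:evdirac} at this moving base point requires verifying that the Schwartz correction $Y$ supplied by that lemma depends Schwartz-smoothly on those parameters, which can be read off from the Fourier--Laplace formula~\eqref{eq:Yrepr} together with the uniform bound on $\eps^{-1}$ granted by Assumption~\ref{Ass:onepart2}. A related subtlety is that the action of $L_{\xi_i}$ on $S$ must initially be interpreted through the Fourier--Laplace identities in~\eqref{hvlas}, so that the cancellation against $T(t)[S]$ and the emergence of the source term are justified as identities of tempered distributions before one passes to the strong Schwartz regularity of $g$.
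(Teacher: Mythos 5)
Your proposal is correct and follows essentially the same route as the paper's proof: the same splitting of $S+\tilde g_0$ into a Schwartz part handled by Lemma~\ref{lem:evSchwartz} and a Dirac part handled by Lemma~\ref{lem:evdirac}, the commutation relation of Lemma~\ref{lem:Vlasovcomm} to compute $\partial_t\tilde g=(L_{\xi_1}+L_{\xi_2})\tilde g$, and the identification of the source term from the action of the generators on $S$ against the subtracted $T(t)[S]$. Your explicit tracking of the transported Dirac through both Vlasov steps and the observation that $T(t)[S]=S$ on the support of $\delta(v_1-v_2)$ merely make explicit what the paper's decomposition of $G_2$ leaves implicit.
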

\begin{proof}
	First we observe that using the notation \eqref{def:Ej}, the Bogolyubov equation \eqref{BBGKYint3g} reads:
	\begin{equation}  \label{Bogeq}
		\begin{aligned} 
		\partial _\tau g+&(v_1-v_2) \nabla_x g- \nabla f(v_1) \nabla_x E_2[g](x,v_2)- \nabla f(v_2) \nabla_x E_1[g](x,v_1) 	 \\
		&=  (\nabla _{v_1} - \nabla_{v_2})\left(f(v_1)f(v_2)\right)\nabla \phi (x). 
		\end{aligned}
	\end{equation}
	We decompose $g(t)=\G(t)[g_0]$ into two parts:
	\begin{align}
		g(t)	&= \V_{\xi_1} \V_{\xi_2} [g_0]  + \left(\V_{\xi_1} \V_{\xi_2} [S] -T(t) S\right) = G_1 +G_2. 
	\end{align}
	We take the time derivative of both expressions. For the first term, the existence
	of the time derivative follows from Lemma \ref{lem:evSchwartz}, and using 
	Lemma \ref{lem:Vlasovcomm} we find:
	\begin{align}
		\partial_t G_1 = -\sum_{i \neq j} v_i \nabla_{x_i} G_1 + \nabla f(v_j) \nabla_{x_i} E_i[G_1].
	\end{align}
	To prove differentiability in time for $G_2$ we observe that
	\begin{align}
		G_2(t)= \V_{\xi_1}(t) [\V_{\xi_2}(t)[S]-T(t)S]+\left(\V_{\xi_1}(t) [S]-T(t)S\right)
	\end{align} 
	satisfies $G_2,\partial_t G_2\in C(\Reals^+,\Sch((\Reals)^9))$ by Lemma \ref{lem:evdirac}
	and Lemma \ref{lem:evSchwartz}. Differentiating $G_2$ yields:
	\begin{align}
		\partial_t G_2(t) = -\sum_{i\neq j} v_i \nabla_{x_i} G_2 + \nabla f(v_j) \nabla_{x_i} E_i[\V_{\xi_1} \V_{\xi_2}[S]]. 	
	\end{align}
	Now the claim follows from $\sum_{i\neq j=1}^2 \nabla f(v_j) E_i[T(t)[S]]=(\nabla_{v_1}-\nabla_{v_2})(f(v_1)f(v_2)) \nabla \phi(x)$.
\end{proof}

\subsection{Distributional stability of the Bogolyubov correlations} 
In Theorem \ref{thm:evolutionsol} we have proved that the Bogolyubov propagator	$\G(t)$ gives a solution to the Bogolyubov equation. 
In this subsection we prove the result \eqref{eq:distrstab} claimed in Theorem \ref{thm:stability}, that is the distributional stability of the Bogolyubov correlations. We split the problem into analyzing the solution $\Lambda$ of \eqref{Bogeq}  with non-zero initial datum $g_0$, but without the right-hand side in \eqref{Bogeq}, and the solution $\Psi$ of \eqref{Bogeq} with zero initial datum. 
The following lemma gives this decomposition in Fourier-Laplace variables.
\begin{lemma} \label{lem:gtdecomp}
	Let $g_0\in \Sch((\Reals^3)^3)$ be a function such that $g_0(x_1-x_2,v_1,v_2)$ is symmetric in exchanging 
	$\xi_1=(x_1,v_1)$, $\xi_2=(x_2,v_2)$. We make the decomposition
	\begin{align}
		g(t,\xi_1,\xi_2) &= \G(t)[g_0] = \Psi(t,t,\xi_1,\xi_2) + \Lambda(t,t,\xi_1,\xi_2),
	\end{align}
	where $\Psi(t,t',\xi_1,\xi_2) := \V_{\xi_1}(t) \V_{\xi_2}(t') [S]-T(t)[S]$, $\Lambda(t,t')=\V_{\xi_1}(t) \V_{\xi_2}(t') [g_0]$.
	Then the Fourier-Laplace representation of $\Psi$, written in the form
	\eqref{translationsystem}, satisfies:
	\begin{equation} \label{eq:psidecomp}
	\begin{aligned}
		\Psi(z,z',k,v_1,v_2) & := \Psi_1(z,z',k,v_1,v_2)
	+\Psi_2(z,z',k,v_1,v_2) +\Psi_2(z',z,-k,v_2,v_1) 
	\\
	\Psi_1(z,z',k,v_1,v_2) & := - \frac{Q(k,v_1) Q(-k,v_2)
		\int\frac{ \delta(v_1'-v_2') f(v_1')}{(z+ikv_1')(z'-ikv_2')} \ud{v_1'}\ud{v_2'}} {\eps(k,-iz) \eps(-k,-iz')
			(z+ikv_1) (z'-ikv_2)}   \\
		\Psi_2(z,z',k,v_1,v_2) & := \frac{ f(v_1)}{(z+ikv_1)} 	\frac{i Q(-k,v_2)}{\eps(-k,-iz')(z_2-ikv_1)(z'-ikv_2)}  
		\end{aligned}
	\end{equation}
	and the Fourier-Laplace representation of $\Lambda$ is given by:
	\begin{equation} \label{eq:lambdadecomp}
		\begin{aligned}
			\Lambda(z,z',k,v_1,v_2) & = \Lambda_1(z,z',k,v_1,v_2)	+\Lambda_2(z,z',k,v_1,v_2)	+\Lambda_2(z',z,-k,v_2,v_1)  \\
			\Lambda_1(z,z',k,v_1,v_2) & := \frac{g_0(k,v_1,-k,v_2)}{	(z+ikv_1)(z_2-ikv_2)} - \frac{Q(k,v_1) Q(-k,v_2) \int 		\frac{\hat{g}_0(k,v_1',-k,v_2')}{(z+ikv_1')(z'-ikv_2')} \ud{v_1'} \ud{v_2'}} {\eps(k,-iz_1) \eps(-k,-iz_2) (z_1+ikv_1)	(z_2-ikv_2)}   \\
			\Lambda_2(z,z',k,v_1,v_2) & := \frac{i Q(-k,v_2) \int \frac{	\hat{g}_0(k,v_1,-k,v')}{z'+ikv'}\ud{v'}}{\eps(-k,-iz_2)(z_1+ikv_1)(z_2-ikv_2)} . 
		\end{aligned}
	\end{equation}	
\end{lemma}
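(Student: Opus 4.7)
The plan is to reduce the entire lemma to two applications of the Vlasov formula \eqref{hvlas}, once in the $\xi_2$ variables and once in the $\xi_1$ variables, plus the bookkeeping needed to identify the subtracted free-transport piece $T(t)[S]$ with the correct combination of Fourier--Laplace terms. By the linearity of the propagators $\V_{\xi_1}(t)$ and $\V_{\xi_2}(t')$ and the definition of $\G(t)$, we have
\begin{equation*}
\G(t)[g_0] = \bigl(\V_{\xi_1}(t)\V_{\xi_2}(t)[S]-T(t)[S]\bigr) + \V_{\xi_1}(t)\V_{\xi_2}(t)[g_0] = \Psi(t,t) + \Lambda(t,t),
\end{equation*}
with $\Psi$ and $\Lambda$ defined in the statement. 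So it suffices to compute the Fourier--Laplace representation of $\V_{\xi_1}(t)\V_{\xi_2}(t')[S]$ and $\V_{\xi_1}(t)\V_{\xi_2}(t')[g_0]$ separately for arbitrary $t,t'\ge 0$, and then to recognise $T(t)[S]$ in Fourier--Laplace form.

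First I would apply $\V_{\xi_2}(t')$, treating the $\xi_1$ variables as parameters. For the source $S(\xi_1,\xi_2)=\delta(\xi_1-\xi_2)f(v_1)$, Fourier transform in $x_2$ gives $\hat S(x_1,v_1;k_2,v_2)\propto e^{-ik_2 x_1}\delta(v_2-v_1)f(v_1)$, and Lemma \ref{lem:evdirac} guarantees that \eqref{hvlas} applies to this distributional datum. This yields two contributions in Laplace variable $z'$: a ``free transport'' piece $\hat h_0/(z'+ik_2v_2)$ and a ``self-consistent charge'' piece $iQ(k_2,v_2)\tilde\varrho_2/(z'+ik_2v_2)$, where the integral defining $\tilde\varrho_2$ is explicit because $\hat h_0$ contains $\delta(v_2-v_1)$. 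Then I would apply $\V_{\xi_1}(t)$ to each of the two resulting terms, viewed now as a datum depending on $\xi_1$, invoking Lemma \ref{lem:evSchwartz} for the Schwartz part and Lemma \ref{lem:evdirac} again for the delta contribution, and again using \eqref{hvlas} in Laplace variable $z$. Translation invariance forces $k_1=-k_2=:k$ throughout and collapses the two space Fourier variables into one. This produces four combinations: (free, free), (free, correction), (correction, free), (correction, correction). The (free, free) contribution is precisely the Fourier--Laplace representation of $T(t)[S]$ (as can be checked directly since $T(t)[S]=\delta(x_1-x_2-(v_1-v_2)t)\dots$), so subtracting $T(t)[S]$ kills this term. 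The remaining three terms match $\Psi_1$ (from correction-correction), $\Psi_2(z,z',k,v_1,v_2)$ (from free-correction), and its symmetric counterpart $\Psi_2(z',z,-k,v_2,v_1)$ (from correction-free), using the symmetry of $S$ under exchanging $\xi_1\leftrightarrow\xi_2$.

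The calculation for $\Lambda$ proceeds identically, but with $\hat g_0(k,v_1,-k,v_2)$ in place of $\hat S$, and without any subtraction. Now the (free, free) term is precisely $\hat g_0(k,v_1,-k,v_2)/((z+ikv_1)(z'-ikv_2))$, giving the first term of $\Lambda_1$; the (correction, correction) term gives the second term of $\Lambda_1$; and the two mixed pieces give $\Lambda_2$ and its swapped version $\Lambda_2(z',z,-k,v_2,v_1)$. For these mixed pieces the inner integral $\int \hat g_0(k,v_1,-k,v')/(z'-ikv')\,dv'$ appears (rather than a factor involving $f$), which is exactly the $\tilde\varrho$ produced by a Schwartz initial datum via \eqref{hvlas}. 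Here the commutation relation $\V_{\xi_1}(t)\V_{\xi_2}(t')=\V_{\xi_2}(t')\V_{\xi_1}(t)$ from Lemma \ref{lem:Vlasovcomm} is what guarantees that the order of application does not matter and ensures the symmetry of the formulas under $(z,k,v_1)\leftrightarrow(z',-k,v_2)$.

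The main obstacle I anticipate is the careful handling of the distributional datum in $\Psi$: one must check that each application of \eqref{hvlas} is justified despite the datum being a tempered distribution (not Schwartz), that the resulting Laplace integrals along the contour $L_\gamma$ exist in the improper sense, and that contour shifts are permissible given the analyticity bounds on $1/\eps(k,-iz)$ coming from Assumption~\ref{Ass:onepart2}. These points are all controlled by combining Lemma \ref{lem:evdirac} (for the delta initial datum) with Lemma \ref{lem:evSchwartz} (for the Schwartz remainders produced after each Vlasov step), and by the lower bound \eqref{ass:diel} on $\eps$. Beyond this the proof is a mechanical but somewhat lengthy computation; no further new ingredients are needed.
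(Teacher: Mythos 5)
Your proposal is correct and follows essentially the same route as the paper, whose proof of this lemma is simply the remark that the formulas follow directly from the Fourier--Laplace representation \eqref{hvlas} and the definition of the Bogolyubov propagator; you carry out exactly that computation (two applications of \eqref{hvlas}, expansion into the four free/correction combinations, identification of the free--free term with $T(t)[S]$), with the justifications via Lemmas \ref{lem:evdirac}, \ref{lem:evSchwartz} and \ref{lem:Vlasovcomm} correctly cited. No discrepancy to report.
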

\begin{proof}
	Follows directly from the Fourier-Laplace representation of $\V$ in \eqref{hvlas} and
	the definition of the Bogolyubov propagator in Definition \ref{def:Bogolyubovprop}.
\end{proof}
We will start by proving two Lemmas that we will use throughout this whole section.
\begin{lemma}\label{lem:contint} 
	Let $H_\gamma =\{z \in \Complex: |\Re(z)|\leq \gamma\}$ and $f(k,z) \in L^1_{loc}(\Reals^3,\Complex)$,  such that there exist $R,c>0$ with $\|f(k,i \cdot)\|_{L^\infty(H_{c|k|})}\leq R$ for all $k\in \Reals^3$. Define the function 
	\begin{equation*}
		I(t,k,v,v'):= \int_{i \Reals-|c|k} \frac{e^{zt} f(k,iz)}{(z+ikv)(z+ikv')} \ud{z}.
	\end{equation*}
	Then for all $M,N\in \Naturals_0$, there exists $C>0$ such that 
	\begin{align}
		|\nabla^{M} _{v} \nabla^{N}_{v'} I(t,k,v,v')|\leq\frac{Ce^{-c|k|t}}{|k|} .  \label{contest}
	\end{align}
	Moreover, let $I$ be a function satisfying \eqref{contest} and $\kappa \in\Sch(\Reals^3)$ be a Schwartz function. Then for $p(k,v):= \PV \int
	\frac{\kappa(v')I(t,k,v,v')}{k(v-v')} \ud{v'}$ we have 
	\begin{align}  \label{contfinest}
		\|p(k,\cdot)\|_{C_{b}^1(\Reals^3)} \leq\frac{Ce^{-c|k|t}}{|k|}. 
	\end{align}
\end{lemma}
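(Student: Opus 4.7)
For \eqref{contest}, I parametrize the contour as $z = -c|k| + i\tau$, $\tau \in \Reals$; on this line $|e^{zt}| = e^{-c|k|t}$ and, since $z \in H_{c|k|}$, the hypothesis yields $|f(k,iz)| \leq R$. The denominators satisfy $|z + ik\cdot v|^2 = c^2|k|^2 + (\tau + k\cdot v)^2 \geq c^2|k|^2$, and analogously for $v'$. Differentiating under the integral in $v$ and $v'$ brings out factors of $-ik$ and raises the denominator powers, giving the entrywise bound
\begin{equation*}
\left|\nabla_v^M \nabla_{v'}^N \frac{1}{(z+ik\cdot v)(z+ik\cdot v')}\right| \leq \frac{|k|^{M+N}\,M!\,N!}{|z+ik\cdot v|^{M+1}\,|z+ik\cdot v'|^{N+1}}.
\end{equation*}
Using $|z+ik\cdot v| \geq c|k|$ absorbs $M+N$ powers of $|k|$ into absolute constants, reducing the task to bounding $\int_\Reals d\tau /(|z+ik\cdot v|\,|z+ik\cdot v'|)$. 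By the elementary inequality $(ab)^{-1} \leq \tfrac{1}{2}(a^{-2}+b^{-2})$ and $\int_\Reals d\tau/(c^2|k|^2+\tau^2) = \pi/(c|k|)$, this integral is at most $\pi/(c|k|)$, which yields \eqref{contest}.

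For the estimate on $p$, I align coordinates with $k$: set $\omega = k/|k|$, $u = \omega \cdot v$, and write $v' = s\omega + v'_\perp$ with $v'_\perp \in \omega^\perp$. Then $k\cdot(v-v') = |k|(u-s)$ and the three-dimensional PV reduces to a one-dimensional PV in $s$ after integrating transversely:
\begin{equation*}
p(k,v) = \frac{1}{|k|}\int_{\omega^\perp} dv'_\perp\, \PV \int_\Reals \frac{\psi_{v'_\perp}(s,v)}{u-s}\, ds, \qquad \psi_{v'_\perp}(s,v) := \kappa(s\omega + v'_\perp)\, I(t,k,v,s\omega + v'_\perp).
\end{equation*}
For the inner PV I split at $|s-u|=1$: on $|s-u|<1$ I subtract $\psi(u,v)$ from the numerator (admissible, since $\PV\int_{|s-u|<1}(u-s)^{-1}\,ds = 0$) and bound $|(\psi(s,v)-\psi(u,v))/(u-s)| \leq \|\partial_s\psi\|_\infty$; on $|s-u|\geq 1$ the Schwartz decay of $\kappa$ in $s$ supplies an $L^1$ bound. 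Both contributions are controlled by Schwartz seminorms of $\kappa$ times $\|I\|_\infty + \|\nabla_{v'}I\|_\infty$, which by \eqref{contest} is $\leq Ce^{-c|k|t}/|k|$; the outer $v'_\perp$-integral converges by Schwartz decay of $\kappa$ in the transverse directions, giving the sup-norm part of \eqref{contfinest}.

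For the $C^1_b$ control, the main obstacle is that differentiating $(k\cdot(v-v'))^{-1}$ in $v$ produces a non-integrable second-order singularity on the plane $k\cdot(v-v')=0$. I avoid this by differentiating the cylindrical representation directly and using the translation-invariance identity $\partial_u\, \PV\int \psi(s)/(u-s)\,ds = \PV\int \partial_s\psi(s)/(u-s)\,ds$ (which follows from the change of variable $\xi = s-u$ and differentiation under the integral). Combined with the chain rule for $u = \omega \cdot v$, this gives
\begin{equation*}
\nabla_v p(k,v) = \frac{1}{|k|}\int_{\omega^\perp} dv'_\perp \left[\omega\, \PV\!\int \frac{\partial_s \psi_{v'_\perp}(s,v)}{u-s}\, ds + \PV\!\int \frac{\nabla_v \psi_{v'_\perp}(s,v)}{u-s}\, ds\right].
\end{equation*}
Each PV has exactly the form treated above with $\psi$ replaced by $\partial_s\psi$ or $\nabla_v\psi$; these only involve $\nabla\kappa$ (still Schwartz) and $\nabla_{v'}I$ or $\nabla_v I$ (both bounded by $Ce^{-c|k|t}/|k|$ by \eqref{contest}), so the argument of the previous paragraph applies verbatim and delivers the $C^1_b$ bound.
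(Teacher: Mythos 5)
Your proof of \eqref{contest} is correct and follows essentially the paper's route: parametrize the contour, use $|e^{zt}|=e^{-c|k|t}$ together with the bound on $f$, differentiate under the integral, and control the remaining one-dimensional integral. Your combination of $|z+ik\cdot v|\geq c|k|$ with $(ab)^{-1}\leq \tfrac12(a^{-2}+b^{-2})$ and $\int_\Reals \ud{\tau}/(c^2|k|^2+\tau^2)=\pi/(c|k|)$ is a clean way to produce the factor $|k|^{-1}$; the paper instead rescales $r=|k|t$ and takes a supremum over shifts, but the two computations are interchangeable.

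For \eqref{contfinest} your strategy (reduce to a one-dimensional principal value along $\omega$, split at $|s-u|=1$, use $\|\partial_s\psi\|_\infty$ near the singularity and an $L^1$ bound away from it) is the same as the paper's, which packages it through the Radon-type transform $P(t,k,v,u)=\int I\,\kappa\,\delta(kv'-u)\ud{v'}$ and the bound $|\PV\int P(u')/(u-u')\ud{u'}|\leq C(\|P\|_{C^1}+\|P\|_{L^1})$. However, there is a concrete gap in your final accounting: your displayed cylindrical representation carries an explicit overall prefactor $1/|k|$ coming from $k\cdot(v-v')=|k|(u-s)$, while the inner principal value is bounded by Schwartz seminorms of $\kappa$ times $\|I\|_\infty+\|\nabla_{v'}I\|_\infty\leq Ce^{-c|k|t}/|k|$. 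The product is $Ce^{-c|k|t}/|k|^2$, not the claimed $Ce^{-c|k|t}/|k|$; the closing sentence of your second paragraph silently drops the prefactor you wrote down. This loss is not an artifact of your method: taking $I\equiv e^{-c|k|t}/|k|$, which satisfies \eqref{contest}, gives $p(k,v)=\tfrac{e^{-c|k|t}}{|k|^2}\,\PV\int \kappa(v')/(\omega\cdot(v-v'))\ud{v'}$, and the latter integral (the Hilbert transform of the Radon transform of $\kappa$) is a generically nonzero bounded function of $v$, so for an arbitrary $I$ satisfying \eqref{contest} the bound \eqref{contfinest} cannot hold with $|k|^{-1}$; the correct exponent is $|k|^{-2}$. (The paper's own proof has the same slip in its claimed bound on $P$.) Since $|k|^{-2}$ is still locally integrable in $\Reals^3$, the weaker bound suffices for the only place the estimate is invoked, namely the $IR$ cross terms in the proof of Lemma \ref{Psi1conv}; but as written your argument proves a different inequality from the one you assert. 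Your treatment of the $C^1_b$ part, moving the derivative onto $\psi$ via translation invariance of the PV kernel, is fine modulo the same power of $|k|$.
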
 
\begin{proof}
	We start by proving \eqref{contest}. To this end, let $M,N\in \Naturals_0$ be arbitrary. Since $f$ is bounded on $H_{c|k|}$ , we can differentiate through the integral:
	\begin{align*}
		|\nabla^M _v \nabla^N_{v'} I(t,k,v,v')| \leq &e^{-c|k|t} \int_{i{\Reals}-c |k|} \frac{ |k|^{N+M} |f(k,iz)|}{|z+ikv|^{M+1}|z+ikv'|^{N+1}} \ud{z} \\
		\leq & C e^{-c|k|t} \int_{{\Reals}} \frac{ |k|^{N+M} }{(|k|+|r-kv|)^{M+1}(|k|+|r-kv'|)^{N+1}} \ud{r} \\
		\leq & C e^{-c|k|t}\int_{{\Reals}} \frac{ |k|^{N+M+1} }{%
			(|k|+|r|k|-kv|)^{M+1}(|k|+|r|k|-kv'|)^{N+1}} \\
		\leq &\frac{C e^{-c|k|t} } {|k|}\sup_{a,b \in{\Reals}} \int _{{\Reals}} \frac1{(1+|t-a|)^{M+1}(1+|t-b|)^{N+1}}\ud{t} 	\leq \frac{C e^{-c|k|t} } {|k|}.
		\end{align*}
	To prove \eqref{contfinest} we remark that $P(t,k,v,u):=\int I(t,k,v,v') \kappa(v')	\delta(kv'-u) \ud{v'}$ satisfies 
		\begin{equation*}
		|\nabla^{M} _{v} \nabla^{N}_{u} P(t,k,v,u)|\leq\frac{C e^{-c|k|t}}{|k|(1+|u|)^2}.
		\end{equation*}
		On the other hand $p(k,v)= \PV \int \frac{P(t,k,v,u')}{kv-u'} 
		\ud{u'}$ and the principal value integral can be bounded by 
		\begin{align*}
		|\PV \int \frac{P(u')}{u-u'} \ud{u'}| \leq C
		\left( \|P\|_{C^1}+ \|P\|_{L^1}\right) .
		\end{align*}
	\end{proof}
\begin{lemma}		\label{lem:dirac} Let $f \in \Sch( \Reals^3 \times\Reals^3)$ be a Schwartz function.
	\begin{enumerate}[(i)]
		\item For $t\rightarrow \infty$, the following convergence holds in the sense of Schwartz distributions:
			\begin{align} \label{eq:oscconv}
			\PV \frac{e^{-ik(v_1-v_2)t}}{k(v_1-v_2)} \longrightarrow-i \pi
			\delta(k(v_1-v_2))  \in \Sch'(\Reals^9).
			\end{align}
		\item For  $M\in{\mathbb{N}}_{0}$ arbitrary, the following convergence holds in $C^{M}_{b}(\Reals^3)$ as $t\rightarrow \infty$:
			\begin{align} \label{eq:osconvint}
			\PV \int f(k,v_2) \frac{e^{-ik(v_1-v_2)t}}{k(v_1-v_2)} \ud{k}
			\ud{v_2} \rightarrow-i \pi\int_{\Reals^3\times{\Reals}%
				^{3}} \delta(k(v_1-v_2)) f(k,v_2)\ud{v_2} \ud{k}.
			\end{align}
	\end{enumerate}
\end{lemma}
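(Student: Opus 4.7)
The plan is to reduce both statements to the one-dimensional identity
\[
\PV \int_{\Reals} \frac{e^{-is\tau}}{s}\, ds = -i\pi \quad \text{for every } \tau>0,
\]
which I would obtain from $e^{-is\tau}/s = \cos(s\tau)/s - i\sin(s\tau)/s$ (the cosine piece vanishes by oddness) together with the Dirichlet integral $\int_0^\infty \sin(u)/u\, du = \pi/2$. Bootstrapping this via the decomposition $G(s) = G(0) + s\tilde G(s)$ with $\tilde G \in \Sch(\Reals)$ yields, for every $G \in \Sch(\Reals)$,
\[
\PV \int_{\Reals} \frac{G(s)\, e^{-is\tau}}{s}\, ds = -i\pi\, G(0) + \sqrt{2\pi}\, \widehat{\tilde G}(\tau),
\]
and the remainder $\widehat{\tilde G}(\tau)$ decays faster than any polynomial in $\tau$ by Riemann--Lebesgue and integration by parts, with constants controlled by Schwartz seminorms of $G$.

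For part (ii), I would change variables $w := v_1-v_2$ so that $I(t,v_1) = \PV \int f(k,v_1-w)\, e^{-ikwt}/(kw)\, dk\, dw$, and then, for fixed $k \neq 0$, decompose $w = s\omega + w^\perp$ with $\omega = k/|k|$ and $w^\perp \perp k$. Setting
\[
G(k,v_1,s) := \int_{w^\perp \perp \omega} f(k,\, v_1 - s\omega - w^\perp)\, dw^\perp,
\]
the inner $(s,w^\perp)$-integral equals $|k|^{-1}\PV\int G(k,v_1,s)\, e^{-i|k|s t}/s\, ds$. Applying the 1D lemma with $\tau = |k|t$ gives pointwise convergence in $k\neq 0$ to $-i\pi\, G(k,v_1,0)/|k|$, which after integration in $k$ matches the claimed limit (since $\int\delta(k(v_1-v_2))f(k,v_2)\,dv_2 = G(k,v_1,0)/|k|$). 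Passage to the $k$-integral is justified by dominated convergence: Schwartz decay of $G$ in $k$, together with local integrability of $|k|^{-1}$ in $\Reals^3$, handles the region $|k| \to 0$.

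To upgrade to $C^M_b(\Reals^3)$-convergence, I would use that after the change of variables $w = v_1 - v_2$ the parameter $v_1$ appears only inside $f$, so $\nabla_{v_1}^\alpha I(t,v_1)$ is the same integral with $f$ replaced by $\nabla^\alpha_{v_2} f$ — still Schwartz. Uniformity in $v_1$ then reduces to a uniform bound on the remainder $|k|^{-1}\widehat{\tilde G}(k,v_1,|k|t)$: split $\int_{\Reals^3} dk$ into $|k| < 1/t$ (where $|k|^{-1}$ is integrable and contributes $O(1/t^2)$) and $|k| \geq 1/t$ (where integration by parts in $s$ gives $|\widehat{\tilde G}(k,v_1,|k|t)| \leq C_N/(1+|k|t)^N$, with $C_N$ uniform in $v_1$ thanks to Schwartz bounds on $f$), which is enough to conclude.

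Part (i) follows the same pattern: testing against $\psi \in \Sch(\Reals^9)$, the substitution $w = v_1-v_2$ and Fubini in $v_2$ reduce the statement to the 6-dimensional version
\[
\PV \int_{\Reals^6} \frac{e^{-ikwt}}{kw}\, \Phi(k,w)\, dk\, dw \;\longrightarrow\; -i\pi \int_{\Reals^6} \delta(kw)\, \Phi(k,w)\, dk\, dw
\]
for the Schwartz function $\Phi(k,w) := \int \psi(k, v_2+w, v_2)\, dv_2$, which the $w = s\omega+w^\perp$ decomposition plus the 1D lemma resolves exactly as in Step~2. The main technical point throughout is the $|k|^{-1}$ prefactor that appears after the orthogonal decomposition of $w$: it forces one to check uniform-in-$v_1$ integrability of the remainder near $k=0$, but this is comfortably absorbed by the three-dimensional volume element and Schwartz decay of $f$ (respectively $\Phi$) in $k$.
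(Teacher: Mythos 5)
Your argument is correct in substance and follows the same overall skeleton as the paper: both reduce the nine- (resp.\ six-) dimensional oscillatory principal value to a one-dimensional one by integrating the test function over the planes $\{\omega\cdot(v_1-v_2)=\mathrm{const}\}$, $\omega=k/|k|$ (your $G(k,v_1,s)$ is exactly the paper's Radon transform $W(k,u)$, resp.\ $F(k,u)$, up to the factor $|k|$), and then pass to the limit in $k$ by dominated convergence using the local integrability of $|k|^{-1}$ in $\Reals^3$. Where you genuinely diverge is the one-dimensional step: the paper writes $\PV\int e^{-iut}W(k,u)/u\,\mathrm{d}u$ as a Fourier pairing with the multiplier $-i\sqrt{\pi/2}\,\sign(\xi+t)$ and lets $\sign(\xi+t)\to 1$ against $\hat W(k,\cdot)\in L^1$, whereas you evaluate it by hand via the Dirichlet integral plus the splitting $G(s)=G(0)+s\tilde G(s)$. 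Your route is more elementary and, as a bonus, yields a quantitative rate $O(t^{-2})$ for the $C^M_b$ convergence in (ii) through the $|k|<1/t$ versus $|k|\ge 1/t$ decomposition, which the paper does not make explicit.

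One imprecision to fix: $\tilde G(s)=(G(s)-G(0))/s$ is \emph{not} Schwartz when $G(0)\neq 0$, since it behaves like $-G(0)/s$ at infinity (it is not even absolutely integrable). The conclusion you need still holds, because $\partial_s^N\tilde G\in L^1$ for every $N\ge 1$, so integration by parts gives $|\widehat{\tilde G}(\tau)|\le C_N\tau^{-N}$ with $C_N$ controlled by Schwartz seminorms of $G$; alternatively, replace the splitting by $G(s)=G(0)\eta(s)+s\tilde G_\eta(s)$ with a compactly supported cutoff $\eta$ equal to $1$ near the origin, which makes $\tilde G_\eta$ genuinely Schwartz at the cost of an extra explicitly computable, rapidly decaying term $G(0)\,\PV\int(\eta(s)-1)e^{-is\tau}/s\,\mathrm{d}s$. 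Either repair is routine and does not affect the rest of the argument.
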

\begin{proof} We start by proving the convergence \eqref{eq:oscconv}. Let $w(k,v_1,v_2)$ be a Schwartz function and $W(k,u):= \int_{{\Reals}^{6}} \delta(k(v_1-v_2)-u) w(k,v_1,v_2) \ud{v_1} \ud{v_2}$. Let $\hat{W}$ be the Fourier transform in $u$, then: 
		\begin{align*}
		 & \PV\int \int_{\Reals^3 \times\Reals^3} \frac{e^{-ik(v_1-v_2)t}}{k(v_1-v_2)} w(k,v_1,v_2)\ud{v_1} \ud{v_2}\ud{k} 	=   \PV \int \int_{{\Reals}} \frac{e^{-iut}}{u} W(k,u) \ud{u} \ud{k}\\
		= & \int -i \sqrt{\frac{\pi}{2}} \sign(\xi+t)\hat{W}(k,\xi)\ud{\xi} \ud{k}	\rightarrow  -i \pi\int W(k,0) \ud{k}, \quad \text{as $t\rightarrow \infty$}.
		\end{align*}
		For proving \eqref{eq:osconvint}, we observe that $f\in \Sch$ implies that $F(k,u):= \int\delta(kv+u)f(k,v) \ud{v}$ is also Schwartz. Furthermore, we have 
		\begin{align*}
		& \PV \int f(k,v_2) \frac{e^{-ik(v_1-v_2)t}}{k(v_1-v_2)} \ud{%
			k} \ud{v_2} \\
		= & \int\PV \int_{{\Reals}} \frac{F(k,u)e^{-i(kv_1+u)t}}{kv_1+u} 
		\ud{u} \ud{k} 		=  \int\PV \int_{{\Reals}} \frac{F(k,u-kv_1)e^{-iut}}{u} \ud{u}
		\ud{k} \\
		\rightarrow  &\int F(k,k\cdot v_1) \ud{k},\quad \text{as $t\rightarrow \infty$}.
		\end{align*}
		Differentiating through the integral, we obtain the convergence for arbitrary
		derivatives in $v_1$.
\end{proof}
	\begin{lemma} \label{lem:stabzero}
		The solution $g(t)=\G(t)[N_0]$ to \eqref{BBGKYint3g} with zero initial datum $N_0:\equiv0$ converges to the
		Lenard solution in the sense of distributions, so 
		\begin{align*}
			\G(t) [N_0] \longrightarrow g_{B} \quad\text{ in $\Sch'(\Reals^9)$ as $t\rightarrow\infty$.}
		\end{align*}
	\end{lemma}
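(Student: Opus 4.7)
The plan is to exploit the explicit Fourier-Laplace representation of $\Psi(t,t,\cdot)$ from Lemma \ref{lem:gtdecomp} (with $g_0 = 0$) and invert the two Laplace variables by contour deformation, separating exponentially decaying remainders from purely oscillatory residue contributions. The distributional limit as $t \to \infty$ then isolates the singular parts of the residues, which should reassemble the Oberman-Williams-Lenard formula \eqref{eq:Bogcorr}-\eqref{eq:hFourier} for $\hat{g}_B$.

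First I would shift both contours $L_1 \mapsto L_{-c|k|}$ in the inversion of $z$ and $z'$. Assumption \ref{Ass:onepart2} guarantees that $\eps(k,-iz)^{-1}$ is holomorphic and bounded on $H^-_{-c}$, so the only singularities crossed are the simple real poles $z=-ikv_1$, $z=-ikv_1'$, $z'=ikv_2$, $z'=ikv_2'$. The integrals on the shifted contours satisfy the hypotheses of Lemma \ref{lem:contint}, so they are bounded by $Ce^{-c|k|t}/|k|$ with corresponding control on derivatives in $v_1, v_2$; against any Schwartz test function of $(k,v_1,v_2)$ these pieces vanish as $t \to \infty$.

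Next I would collect the residue contributions. For $\Psi_2$ the factored structure yields a purely oscillatory term of the form $\frac{f(v_1)\, iQ(-k,v_2)}{\eps(-k,kv_2)(\cdots)} e^{-ik(v_1-v_2)t}$. For $\Psi_1$, after the identity $\int \frac{\delta(v_1'-v_2')f(v_1')}{(z+ikv_1')(z'-ikv_2')}\ud{v_1'}\ud{v_2'} = \int \frac{f(v')\ud{v'}}{(z+ikv')(z'-ikv')}$ and partial fractions in $v'$, the double residue produces a sum of oscillatory terms carrying principal-value-type factors $\PV \frac{1}{k(v_j-v')}$. I would then pass to the limit $t \to \infty$ using Lemma \ref{lem:dirac}: oscillatory terms without singular denominators tend to zero by Riemann-Lebesgue, while those of the form $\PV \frac{e^{-ik(v_a-v_b)t}}{k(v_a-v_b)}$ converge in $\Sch'$ to $-i\pi \delta(k(v_a-v_b))$. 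The resulting integrals in $v'$ against $f(v')$ collapse onto $F(\omega,\omega\cdot v)$ and $\partial_u F(\omega,\omega\cdot v)$, precisely matching the $P^-$ operators of \eqref{def:eps} and \eqref{eq:hFourier}.

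The main obstacle is the final algebraic identification: checking that the sum of all residue limits reproduces $\hat{g}_B$ as in \eqref{eq:Bogcorr}. This requires the Plemelj-Sokhotski identity \eqref{eq:PpmLem} together with the definitions \eqref{def:Apm}-\eqref{def:Bpm}, as well as the fact (Lemma \ref{lem:H}) that $\hat{H}_B$ satisfies the Wiener-Hopf equation \eqref{eq:H1}, in order to match the surviving terms with the representation of $\hat{h}_B$ appearing in the definition of $\hat{g}_B$.
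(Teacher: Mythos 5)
Your proposal is correct and follows essentially the same route as the paper: the same $\Psi_1+\Psi_2$ decomposition from Lemma \ref{lem:gtdecomp}, contour deformation to $i\Reals-c|k|$ using Assumption \ref{Ass:onepart2}, decay of the shifted integrals via Lemma \ref{lem:contint}, partial fractions plus Lemma \ref{lem:dirac} for the oscillatory residues, and a final Plemelj rearrangement to identify the limit with \eqref{def:Bogcorr}. The "main obstacle" you flag (the algebraic matching with $\hat g_B$ via the identities for $A^\pm$ and $\hat H_B$) is exactly the step the paper also leaves as "a rearrangement of terms."
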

	\begin{proof}
		By Lemma \ref{lem:gtdecomp} we have $g(t,\cdot)=\G(t)[N_0](\cdot)=\Psi(t,t,\cdot)$. We use the Fourier-Laplace representation 
		$\Psi(z_1,z_2,k,v_1,v_2)  = \Psi_1(z_1,z_2,k,v_1,v_2)
		+\Psi_2(z_1,z_2,k,v_1,v_2) +\Psi_2(z_2,z_1,-k,v_2,v_1)$ in \eqref{eq:psidecomp}.
		We will show the distributional convergence term by term, starting with $\Psi_1$.		
\begin{lemma} \label{Psi1conv} 
	The following convergence holds in the sense of distributions: 
	\begin{align}
		\Psi_1(t,t,k,v_1,v_2)  		\longrightarrow & \frac{Q(k,v_1)Q(-k,v_2)}{%
		k(v_1-v_2)-i0}\int\frac{\frac{f(v')}{|\eps(k,-kv')|^2}}{k(v_1-v')-i0} \ud{v'}  \label{hv1} \\
			+ & \frac{Q(k,v_1)Q(-k,v_2)}{k(v_1-v_2)-i0}\int%
			\frac{\frac{f(v')}{|\eps(k,-k v')|^2}}{%
				k(v_2-v')-i0} \ud{v'}, \quad \text{as $t\rightarrow \infty$} .  \label{hv2}
	\end{align}
\end{lemma}
\begin{proof}
	First we perform the integration in $v_2'$ 
	\begin{align*}
		\Psi_1(z_1,z_2,k,v_1,v_2) & = - \frac{Q(k,v_1) Q(-k,v_2) \int
		\frac{ f(v')}{(z_1+ik v')(z_2-ik v')} \ud{v'} } {\eps	(k,-iz_1) \eps(k_2,-iz_2) (z_1+ik v_1) (z_2-ik v_2)}\\
			& = - \int\frac{Q(k,v_1) Q(-k,v_2) \frac{ f(v')}{(z_1+ik v')(z_2-ik v')} } {\eps(k,-iz_1) 
				\eps(k_2,-iz_2) (z_1+ik v_1) (z_2-ik v_2)} \ud{v'}.
	\end{align*}
	Now for $k$ fixed, we can perform the Laplace inversion
	integral both in $z_1$ and $z_2$. For $\Re(z_{i})>0$ the integrand has no singularities, so we can
	carry out the Laplace inversion on the contour with $\Re(z_i)=1$. By Assumption \eqref{ass:diel}, 
	$|\eps(k,-iz)|$ is bounded below for $\Re(z)=-ic|k|$ and some $c>0$. 
	The estimate \eqref{eq:Cauchy} allows to use Cauchy's residual theorem to move the contour to the left of the imaginary line:
	\begin{align*}
			& \frac1{2\pi i } \int_{i{\Reals}+c} \frac{Q(k,v)e^{zt}}{\eps%
				(k,-iz)(z+ikv)(z+ikv')} \ud{z} \\
			= & \frac1{2\pi i } \int_{i{\Reals}-c|k|} \frac{Q(k,v) e^{zt}}{\eps%
				(k,-iz)(z+ikv)(z+ikv')} \ud{z} + \PV \frac{Q(k,v)e^{-ikvt}}{\eps%
				(k,-kv)ik(v'-v)}+ \PV \frac{ Q(k,v)e^{-ikv't}}{\eps%
				(k,-kv')ik(v-v')} \\
			= & Q(k,v) \left( \frac1{2\pi i } \int_{i{\Reals}-c|k|} \frac{e^{zt}}{%
				\eps(k,-iz)(z+ikv)(z+ikv')} \ud{z} + \PV \frac{\frac{e^{-ikvt}}{%
					\eps(k,-kv)}-  \frac{e^{-ikv't}}{\eps(k,-kv')}}{ik(v'-v)}\right) \\
			=: & Q(k,v)(I(t,k,v,v') + R(t,k,v,v')).
	\end{align*}
	Writing $\Psi_1$ in terms of the functions $I$ and $R$ we obtain 
	\begin{align*}
			\Psi_1(t_1,t_2,k,v_1,v_2) & = -  \int
			f(v')Q(k,v_1)Q(-k,v_2)(I+R)(t,v_1,v')(I+R)(t,v_2,v') \ud{v'}.
	\end{align*}
	We expand the product $(I+R)(I+R)$ inside the integral. We claim all terms
	containing an integral term $I$ tend to zero in the limit $t\rightarrow\infty$ by
	Lemma \ref{lem:contint}. For the terms containing products of the form $IR$ this follows from 
	\eqref{contest}, for the products of the form $I I$ this can be inferred from 
	\eqref{contfinest} and the fact that the singularity in $k$ in estimate 
	\eqref{contfinest} is integrable. It remains to study the limiting
	behavior of the residual part: 
	\begin{align*}
		\Psi_1(t,k,v_1,v_2) + \int f(v')R(t,v_1,v')R(t,v_2,v') \ud{v'}
			\rightarrow0 \quad\text{ in $D'(\Reals^9)$}.
	\end{align*}
	In order to find the distributional limit of $\Psi_1$ we have to
	determine the limit of 
	\begin{align*}
		\Psi_{\infty}(t,k,v_1,v_2) := & - \int
			f(v')R(t,v_1,v')R(t,v_2,v') \ud{%
				v'} \\
			= & - Q(k,v_1) Q(-k,v_2) \PV \int f(v')\frac {%
				\frac{e^{-ik v_1t}}{\eps(k,-k v_1)}-\frac{e^{-ik v' t}}{\eps(k,-kv')}}{k(v'-v_1)} \frac{\frac {%
					e^{ik v_2t}}{\eps(-k,kv_2)}-\frac{e^{ik v't}}{\eps(-k,k v')}}{k(v'-v_2)} \ud{v'}.
	\end{align*}
	The denominator we split as 
	\begin{align}
			\frac1{k(v'-v_1)
				k(v'-v_2)}=\frac1{k(v_1-v_2)}\left( \frac1{k(v'-v_1)}-\frac1{k(v'-v_2)}\right) .
			\label{psiinfty}
	\end{align}
	Using this we can split $\Psi_{\infty}= \sum_{j=1}^2 \sum_{l=1}^{4} \Psi_{\infty}^{j,l}$, where $\Psi_{\infty}^{j,l}$ are given by (here $\zeta(1)=2$, $\zeta(2)=1$): 
	\begin{align*}
			\Psi_{\infty}^{j,1} (t,k,v_1,v_2) & := (-1)^{j}
			Q(k,v_1) Q(-k,v_2) \int f(v')\frac{\frac{e^{-ik (v_1-v_2)t}%
				}{\eps(k,-k v_1)\eps(-k,kv_2)}}{k(v'-v_j)k(v_1-v_2)} \ud{v'} \\
		\Psi_{\infty}^{j,2}(t,k,v_1,v_2) & :=(-1)^{j}
		Q(k,v_1) Q(-k,v_2) \int f(v')\frac{-\frac{e^{(-1)^{j}
					ik(v_{j}-v')t}}{\eps(k,-kv_1)\eps(-k,kv')}}{k(v'-v_{j})k(v_1-v_2)} \ud{v'} \\
		\Psi_{\infty}^{j,3}(t,k,v_1,v_2) & := (-1)^{j}
		Q(k,v_1) Q(-k,v_2) \int f(v')\frac{\frac1{\eps%
				(k_1,-k_1v')\eps(-k_1,k_1v')}}{k(v^{%
				\prime}-v_{j})k(v_1-v_2)} \ud{v'} \\
		\Psi_{\infty}^{j,4}(t,k,v_1,v_2) & := (-1)^{j}	Q(k,v_1) Q(-k,v_2) \int f(v')\frac{-\frac{e^{(-1)^{j}ik(v'-v_{\zeta(j))})t}}{\eps(k,-kv')\eps%
				(-k,kv_{\zeta(j))})}}{k(v'-v_{j})k(v_1-v_2)} 
		\ud{v'}.
	\end{align*}
	We compute the limits of these terms separately. Applying the Lemmas \ref{lem:contint} and \ref{lem:dirac} yields for $t\rightarrow \infty$:
	\begin{align*}
		\Psi_{\infty}^{j,1}(t,v_1,v_2) & \rightarrow(-1)^{j+1}\frac{i
		\pi\delta(k(v_1-v_2))}{\eps(k,-kv_1)\eps(-k,kv_2)%
		} Q(k,v_1) Q(-k,v_2) \PV \int \frac{f(v')}{k(v'-v_1)} \ud{v'} \\
		\Psi_{\infty}^{j,2}(t,v_1,v_2) & \rightarrow\frac{i \pi}{k(v_1-v_2)} Q(k,v_1) Q(-k,v_2) \int f(v')
		\frac{\delta(k(v'-v_j))}{|\eps(k,-kv')|^2} \ud{v'} \\
		\Psi_{\infty}^{j,3}(t,v_1,v_2) & \rightarrow(-1)^{j}\frac { Q(k,v_1) Q(-k,v_2)}{k(v_1-v_j)} \int \frac{%
		f(v')}{|\eps(k,-kv')|^2k(v'-v_1)} \ud{v'} \\
	\Psi_{\infty}^{j,4}(t,v_1,v_2) & \rightarrow0 \quad\text{for $%
		v_1\neq v_2$.}
	\end{align*}
	The terms $\Psi^{1,1}_{\infty}$ and $\Psi_{\infty}^{2,1}$ cancel. The
	remaining terms can be rearranged to:
	\begin{align*}
		\Psi_1(t,v_1,v_2) \rightarrow & \frac{Q(v
		_1)Q(-k,v_2)}{k(v_1-v_2)-i0} \int\frac{f(v')}{|\eps%
		(k,-k v')|^2k(v'-v_1)-i0} \ud{v'}\\
		+ &  \frac{Q(k,v_1)Q(-k,v_2)}{k(v_1-v_2)-i0}
		\int\frac{f(v')}{|\eps(k,-kv')|^2k(v'-v_2)-i0} \ud{v'}, \quad \text{as $t\rightarrow \infty$},
	\end{align*}
	 using Plemelj's formula.
\end{proof}

\begin{lemma}
	For $\Psi_2$ we have the following convergence in the sense of distributions:
	\begin{align*}
	\Psi_2(t,t,k,v_1,v_2) \rightarrow- & \frac{	f(v_1) Q(-k,v_2) }{\eps(-k,-kv_1)k(v_1-v_2)-i0}, \quad \text{as $t\rightarrow \infty$}.
	\end{align*}
\end{lemma}
\begin{proof}
	We argue similarly to the case of $\Psi_1$. We start from the
	definition of $\Psi_2$ 
	\begin{align*}
	\Psi_2(z_1,z_2,k,v_1,v_2) & = \frac{ f(v_1)}{%
		(z_1+ikv_1)} \frac{i Q(-k,v_2)}{\eps(-k,-iz_2)(z_2-ikv_1)(z_2-ikv_2)}
	\end{align*}
	and invert the Laplace transforms to obtain: 
	\begin{align*}
	\Psi_2(t_1,t_2,v_1,v_2) & = R(t_1,t_2,v_1,v_2)+ I(t_1,t_2,v_1,v_2) \\
	R(t_1,t_2,v_1,v_2) & :=	e^{-ikv_1t_1}f(v_1)Q(-k,v_2) \frac{\frac{%
			e^{ikv_2t_2}}{\eps(-k,-kv_2)}-\frac{e^{ikv_1t_2}}{\eps(-k,-kv_1)}}{-k(v_1-v_2)} \\
	I(t_1,t_2,v_1,v_2) & = f(v_1)e^{-ik v_1t_1}\frac1{2\pi i }	\int_{i{\Reals}-c|k|} \frac{i e^{z_2t_2} Q(-k,v_2)}{\eps(-k,-iz_2)(z_2-ikv_1)(z_2-ikv_2)} \ud{z_2}.
	\end{align*}
	We have $I(t,t,\cdot) \rightarrow0$ for $t\rightarrow \infty$, arguing as in the previous lemma.
	Hence we are left with the residual term $R$, which by
	Lemma \ref{lem:dirac} converges to 
	\begin{align*}
	R(t,t,v_1,v_2) = &
	e^{-ik v_1t}f(v_1)Q(-k,v_2) \frac{\frac{%
			e^{ikv_2t}}{\eps(-k,-kv_2)}-\frac{e^{ikv_1t}}{\eps%
			(-k,-kv_1)}}{-k(v_1-v_2)} \\
	\rightarrow & \delta(v_1-v_2)\frac{i \pi f(v_1)
		Q(-k,v_2) }{\eps(-k,-kv_2)} 	-  \frac{f(v_1) Q(-k,v_2) }{\eps(-k,-k v_1)k(v_1-v_2)},
	\end{align*}
	as $t\rightarrow \infty$. Using Plemelj's formula this proves the claim of the lemma.
\end{proof}
Combining the two previous lemmas, we obtain the following convergence in the sense of distributions:
\begin{align*}
g(t,v_1,v_2) \rightarrow & \frac{Q(k,v_1)Q(-k,v_2)}{k(v_1-v_2)-i0} \int\frac{f(v')}{|\eps(k,-kv')|^2k(v'-v_1)-i0} \ud{v'}\\
+ & \frac{ Q(k,v_1)Q(-k,v_2)}{k(v_1-v_2)-i0}
\int\frac{f(v')}{|\eps(k,-kv')|^2 k(v'-v_2)-i0} \ud{v'} \\
- &  \frac{f(v_1) Q(-k,v_2) }{\eps(-k,-kv_1)k(v_1-v_2)-i0} + \frac{f(v_2) Q(-k,v_2) }{\eps(-k,-kv_2)k(v_1-v_2)+i0},
\end{align*}
which by a rearrangement of terms coincides with $g_B$ (cf. \eqref{def:Bogcorr}). This finishes the proof
of Lemma \ref{lem:stabzero}.
\end{proof}
We now prove that the memory of the initial datum is erased by the evolution.
\begin{lemma}
	Let $g_0\in \Sch((\Reals^3)^3)$ be a function such that $g_0(x_1-x_2,v_1,v_2)$ is symmetric in exchanging $\xi_1$, $\xi_2$.
	Then the following holds:
	\begin{align*}
	\Lambda(t,t,x,v_1,v_2) =  \V_{\xi_1}(t)	\V_{\xi_2}(t)[g_{0}](x,v_1,v_2) \longrightarrow0 \quad\text{in $S'({\Reals}^{9})$ as $%
		t\rightarrow\infty$.}
	\end{align*}
\end{lemma}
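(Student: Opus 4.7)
The plan is to mimic the contour-deformation argument from the proof of Lemma \ref{lem:stabzero}, but now exploiting the fact that the initial datum $g_0$ is a Schwartz function rather than a $\delta$-type object like $S$. Using the Fourier–Laplace representation of Lemma \ref{lem:gtdecomp}, I will invert the Laplace transforms in $z,z'$ appearing in $\Lambda_1$ and $\Lambda_2$. By Assumption \ref{Ass:onepart2}, $\eps(k,-iz)$ is nowhere vanishing and uniformly bounded on the shifted half-plane $H_{-c}^-$, so I can shift the integration contours from $\Re(z),\Re(z')=1$ to $\Re(z),\Re(z')=-c|k|$. In doing so one picks up residues at the poles $z=-ikv_1$ and $z'=ikv_2$, as well as boundary contributions from the continuous families of poles $z=-ikv_1'$, $z'=ikv_2'$ appearing inside the inner $v_1',v_2'$-integrals.

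\textbf{Vanishing of the two parts.} The integrals on $\Re(z),\Re(z')=-c|k|$ satisfy, thanks to Lemma \ref{lem:contint}, pointwise bounds of the form $Ce^{-c|k|t}/|k|$; combined with the Schwartz decay of $\hat g_0$ in $k$ and dominated convergence, they tend to $0$ in $\Sch'(\Reals^9)$ as $t\to\infty$. The residue contributions, evaluated at $t_1=t_2=t$, are oscillatory integrals of the form
\begin{align*}
\int \Phi(k,v_1,v_2,v')\,e^{-ik\cdot\alpha(v_1,v_2,v')t}\,\ud{v'}\ud{k},
\end{align*}
with $\Phi$ smooth and rapidly decaying in all velocity variables (possibly involving bounded factors $1/\eps$, $1/\bar\eps$ and a principal-value kernel $1/(k(v_j-v'))$ which, as in Lemma \ref{lem:contint}, preserves Schwartz-type regularity after the $v'$-integration), and $\alpha$ a nontrivial linear combination of velocities. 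The simplest instance is the Laplace inversion of the first term of $\Lambda_1$, which produces $\hat g_0(k,v_1,-k,v_2)e^{-ik(v_1-v_2)t}$. Tested against any Schwartz function of $(k,v_1,v_2)$, each such integral tends to $0$ as $t\to\infty$ by the Riemann–Lebesgue-type argument of Lemma \ref{lem:dirac}(ii), but \emph{without} the compensating $1/u$ denominator present there, so that no residual $\delta$ appears.

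\textbf{Main obstacle.} The most delicate step is the contour shift through the continuous family of poles $z=-ikv_1'$ that appears inside the inner integral
\begin{align*}
G(k,z,z')=\int \frac{\hat g_0(k,v_1',-k,v_2')}{(z+ikv_1')(z'-ikv_2')}\ud{v_1'}\ud{v_2'}.
\end{align*}
This will be handled by recognising $G$ as a double Cauchy transform of a Schwartz function: it extends holomorphically off $\Re(z)=0$, $\Re(z')=0$, with Plemelj jump contributions whose boundary values retain the Schwartz regularity of $\hat g_0$ in the relevant velocity. This is precisely the step that distinguishes the present situation from Lemma \ref{lem:stabzero}: there, the inner datum $S$ produced, after integration in $v_2'$, the factor $f(v')\delta(v_1'-v_2')$, whose concentration on the diagonal survived the $t\to\infty$ limit and produced the non-vanishing Bogolyubov expression. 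Here, because $\hat g_0$ is smooth, no such concentration can occur and every surviving residue term is erased by Riemann–Lebesgue oscillation, yielding $\Lambda(t,t,\cdot)\to 0$ in $\Sch'(\Reals^9)$ as claimed.
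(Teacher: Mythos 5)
Your proposal is correct and follows essentially the same route as the paper: Fourier--Laplace representation from Lemma \ref{lem:gtdecomp}, contour shift to $\Re(z)=-c|k|$ using the strong stability Assumption \ref{Ass:onepart2}, exponential decay of the shifted contour integrals via Lemma \ref{lem:contint}, and Riemann--Lebesgue decay of the oscillatory residue terms, with the key observation that the smoothness of $\hat g_0$ (unlike the $\delta$-concentration in $S$) prevents any surviving limit. Your ``double Cauchy transform with Plemelj jumps'' treatment of the continuous pole family is just a repackaging of the paper's step of pulling the $v_1',v_2'$-integrations outside the Laplace inversion and collecting the residues $(I+R)(I+R)$ for each fixed $v_1',v_2'$.
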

\begin{proof}
	We start with the Fourier Laplace representation in \eqref{eq:lambdadecomp}: 
	\begin{align*}
	\Lambda(z_1,z_2,k,v_1,v_2) & =
	\Lambda_1(z_1,z_2,k,v_1,v_2)
	+\Lambda_2(z_1,z_2,k,v_1,v_2)
	+\Lambda_2(z_2,z_1,-k,v_2,v_1)
	\end{align*}
	The first term in $\Lambda_1$ is simply given by the action of the
	transport operator 
	\begin{align*}
	T(t)g_{0} (x,v_1,v_2) & = g_{0}(x-t(v_1-v_2),v_1,v_2).
	\end{align*}
	Since $g_{0} \in\s({\Reals}^{9})$, this term converges to zero in
	distribution. In the second term we perform the Laplace inversion, to
	split into a residual part and a contour integral left of the imaginary
	line: 
	\begin{align*}
	& \int_{\gamma_{c}} \int_{\gamma_{c}} \frac{e^{z_1t} e^{z_2t}Q(k,v_1)
		Q(-k,v_2) \int\int\frac{\frac12 g_{0}(k,v_1',-k,v_2')}{%
			(z_1+ikv_1')(z_2-ikv_2')} \ud{v_1'%
		} \ud{v_2'}} {\eps(k,-iz_1) \eps(-k,-iz_2)
		(z_1+ikv_1) (z_2-ikv_2)} \\
	= & Q(k,v_1)Q(-k,v_2) \int\int\frac12
	g_{0}(k,v_1',-k,v_2') (I+R)(t,k,v_1,v_1')
	(I+R)(t,-k,v_2,v_2') \ud{v_1'} \ud{	v_2'} \\
	& I(t,k,v,v') := \int_{\gamma_{-c|k|}} \frac{e^{zt}} {\eps%
		(k,-iz_1)(z_1+ikv_1) (z+ikv')}\ud{z} \\
	& R(t,k,v,v') := \frac{e^{-ikvt}}{\eps(k,-kv)}+\frac {%
		e^{-ikv't}}{\eps(k,-kv')}.
	\end{align*}
	Arguing as in the proof of Lemma \eqref{Psi1conv}, all terms containing an $I$
	converge to zero in distribution after expanding the product $(I+R)(I+R)$.
	The residual part $R$ converges to zero since $e^{i(v-w)t}\rightarrow0$ in $%
	\s'(\Reals^3 \times\Reals^3)$. The convergence $\Lambda_2 \rightarrow 0$ follows
	by an analogous computation.
\end{proof}

\subsection{Stability of the velocity fluxes}  

In this Subsection we prove the convergence result \eqref{eq:convfluxes} in Theorem \ref{thm:stability}. 
Consider the marginal $j(t,x,v_1):= \int g(t,x,v_1,v_2) \ud{v_2}$ of $g(t,\cdot)$. From \eqref{hvlas} 
we obtain the representation formula 
\begin{equation} \label{eq:jrepresentation}
	\begin{aligned}
		j(t,x,v_1) & = \psi(t,t,x_1-x_2,v_1) + \lambda(t,t,x_1-x_2,v_1)\\
		\psi(t,t,k,v_1) & = \psi_1(t,t,k,v_1) +\psi_2(t,t,k,v_1) -f(v_1)\\
		\psi_1(z_1,z_2,k,v_1) & := \frac{iQ(k,v_1) \int\int\frac {%
		\delta(v_1'-v_2')f(v_1')}{(z_1+ikv_1^{\prime})(z_2-ikv_2')}\ud{v_1'}\ud{%
		v_2'}}{(z_1+ikv_1)\eps(k,-iz_1)\eps(-k,-iz_2)} \\
		\psi_2(z_1,z_2,k,v_1) & := \frac{\int\frac{\delta(v_1-v_2^{%
			\prime})f(v_1)}{z_2-kv_2'}\ud{v_2'}}{%
		(z_1+ikv_1)\eps(-k,-iz_2)} \\
		\lambda(z_1,z_2,k,v_1) & = \lambda_1(z_1,z_2,k,v_1)
		+\lambda_2(z_1,z_2,k,v_1) \\
		\lambda_1(z_1,z_2,k,v_1) & := \int\frac{g_{0}(k,v_1,v_2)}{%
		(z_1+ikv_1)(z_2-ikv_2)} \ud{v_2}+ \frac{\frac12 \int \int%
		\frac{i Q(k,v_1) g_{0}(k,v_1',v_2')}{%
		(z_1+ikv_1')(z_2-ikv_2')} \ud{v_1'%
		} \ud{v_2'}} {\eps(k,-iz_1) \eps(-k,-iz_2)
		(z_1+ikv_1) } \\
		\lambda_2(z_1,z_2,k,v_1) & := \int\frac{ \frac12 \int\frac{ \hat {g}%
		_{0}(k,v_1,v')}{z_2+ikv'}\ud{v'}}{\eps%
		(-k,-iz_2)(z_1+ikv_1)} \ud{v_2}.
	\end{aligned}
\end{equation}
Further, we define the flux operator $J$ given by
\begin{align} \label{eq:fluxdef}
J[\psi](v_1) & := \nabla\cdot\left( \int- i k\hPhi(k) \psi(k,v_1) \ud{k}\right) .
\end{align}
\begin{lemma} \label{lem:fluxzero}
	The flux $J[\Psi]$ (cf. \eqref{eq:fluxdef}) converges to 
	\begin{align*}
	J[\psi](t,v_1) & \longrightarrow\nabla_{v_1} \left( \int\psi_{\infty
	}(k,v_1) \ud{k}\right) \quad\text{for all $v_1 \in \Reals^3$ as $t\rightarrow \infty$} \\
\psi_{\infty}(k,v_1) & := \int (\nabla_{v_1}-\nabla_{v'} f)(f  f)(v_1,v') \frac {%
	\delta(k(v_1-v'))(k \otimes k) |\hPhi(k)|^2}{|\eps(k,-kv_1)|^2} \ud{v'}.
\end{align*}
which is the velocity flux on the right-hand side of the Balescu-Lenard equation \eqref{eq:kinetic}.
\end{lemma}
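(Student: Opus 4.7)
The plan is to follow the same Laplace-inversion plus residue strategy used in Lemma \ref{Psi1conv} and the lemma immediately after it, since $\psi(t,t,k,v_1)$ is structurally the velocity-marginal of $\Psi(t,t,k,v_1,v_2)$ from that proof. First I would substitute the Fourier–Laplace representations \eqref{eq:jrepresentation} of $\psi_1$ and $\psi_2$ and carry out the Laplace inversion in both $z_1$ and $z_2$. Using Assumption \ref{Ass:onepart2} (which bounds $\eps(\pm k,-iz)^{-1}$ uniformly on the strip $|\Re(z)|\leq c$) and Cauchy's theorem, I would shift each contour from $\Re(z_i)=1$ to $\Re(z_i)=-c|k|$, picking up residues at the kinetic poles $z_1=-ikv_1,\,z_1=-ikv_1'$, and $z_2=ikv_2'$. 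This yields a decomposition $\psi_j(t,t,k,v_1)=R_j(t,k,v_1)+I_j(t,k,v_1)$, with $I_j$ a shifted double contour integral and $R_j$ an explicit oscillatory combination of the above exponentials.

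Second, I would dispose of the shifted-contour part. By Lemma \ref{lem:contint} (applied once per Laplace variable, and with $f \in \Sch$ absorbing the velocity integration in $v_1',v_2'$), one obtains a pointwise bound $|I_j(t,k,v_1)|\leq C e^{-c|k|t}/|k|$, uniform for $v_1$ in compacta. Multiplying by $|k\hat{\phi}(k)|$ (Schwartz, vanishing at $k=0$) gives an integrable majorant in $k$, so the contribution of $I_j$ to the flux $J[\psi]$ (cf.\ \eqref{eq:fluxdef}) tends to zero by dominated convergence as $t\to\infty$.

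Third, I would analyse the residual parts $R_1+R_2-f(v_1)$. After performing the $v_2'$ integration against $\delta(v_1'-v_2')$ and setting $v'=v_1'=v_2'$, the surviving oscillatory quantities have the form $e^{ik(v'-v_1)t}/(k(v_1-v')\mp i0)$ and analogous expressions; the $-f(v_1)$ term cancels the time-independent pieces of $R_2$ by Plemelj's formula, exactly as in the proof of the second lemma inside Lemma \ref{lem:stabzero}. Integrating the result against $-ik\hat{\phi}(k)$ and invoking Lemma \ref{lem:dirac}(ii) with the test function built from $Q,\hat\phi$ and $|\eps|^{-2}$, the oscillatory principal-value integrals pass in the limit to delta functions $\delta(k(v_1-v'))$. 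On the support of these delta functions, $|\eps(k,-kv_1)|^2=|\eps(k,-kv')|^2$, which allows the various residual fragments to be combined under a common denominator. Using $Q(k,v)=k\nabla f(v)\hat\phi(k)$, the resulting integrand collapses to the symmetric form $(\nabla_{v_1}-\nabla_{v'})(f(v_1)f(v'))\cdot\frac{(k\otimes k)|\hat\phi(k)|^2\,\delta(k(v_1-v'))}{|\eps(k,-kv_1)|^2}$, which is precisely $\nabla_{v_1}\psi_\infty$ after the outer $\nabla_{v_1}\cdot$ is distributed through the integrals.

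The main obstacle I anticipate is the bookkeeping in Step 3: several nearly identical residual terms appear (one from $\psi_1$ with pole at $v_1$, one with pole at $v'$, one from $\psi_2$, and one from the subtracted $f(v_1)$), and the cancellations are delicate because some cancel only after applying Plemelj's formula $\frac{1}{x\mp i0}=\PV\frac{1}{x}\pm i\pi\delta(x)$ and using the identity $(1-\eps)(1-\bar\eps)=|\eps|^2-2\Re\eps+1$ in order to reassemble $|\eps|^{-2}$ in the denominator. The observation that makes the collapse work is that the Plemelj imaginary parts, once multiplied by $Q(k,v_1)Q(-k,v')$ and integrated against $k\hat\phi(k)$, combine (via the symmetry $k\to -k$ and the reality of $\hat\phi$) into a single symmetric expression in $(v_1,v')$, which is exactly the Balescu--Lenard kernel.
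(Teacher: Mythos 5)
Your proposal is correct and follows essentially the same route as the paper: Laplace inversion with contour shifting to $\Re(z)=-c|k|$ under Assumption \ref{Ass:onepart2}, disposal of the shifted-contour integrals via Lemma \ref{lem:contint}, and identification of the residual limits via Lemma \ref{lem:dirac} and Plemelj-type manipulations, recombining into the Balescu--Lenard kernel. The only cosmetic difference is that in the paper the residual of $\psi_2$ is time-independent and its delta function comes from $\Im\,\eps$ via $1/\eps(-k,kv_1)=\eps(k,-kv_1)/|\eps(k,-kv_1)|^2$ (with $J[f(v_1)]=0$ handled separately by oddness in $k$), rather than from an oscillatory limit.
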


\begin{proof}
	We show the convergence term by term, considering $J[\Psi_1]$, $J[\Psi_2]$ separately.
	Observe that $J[f(v_1)]=0$, since the function is independent of the space variable.
	 Let us first take a look at $\psi_2$. The
	integration in $v_2'$ can be carried out, and in the usual fashion
	we split the Laplace inversion in a contour integral left of the imaginary
	line and a residual: 
	\begin{align*}
	\psi_2(t,t,k,v_1) & = \frac{f(v_1)}{\eps(-k,kv_1)} +
	I(t,k,v_1), \quad 
	I(t,k,v_1)  := e^{-itkv_1} \int_{i{\Reals}-c|k|} \frac{e^{z_2t}}{%
		\eps(-k,-iz_2)(z_2-ikv_1)} \ud{z_2} .
	\end{align*}
	The contour integral vanishes in the limit $t\rightarrow \infty$, i.e. $J[I](t,v_1) \rightarrow0$. Therefore
	the contribution of $J[\psi_2]$ is 
	\begin{equation} \label{eq:last1}
		\begin{aligned}
			J[\psi_2] \rightarrow & - \nabla_{v_1} \left( \int ik \hPhi(k) \frac{%
			f(v_1)}{\eps(-k,kv_1)} \ud{k}\right) 	=  - \nabla_{v_1} \left( \int ik \hPhi(k) \frac{f(v_1)\eps(k,-kv_1)}{|%
			\eps(k,-kv_1)|^2} \ud{k}\right) \\
			= & - \nabla_{v_1} \left( \int(k \otimes k) |\hPhi(k)|^2 \frac {%
			\delta(k(v_1-v_1'))f(v_1) \nabla f(v_1')}{|\eps%
			(k,-kv_1)|^2} \ud{k}\right).
		\end{aligned}
	\end{equation}
	It remains to find the limit of $J[]\psi_1(t)]$. Again we can perform the
	integration in $v_2'$, obtaining 
	\begin{align*}
	\psi_1(z_1,z_2,k,v_1) & = \frac{iQ(k,v_1) \int\int\frac {%
			\delta(v_1'-v_2')f(v_1')}{(z_1+ikv_1^{%
				\prime})(z_2-ikv_2')}\ud{v_1'}\ud{%
			v_2'}}{(z_1+ikv_1)\eps(k,-iz_1)\eps(-k,-iz_2)} 	 = \frac{iQ(k,v_1) \int\frac{f(v_1')}{(z_1+ikv_1^{\prime
			})(z_2-ikv_1')}\ud{v_1'}}{(z_1+ikv_1)\eps%
		(k,-iz_1)\eps(-k,-iz_2)}.
	\end{align*}
	As in the previous lemmas, the Laplace inversion integral can be proved to be exponentially decaying in
	time up to a residual, which is given by
	\begin{align*}
	\lim_{t\rightarrow\infty} J[\psi_1] & = \lim_{t\rightarrow\infty}
	\nabla_{v_1} \cdot\left( \int k \hPhi(k) Q(k,v_1) \int f(v_1')
	R (t,k,v_1,v_1')\ud{v_1'} \ud{k}
	\right) \\
	R(t,k,v,v') & =\frac{e^{itkv'}}{\eps(-k,kv)} \left( \frac{%
		e^{-itkv}}{\eps(k,-kv)ik(v'-v)}-\frac{e^{-itkv'}}{\eps%
		(k,-kv')ik(v+v')}\right) .
	\end{align*}
	Applying Lemma \ref{lem:dirac}, we identify the limit as: 
	\begin{align} \label{eq:last2}
	\lim_{t\rightarrow\infty} J[\psi_1](t,v_1) & = \nabla_{v_1}
	\cdot\left( \int k \otimes k |\hPhi(k)|^2 \nabla f(v) \int\frac {%
		\delta(k(v_1-v_1'))f(v_1')}{|\eps(k,-kv_1')|^2}
	\ud{v_1'} \ud{k}\right) .
	\end{align}
	Summing \eqref{eq:last1} and \eqref{eq:last2}, we obtain as a limit of $J[\psi]$ 
	\begin{align*}
	 \lim_{t\rightarrow\infty} J[\psi] & = \nabla_{v_1} \cdot\left( \int (\nabla_{v_1}-\nabla_{v'} f)(f  f)(v_1,v')
	\delta(k(v_1-v')) \frac{k \otimes k |\hPhi(k)|^2)}{|\eps(k,-kv_1)|^2}
	\ud{k}\ud{v'} \right)
	\end{align*}
	as claimed.
\end{proof}
By a similar computation we obtain the following lemma.
\begin{lemma} \label{lem:fluxinitial}
	Let $J$ be the operator introduced in \eqref{eq:fluxdef}. For all $v_1\in \Reals^3$ there holds: 
	\begin{align*}
	J[\lambda](t,v_1) \longrightarrow0 \quad\text{as $t\rightarrow \infty$.}
	\end{align*}
\end{lemma}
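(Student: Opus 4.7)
The plan is to proceed in the same spirit as the proof of Lemma \ref{lem:fluxzero}, but now each term contains $g_0 \in \Sch$ rather than a factor involving $f$. This extra Schwartz regularity should make every contribution vanish rather than producing a non-trivial distributional limit. More precisely, starting from the Fourier--Laplace representation \eqref{eq:jrepresentation}, I would split $\lambda = \lambda_1 + \lambda_2$, and further split $\lambda_1$ into the free-transport term
\begin{align*}
\lambda_1^{\mathrm{free}}(z_1,z_2,k,v_1) := \int \frac{\hat g_0(k,v_1,v_2)}{(z_1+ikv_1)(z_2-ikv_2)}\,\ud{v_2}
\end{align*}
and the $\eps$-involved term $\lambda_1^{\eps} := \lambda_1 - \lambda_1^{\mathrm{free}}$. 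Inverting the Laplace transforms in $\lambda_1^{\mathrm{free}}$ is explicit and produces $e^{-ikv_1 t}\int \hat g_0(k,v_1,v_2)\,e^{ikv_2 t}\,\ud v_2$. Applying $J$ to this yields
\begin{align*}
J[\lambda_1^{\mathrm{free}}](t,v_1) = \nabla_{v_1}\cdot\int (-ik)\hPhi(k)\,e^{-ikv_1 t}\!\int \hat g_0(k,v_1,v_2)\,e^{ikv_2 t}\,\ud v_2\,\ud k ,
\end{align*}
which is, up to a change of variables $w=v_2-v_1$, the Fourier transform in $(k,w)$ of a Schwartz function evaluated at points going to infinity with $t$; by the Schwartz version of Riemann--Lebesgue (exactly the argument behind Lemma \ref{lem:dirac}(ii)), this tends to $0$ as $t\to\infty$.

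For the remaining contributions $\lambda_1^{\eps}$ and $\lambda_2$, I would use Assumption \ref{Ass:onepart2} to shift the $z_1$- and $z_2$-contours from $L_1$ to $i\Reals - c|k|$, collecting the residues at the simple poles $z_1 = -ikv_1$, $z_1 = -ikv_1'$, $z_2 = ikv_2'$, $z_2 = ikv'$. The shifted contour integrals are handled by Lemma \ref{lem:contint}: since the $g_0$-factors are Schwartz in velocity and $\eps(k,-iz)$ is bounded below on the shifted strip, the shifted integrals are bounded by $Ce^{-c|k|t}/|k|$ uniformly in $v_1$, producing an integrable (in $k$) exponentially small contribution after applying $J$.

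The residue terms are oscillatory integrals of the form
\begin{align*}
\int k\,\hPhi(k)\,\mathcal K(k,v_1,v_1',v_2')\,e^{\pm i k \cdot v_\star t}\,\ud v_1'\,\ud v_2'\,\ud k ,
\end{align*}
where $v_\star$ is one of the velocity variables and $\mathcal K$ is a Schwartz function in its velocity variables (coming from $g_0$, possibly divided by a combination like $k(v_1-v_1')$ handled as a principal value as in Lemma \ref{lem:contint}) and smooth in $k$ with integrable $\eps$-factors. Each such expression is the Fourier transform in $k$ (or in a paired $(k,v)$ variable) of a Schwartz amplitude, evaluated at a point tending to infinity with $t$, and therefore converges to zero pointwise in $v_1$; the argument is the same $t\to\infty$ oscillatory argument as in Lemma \ref{lem:dirac}, with $\hat g_0$ playing the role that the Maxwellian $f$ played in Lemma \ref{lem:fluxzero}. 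Since no $\delta$-function survives (because there is no pairing of oscillations in conjugate variables as there was in Lemma \ref{lem:fluxzero}), every residue contribution vanishes.

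The main obstacle I anticipate is the bookkeeping of residues at the double structure $z_1=-ikv_1$ in $\lambda_1^{\eps}$: after the contour shift one gets both a pole at $z_1=-ikv_1$ coming from the explicit denominator and a pole at $z_1=-ikv_1'$ from the inner $v_1'$-integrand, producing a difference quotient of the form $(e^{-ikv_1 t}-e^{-ikv_1' t})/(k(v_1-v_1'))$, whose $v_1'$-integration against $\hat g_0$ must be read as a principal value. This is the exact same type of object that appears in the proof of Lemma \ref{lem:fluxzero} and is controlled by Lemma \ref{lem:contint}, so the Schwartz regularity of $g_0$ makes the resulting one-dimensional principal-value integrals uniformly bounded and Schwartz in $k$, after which Lemma \ref{lem:dirac}(ii) gives the desired $t\to\infty$ decay. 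Summing the contributions from $\lambda_1^{\mathrm{free}}$, $\lambda_1^{\eps}$ and $\lambda_2$ yields $J[\lambda](t,v_1)\to 0$.
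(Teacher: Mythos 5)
Your proposal is correct and follows exactly the route the paper intends: the paper gives no explicit proof of this lemma, stating only that it follows ``by a similar computation'' to Lemma \ref{lem:fluxzero}, and your argument is precisely that computation carried out --- contour shift via Assumption \ref{Ass:onepart2}, exponential decay of the shifted integrals via Lemma \ref{lem:contint}, and Riemann--Lebesgue decay of the residue terms via Lemma \ref{lem:dirac}. Your observation that no $\delta$-function survives because $\hat g_0(k,v_1',v_2')$ has two independent velocity arguments (unlike the $\delta(v_1'-v_2')f(v_1')$ kernel in $\psi_1$, which pairs the oscillation with the resonant denominator) is exactly the reason the limit is zero here rather than a Balescu--Lenard flux.
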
 
Combining Lemma \ref{lem:fluxinitial} with Lemma \ref{lem:fluxzero} shows the convergence of the velocity fluxes claimed in \eqref{eq:convfluxes}. This concludes the proof of Theorem \ref{thm:stability}.

\textbf{Acknowledgment.}
The authors acknowledge support through the CRC 1060
\textit{The mathematics of emergent effects}
at the University of Bonn that is funded through the German Science
Foundation (DFG).

\bibliographystyle{amsplain}
\bibliography{Bogolyubov}

\providecommand{\bysame}{\leavevmode\hbox to3em{\hrulefill}\thinspace}
\providecommand{\MR}{\relax\ifhmode\unskip\space\fi MR }
\providecommand{\MRhref}[2]{%
  \href{http://www.ams.org/mathscinet-getitem?mr=#1}{#2}
}
\providecommand{\href}[2]{#2}
\begin{thebibliography}{10}

\bibitem{balescu_statistical_1963}
R.~Balescu, \emph{Statistical mechanics of charged particles}, Monographs in
  {Statistical} {Physics} and {Thermodynamics}, {Vol}. 4, Interscience
  Publishers John Wiley \& Sons, Ltd., London-New York-Sydney, 1963.
  \MR{0160579}

\bibitem{balescu_equilibrium_1975}
\bysame, \emph{Equilibrium and nonequilibrium statistical mechanics},
  Interscience Publishers John Wiley \& Sons, Ltd., London-New York-Sydney,
  1975. \MR{0408641}

\bibitem{bobylev_particle_2013}
A.~Bobylev, M.~Pulvirenti, and C.~Saffirio, \emph{From particle systems to the
  {Landau} equation: a consistency result}, Comm. Math. Phys. \textbf{319}
  (2013), no.~3, 683--702. \MR{3040372}

\bibitem{bogoliubov_problems_1962}
N.~Bogoliubov, \emph{Problems of a dynamical theory in statistical physics},
  Studies in {Statistical} {Mechanics}, {Vol}. {I}, North-Holland, Amsterdam;
  Interscience, New York, 1962, pp.~1--118. \MR{0136381}

\bibitem{braun_vlasov_1977}
W.~Braun and K.~Hepp, \emph{The {Vlasov} dynamics and its fluctuations in the
  {$1/{N}$} limit of interacting classical particles}, Comm. Math. Phys.
  \textbf{56} (1977), no.~2, 101--113. \MR{0475547}

\bibitem{degond_spectral_1986}
P.~Degond, \emph{Spectral theory of the linearized {Vlasov}-{Poisson}
  equation}, Trans. Amer. Math. Soc. \textbf{294} (1986), no.~2, 435--453.
  \MR{825714}

\bibitem{desvillettes_polynomial_2015}
L.~Desvillettes, E.~Miot, and C.~Saffirio, \emph{Polynomial propagation of
  moments and global existence for a {Vlasov}-{Poisson} system with a point
  charge}, Ann. Inst. H. Poincaré Anal. Non Linéaire \textbf{32} (2015),
  no.~2, 373--400. \MR{3325242}

\bibitem{desvillettes_linear_1999}
L.~Desvillettes and M.~Pulvirenti, \emph{The linear boltzmann equation for
  long-range forces: a derivation from particle systems}, Math. Models Methods
  Appl. Sci. \textbf{09} (1999), no.~08, 1123--1145.

\bibitem{glassey_time_1994}
R.~Glassey and J.~Schaeffer, \emph{Time decay for solutions to the linearized
  {Vlasov} equation}, Transport Theory Statist. Phys. \textbf{23} (1994),
  no.~4, 411--453. \MR{1264846}

\bibitem{glassey_time_1995}
\bysame, \emph{On time decay rates in {Landau} damping}, Comm. Partial
  Differential Equations \textbf{20} (1995), no.~3-4, 647--676. \MR{1318084}

\bibitem{guernsey_kinetic_1962}
R.~Guernsey, \emph{Kinetic equation for a completely ionized gas}, Phys. Fluids
  \textbf{5} (1962), 322--328. \MR{0168315}

\bibitem{guo_landau_2002}
Y.~Guo, \emph{The {Landau} equation in a periodic box}, Comm. Math. Phys.
  \textbf{231} (2002), no.~3, 391--434. \MR{1946444}

\bibitem{krommes_two_1976}
J.~Krommes, \emph{Two new proofs of the test particle superposition principle
  of plasma kinetic theory}, Phys. Fluids \textbf{19} (1976), no.~5, 649--655.
  \MR{0416221}

\bibitem{lancellotti_fluctuations_2009}
C.~Lancellotti, \emph{On the fluctuations about the {Vlasov} limit for
  {${N}$}-particle systems with mean-field interactions}, J. Stat. Phys.
  \textbf{136} (2009), no.~4, 643--665. \MR{2540157}

\bibitem{lancellotti_glassey-schaeffer_2015}
\bysame, \emph{On the {Glassey}-{Schaeffer} estimates for linear {Landau}
  damping}, J. Comput. Theor. Transp. \textbf{44} (2015), no.~4-5, 198--214.
  \MR{3430537}

\bibitem{lancellotti_time-asymptotic_2016}
\bysame, \emph{Time-asymptotic evolution of spatially uniform {Gaussian}
  {Vlasov} fluctuation fields}, J. Stat. Phys. \textbf{163} (2016), no.~4,
  868--886. \MR{3488576}

\bibitem{lenard_bogoliubovs_1960}
A.~Lenard, \emph{On {Bogoliubov}'s kinetic equation for a spatially homogeneous
  plasma}, Ann. Physics \textbf{10} (1960), 390--400. \MR{0167274}

\bibitem{lifshitz_course_1981}
E.~Lifshitz and L.~Pitaevskii, \emph{Course of {Theoretical} {Physics}},
  Pergamon Press, Oxford, 1981.

\bibitem{marcozzi_derivation_2016}
M.~Marcozzi and A.~Nota, \emph{Derivation of the {Linear} {Landau} {Equation}
  and {Linear} {Boltzmann} {Equation} from the {Lorentz} {Model} with
  {Magnetic} {Field}}, J Stat Phys \textbf{162} (2016), no.~6, 1539--1565.

\bibitem{mouhot_landau_2011}
C.~Mouhot and C.~Villani, \emph{On {Landau} damping}, Acta Math. \textbf{207}
  (2011), no.~1, 29--201. \MR{2863910}

\bibitem{muskhelishvili_singular_1992}
N.~Muskhelishvili, \emph{Singular integral equations}, Dover Publications,
  Inc., New York, 1992. \MR{1215485}

\bibitem{nota_theory_2018}
A.~Nota, S.~Simonella, and J.~Velázquez, \emph{On the theory of {Lorentz}
  gases with long range interactions}, Rev. Math. Phys. \textbf{30} (2018),
  no.~03, 1850007.

\bibitem{oberman_theory_1983}
C.~Oberman and E.~Williams, \emph{Theory of fluctuations in plasma}, 1983,
  p.~111.

\bibitem{penrose_electrostatic_1960}
O.~Penrose, \emph{Electrostatic instabilities of a uniform non-{Maxwellian}
  plasma}, Phys. Fluids \textbf{3} (1960), no.~2, 258--265.

\bibitem{piasecki_stochastic_1987}
J.~Piasecki and G.~Szamel, \emph{Stochastic dynamics of a test particle in
  fluids with weak long-range forces}, Physica A Statistical Mechanics and its
  Applications \textbf{143} (1987), 114--122.

\bibitem{rostoker_superposition_1964}
N.~Rostoker, \emph{Superposition of {Dressed} {Test} {Particles}}, The Physics
  of Fluids \textbf{7} (1964), no.~4, 479--490.

\bibitem{spohn_kinetic_1980}
H.~Spohn, \emph{Kinetic equations from {Hamiltonian} dynamics: {Markovian}
  limits}, Rev. Modern Phys. \textbf{52} (1980), no.~3, 569--615. \MR{578142}

\bibitem{spohn_large_2012}
\bysame, \emph{Large {Scale} {Dynamics} of {Interacting} {Particles}}, Springer
  Science \& Business Media, 2012 (en).

\bibitem{strain_linearized_2007}
R.~Strain, \emph{On the linearized {Balescu}-{Lenard} equation}, Comm. Partial
  Differential Equations \textbf{32} (2007), no.~10-12, 1551--1586.
  \MR{2372479}

\bibitem{velazquez_non-markovian_2018}
J.~Velázquez and R.~Winter, \emph{From a non-{Markovian} system to the
  {Landau} equation}, Comm. Math. Phys., to appear (2018).

\bibitem{villani_review_2002}
C.~Villani, \emph{A review of mathematical topics in collisional kinetic
  theory}, Handbook of mathematical fluid dynamics, vol.~1, North-Holland,
  Amsterdam, 2002.

\end{thebibliography}

\end{document}